\documentclass[a4paper,11pt]{article}
\usepackage{jheppub}     

\usepackage{amsfonts,amsthm}
\usepackage[utf8]{inputenc}
\usepackage{physics}
\usepackage{graphicx}
\usepackage{upgreek}  
\usepackage{mdframed}
\usepackage[usenames,dvipsnames]{xcolor}
\hypersetup{
      breaklinks=true,
      linkcolor=BlueGreen,
      urlcolor=NavyBlue,
      citecolor=Purple,
      filecolor=black,
      menucolor=black,
      pdfstartview=FitV,
      bookmarksopen=true
      }

\newcommand\kwar{\mathrm{war}}
\newcommand\kcon{\mathrm{con}}
\newcommand\krad{\mathrm{rad}}

\newcommand{\U}{\mathcal{U}}
\newcommand{\M}{\mathcal{M}}
\newcommand{\N}{\mathcal{N}}
\newcommand{\ucon}{\U^\kcon}
\newcommand{\uwar}{\U^\kwar}
\newcommand{\urad}{\U^\krad}

\newcommand{\teng}{\boldsymbol{g}}
\newcommand{\tenq}{\boldsymbol{q}}
\newcommand{\gw}{\teng^\kwar}
\newcommand{\gc}{\teng^\kcon}
\newcommand{\gr}{\teng^\krad}
\newcommand{\qw}{\tenq^\kwar}
\newcommand{\qr}{\tenq^\krad}
\newcommand{\HH}{\mathcal{H}}

\newcommand\vecu{\boldsymbol{u}}
\newcommand\vecW{\boldsymbol{W}}
\newcommand\vecX{\boldsymbol{X}}
\newcommand\vecZ{\boldsymbol{Z}}
\newcommand\vecxi{\boldsymbol{\xi}}
\newcommand\xiB{\xi_{\parallel}}
\newcommand\xiF{\xi_{\perp}}
\newcommand\vecxiB{\vecxi_{\parallel}}
\newcommand\vecxiF{\vecxi_{\perp}}
\newcommand\vecze{\boldsymbol{\zeta}}
\newcommand\dfom{\boldsymbol{\omega}}
\newcommand\dfeta{\boldsymbol{\eta}}

\newcommand\iEi{\mathtt{I}}
\newcommand\iEj{\mathtt{J}}
\newcommand\iEk{\mathtt{K}}
\newcommand\iLa{\mathfrak{a}}
\newcommand\iLb{\mathfrak{b}}

\newcommand\R{\mathcal{R}}

\newcommand\tengB{\boldsymbol{h}}
\newcommand\gB{h}
\newcommand\RB{\R^{(\tengB)}}

\newcommand\tengS{\boldsymbol{\upgamma}}
\newcommand\gS{\gamma}

\newcommand\dimF{m}

\newcommand\gradB{\mathrm{grad}_{\tengB}}
\newcommand\gradS{\mathrm{grad}_{\gS}}
\newcommand\Lie{\pounds}
\newcommand\ds{\tilde{\dd}}

\newtheorem{thm}{Theorem}[section]
\newtheorem{proposition}[thm]{Proposition}
\newtheorem{lemma}[thm]{Lemma}
\newtheorem{corollary}[thm]{Corollary}
\theoremstyle{definition}
\newtheorem{remark}[thm]{Remark}

\begin{document}

\title{Cosmological Spacetimes with Sign-Changing Spatial Curvature and Topological Transitions}

\author[1,2]{Gerardo Garc\'ia-Moreno,}
\affiliation[1]{Dipartimento di Fisica, Sapienza Università di Roma, Piazzale Aldo Moro 5, 00185, Roma, Italy}
\affiliation[2]{INFN, Sezione di Roma, Piazzale Aldo Moro 2, 00185, Roma, Italy}

\author[3]{Bert Janssen,}
\affiliation[3]{Departamento de F\'isica Te\'orica y del Cosmos and CAFPE, Facultad de Ciencias, Universidad de Granada, 18071 Granada, Spain}

\author[4]{Alejandro Jim\'enez Cano,}
\affiliation[4]{Escuela T\'ecnica Superior de Ingenier\'ia de Montes, Forestal y del Medio Natural, Universidad Polit\'ecnica de Madrid, 28040 Madrid, Spain}

\author[5]{Marc Mars,}
\affiliation[5]{Instituto de F\'isica Fundamental y Matem\'aticas,
Universidad de Salamanca
Plaza de la Merced s/n 37008, Salamanca, Spain}

\author[6]{Miguel S\'anchez,}
\affiliation[6]{Departamento de Geometr\'ia y Topolog\'ia \& IMAG, Universidad de Granada, 18071 Granada, Spain}

\author[7]{Ra\"ul Vera}
\affiliation[7]{Department of Physics and EHU Quantum Center, Euskal Herriko Unibertsitatea UPV/EHU, 48080 Bilbao, Basque Country, Spain}

\emailAdd{gerardo.garciamoreno@uniroma1.it}
\emailAdd{bjanssen@ugr.es}
\emailAdd{alejandro.jimenez.cano@upm.es}
\emailAdd{marc@usal.es}
\emailAdd{sanchezm@ugr.es}
\emailAdd{raul.vera@ehu.eus}

\newpage
\abstract{

Observational evidence, together with practical computations and modeling, supports a Euclidean spatial sector in the current cosmological model based on the FLRW metric. This, however, would imply that the total amount of matter and energy immediately after the Big Bang must have been infinite, an implication that could only be avoided through a transition from a closed to an open universe, a process forbidden in standard FLRW models. In this article, we investigate the spacetimes resulting from promoting the spatial curvature $k$ in FLRW spacetimes to a time-dependent function, $k \to k(t)$, allowing it to change sign and thereby allowing changes in the topology of the constant-$t$ slices. Although previously dismissed due to a classical theorem by Geroch, such transitions are shown to be consistent with global hyperbolicity when the comoving time is distinct from a Cauchy time, as recent work by one of the authors demonstrates. We construct three distinct geometries exhibiting this behavior using different representations of constant-curvature spaces. We analyze their global properties and identify mild conditions under which they remain globally hyperbolic. Furthermore, we characterize their Killing vectors, proving a general result for spherically symmetric spacetimes and compare them with known geometries in the literature.
}

\maketitle

\section{Introduction}
\label{Sec:Introduction}

Friedman-Lema\^itre-Robertson-Walker (FLRW) spacetimes offer three fundamental background  geometries, spherical, flat, and hyperbolic, arising from general symmetry considerations prior to specifying the energy-momentum tensor. These assumptions can be formulated with mathematical precision leading to a rigidity theorem characterizing these geometries (see, for instance, \cite[Ch. 12, Prop. 6]{ONeill1983}). However, many commonly cited physical notions of homogeneity and isotropy do not necessarily imply one of these standard cases. In fact, a recent example constructed by one of the authors \cite{Sanchez2023} and further examined by \'Avalos \cite{Avalos2022}, demonstrates the possibility of foliating those spacetimes with spatial hypersurfaces of constant curvature $k(t)$ (and hence, maximally symmetric), where the sign of $k(t)$ changes over time. 

The goal of this article is to show that such a scenario is not merely a mathematical curiosity, but a potentially valuable framework for modern cosmology. To support this, we construct two additional classes of examples, derived from conventional models of spacetimes with constant spatial curvature, and study their properties, highlighting certain features that may prove advantageous. We analyze both global, namely, under which conditions they are globally hyperbolic and nonsingular, as well as local properties, namely their isometries and curvature properties. Furthermore, we show that the three of them are not isometric among them, nor to some geometries previously introduced in the literature, specifically the Stephani and the Lema\^itre-Tolman-Bondi (LTB) geometries, since their energy-momentum tensor is that of a perfect fluid and in the $k(t)$ this implies FLRW, see Prop.~\ref{perfect-fluid is RW}. 

The first of the geometries, the {\em $k(t)$-warped} metric, is such that the point $r = 0$ is {\em extrinsically} privileged (indeed, its expansion is vanishing in the simplest case, as the second fundamental form vanishes therein) and, when $k(t)>0$, the antipodal point $r=\pi/\sqrt{k(t)}$ is also privileged (the expansion diverges towards this point). The metric is automatically smooth at $r=0$ but not at the $t$-antipodal points. In \cite{Sanchez2023}, the metric was deformed along a small region around $t$-antipodal points (maintaining the constant curvature of the $t$-slices) so that it becomes smooth everywhere. Here, we have not used the smoothening and we have analyzed the origin of non-smoothness: the scalar curvature diverges and the metric cannot even  be $C^1$-extended  to the $t$-antipodal points (Prop.~\ref{p_3metrics1}, Rem.~\ref{r_3metrics1}). The properties of this spacetime are interesting because, on the one hand, they do not depend on smoothening and, on the other, the smoothed spacetimes coincide with the $k(t)$-warped up to a region which can be chosen arbitrarily small.

Our second metric is called $k(t)$-\emph{conformal}, because its spatial metric is conformally isometric to the Euclidean one with conformal factor $4/(1+k(t) r^2)^2$. In addition, $t$-slices are totally umbilical. When $k(t)>0$ the metric can be smoothly extended to the point at infinity (which corresponds to stereographic compactification). So, a completely classic smooth  cosmological spacetime is obtained  (Prop.~\ref{p_3metrics2}).

For our third metric, which we called the $k(t)$-\emph{radial}, the spatial metric is constructed by changing the natural radial component $\dd r^2$ of  the Euclidean metric by $\dd r^2/(1-k(t))r^2$. When $k(t)>0$, the constant $t$-slices are not a whole round sphere, but an open subset of it, indeed, a half sphere  (Prop.~\ref{p_3metrics3}, Rem. \ref{r_3metrics3}). So, they admit a change in the sign of the spatial curvature, even if no topological change occurs. 

Our results reveal that these geometries can be considered as viable geometrical basis for cosmological models. Some of these metrics have previously been considered in models where the curvature of spatial slices evolves with time~\cite{Bergmann1981,Mersini2007,Stichel2018,Wang2025}, primarily to alleviate certain cosmological tensions. Furthermore, from a theoretical perspective, they have also been studied in connection with the backreaction problem~\cite{Larena2008,Clifton2024,Raffai2025}, which concerns how inhomogeneities and anisotropies, although vanishing when averaged over large scales, do not generally yield the same equations as those obtained by assuming an FLRW spacetime \textit{ab initio} and deriving the evolution of the scale factor from Einstein equations. Here, however, we perform the first exhaustive analysis of their local and global properties. In particular, we identify the conditions under which they are globally hyperbolic, a fundamental requisite for describing a cosmological model. Additionally, for certain choices of the defining functions, we find an explicit example that admits an extra Killing vector.

 \paragraph*{\textbf{Structure of the article.}}
The paper is structured as follows. In Sec.~\ref{Sec:CosmologicalModel} we introduce the notion of cosmological model and discuss its properties, making a special emphasis on the necessary distinction between the global and local ingredients. In Sec.~\ref{Sec:kmetrics} we introduce the spacetimes that we study in the article and present some of their local geometrical properties. In Sec.~\ref{Sec:extendibility} we discuss the possibility to extend the geometries to points not covered by the original coordinates. In Sec.~\ref{Sec:globalhyp} we analyze the global properties of these geometries with a special emphasis on characterizing under which conditions global hyperbolicity holds. In Sec.~\ref{Sec:isometries} we characterize the isometries of the geometries, exhaustively studying how many Killing vectors exhibit depending on the functions entering the metrics. In Sec.~\ref{sec:StephaniLTB} we compare these geometries with some previously studied models, the LTB and the Stephani metrics, and show that they are not isometric. Finally we summarize the main results of the article in Sec.~\ref{Sec:results}, and we conclude in Sec.~\ref{Sec:conclusions} by explaining the opportunities that these metrics present for current Cosmology and discussing follow-ups to this work. 

 \paragraph*{\textbf{Notation and conventions.}}
In this article, we use the signature $(-,+,...,+)$  for the spacetime metric and units in which $c = G_{\text{N}} = 1$. For the curvature tensors we use the conventions in \cite{Wald1984},  i.e., $[\nabla_a, \nabla_b] V^d =: - \R_{abc}{}^d V^c$, $\R_{ab}:=\R_{acb}{}^c$. We call $\R$ the Ricci scalar, $G_{ab}:=\R_{ab}-g_{ab}\R/2$ the Einstein tensor and $C_{abcd}$ the Weyl tensor.  

Dots represent partial derivatives with respect to the time coordinate (or total derivative in case the function only depends on time) and primes correspond to the partial derivatives with respect to the radial coordinates: $\dot{X}:=\partial_t X$, $X':= \partial_r X$.

In the tensor expressions of the different metrics, we abbreviate $\dd X^2 := \dd X \otimes \dd X$ and $\dd X \dd Y:= {\frac12}  (\dd X\otimes  \dd Y +  \dd Y\otimes  \dd X)$.

Boldface symbols correspond to tensors, vector fields and differential forms in index-free notation.

We work in $n+1$ spacetime dimensions. We use $\tengB$ for the metric of the 2-dimensional base manifold of a warped product and $\tengS$ for the standard $\mathbb{S}^m$ metric, where $m:=n-1$ is the dimension of the fiber $\mathbb{S}^{m}$.

The symbol $\mathbb{L}^{n+1}$ will represent the Minkowski spacetime ($\mathbb{R}^{1,n}$ equipped with the flat metric).

Finally we summarize our index notation:
\begin{itemize}
    \item $a, b, c, \ldots$ are \textit{abstract indices} in $n+1$ dimensions (used essentially to write any tensor identity without involving partial derivatives, connections, etc.). These do not refer to any specific basis and can apply to coordinate bases, tetrads, etc.

    \item $\mu, \nu, \rho, \ldots$ are \textit{indices in a specific coordinate basis} in $n+1$ dimensions. In our case, we use them to refer to components in the chart $\{x^\mu\} = \{t, r, \theta^A\}$ (see below for the definition of $\theta^A$).

    \item $i, j, k, \ldots$ are \textit{indices in a specific coordinate basis}  $\{y^i\} = \{t, r\}$ on the \textit{2-dimensional base space}.

    \item $A, B, C, \ldots$ are \textit{indices on the sphere} $\mathbb{S}^m$, and correspond to the hyperspherical coordinates (denoted as $\theta^A$).

    \item $\iEi, \iEj, \iEk$ are \textit{indices in the Euclidean $(m+1)$-dimensional space} where $\mathbb{S}^m$ is canonically embedded.

    \item $\iLa, \iLb$ are \textit{indices in the Killing algebra} of the base space and run from $1$ to $\mathfrak{n}$.
\end{itemize}

\section{Global and local roles of time in cosmological spacetimes}
\label{Sec:CosmologicalModel}

%
\subsection{Defining a cosmological spacetime}

To start with, it is convenient to recall that a cosmological model should not simply be thought as a spacetime, i.e., a manifold with a Lorentzian metric, but it also needs two additional key ingredients: a description of matter and radiation, and a uniquely defined set of fundamental (also called comoving) observers $\vecu$ that are to describe the average motion of matter in the universe, in terms of which observational relations are set between the geometry and the matter content \cite{Ellis2012}. Keeping only the geometrical ingredients, and in a slightly simplified fashion enough for the purposes of this work, following~\cite{Choquet-Bruhat2009} (see also~\cite{Avalos2022}) we say a Lorentzian manifold $(\mathcal{V}^{n+1},\teng)$, $n \geq 3$, is a \emph{cosmological spacetime} if
\begin{itemize}
    \item $\mathcal{V}^{n+1} = I \times \Sigma^n$, where $I$ is a connected open interval of $\mathbb{R}$, which we parametrize by some coordinate $t$,
    \item $\vecu = \partial_t$ is orthogonal to the leaves $\{t\}\times \Sigma^n$ and $\teng(\vecu,\vecu) = -1$. The vector field $\vecu$ represents the comoving observers, which usually are assumed to be  freely-falling (this will hold in our models too).
\end{itemize}
In fact, this means that a single Lorentzian manifold can potentially give rise to different cosmological spacetimes, each one associated with different choices of comoving observers. 

\paragraph*{\textbf{de Sitter spacetime as a playground for different cosmological spacetimes.}}
We can illustrate this with a simple example. We define the $(n+1)$-dimensional de Sitter spacetime as the submanifold embedded in $\mathbb{L}^{n+2}$ described by 
\begin{align}
    - \left( X^{0} \right)^2 + \sum_{i=1}^{n+1} \left( X^{i} \right)^2  = \ell^2,
\end{align}
with the induced metric inherited from the flat metric
\begin{align}
        \teng_{\mathbb{L}^{n+2}} = - (\dd X^0)^2 + \sum_{i=1}^{n+1}(\dd X^i) ^2.
\end{align}
The de Sitter spacetime is a solution of the Einstein equations in vacuum with positive cosmological constant, where $\Lambda = n(n-1)/(2 \ell^2)$. 

There are different coordinate systems that bring de Sitter to an FLRW form. In particular, we can take the closed slicing, which foliates de Sitter with $n$-dimensional spheres, by setting
\begin{align}
    X^0 & = \ell \sinh \left( \frac{T}{\ell} \right), \\
    X^1 & = \ell \cosh \left( \frac{T}{\ell} \right) \cos(\vartheta), \\
    X^i & = \ell \cosh \left( \frac{T}{\ell} \right) \sin(\vartheta)\, \hat{n}^i (\theta^A) , \qquad i = 2, ..., n+1,
\end{align}
where $\hat{n}^i$ represents the outward unit vector orthogonal to $\mathbb{S}^{n-1}\subset \mathbb{R}^{n}$ in terms of the coordinates $\{\theta^A\}$ of the sphere. The range of the time coordinate is $T \in \mathbb{R}$, $\vartheta \in (0, \pi)$ is a colatitude angle on the $n$-sphere and, as already mentioned, $\{\theta^A\}$ are standard coordinates in the $(n-1)$-spheres. 

The metric in these coordinates reads
\begin{equation}
    \teng_{\text{dS}} 
    = - \dd T^2 + \ell ^2 \cosh^2 \left ( \frac{T}{\ell} \right) \left [  \dd \vartheta^2 + \sin^2 (\vartheta) \tengS\right ],
\end{equation}
where we notice that the part in square brackets is the standard metric on the $n$-dimensional sphere $\mathbb{S}^n$. This coordinate representation of the de Sitter spacetime exhibits a foliation in which the area of the $(n-1)$-dimensional spheres tends to infinity for $T \to - \infty$, monotonically decreases to a minimum at $T =0$, and then monotonically increases all the way to infinity as $T \to + \infty$.

On the other hand, we can take the flat slicing in terms of a set of coordinates $(t, x^\iEi) \equiv (t, r, \theta^A) $, where $\{x^\iEi\}$ represent the Cartesian coordinates in Euclidean space and $(r, \theta^A)$ are the corresponding radial and hyperspherical coordinates, by fixing
\begin{align}
       X^0 & = \ell \sinh \left( \frac{t}{\ell} \right) + \frac{r^2}{2 \ell} e^{t/\ell} ,\\
       X^1 & = \ell \cosh \left( \frac{t}{\ell} \right) - \frac{r^2}{2 \ell} e^{t/\ell}, \\
       X^i & = e^{t/\ell} r \, \hat{n}^i (\theta^A) , \qquad i = 2, ..., n+1 .
\end{align}
The metric in these coordinates is given by
\begin{align}
    \teng_{\text{dS}} = - \dd t^2 +  e^{2t/\ell} (\dd r^2 +  r^2 \tengS) . 
\end{align}
This covers the region $X^0 + X^1 > 0$ of the hyperboloid with planes $t=\text{constant}$ intersecting it  at $45^\circ$ (see \cite[Sec. 5.2]{HawkingEllis}). To cover the remaining part, we can choose another similar patch of coordinates, resulting simply in an exchange of the time coordinate $t \to -t$ in the metric. 

\begin{figure}
    \centering
    \includegraphics[scale=1.4]{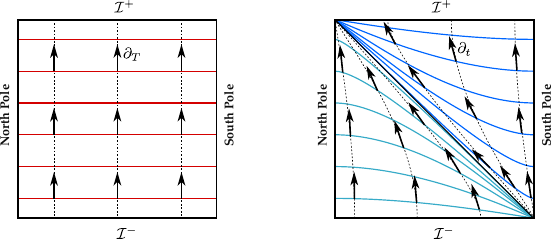}
    \caption{
    Foliations of the de Sitter spacetime ($n \geq 2$) represented in a Penrose diagram: the spherical foliation by constant-$T$ slices (left), which covers the entire spacetime, and the foliation by flat constant-$t$ slices (right), which covers only half of it. Note that, in the latter case, there exists an analogous foliation of the other half of the spacetime, but the two cannot be smoothly joined to form a global one, since the surface $X^0 + X^1 = 0$ is null (solid black line in the right panel). The arrows represent the corresponding comoving observers. Dashed lines represent constant-$\vartheta$ ($r$) in the left (right) panel.} 
    \label{fig:deSitter_foliations}
\end{figure}

From these two foliations, we see that within the same manifold we can take two (actually more) vector fields $\vecu$ that represent the comoving time, namely $\partial_T$ and $\partial_t$. Clearly, both vectors are different, as the two sets of coordinates are related through
\begin{align}
    e^{ t /\ell} & = \sinh \left( \frac{T}{\ell} \right) + \cos (\vartheta)  \cosh \left( \frac{T}{\ell} \right),  \\
    e^{t/\ell} r & = \ell \cosh \left( \frac{T}{\ell} \right) \sin (\vartheta),
\end{align}
while $\theta^A$ remain unchanged. In other words, different sets of observers can be taken as the fundamental, or comoving, ones. Even though this can be done in de Sitter spacetime due to its high level of symmetry, it highlights the crucial role that the choice of comoving time plays in defining a cosmology.

Another interesting example is the Milne spacetime which also stresses the importance of the comoving observers. In fact, it shows that even an open subset of Minkowski spacetime (the chronological future of a point) can be regarded as a non-trivial cosmological model.

 \paragraph*{\textbf{Predictability.}}
In addition to this, a cosmological spacetime should also fulfill a global fundamental property: \emph{predictability}. This feature is adequately captured by the notion of \emph{global hyperbolicity}, which can be defined in different ways such as unique evolution from initial data posed on  a Cauchy hypersurface, absence of naked singularities and others (see \cite{Sgeroch,Spenrose})\footnote{
    When global hyperbolicity fails, one should add ``information at conformal infinity'', as occurs in the case of anti-de Sitter spacetime. This situation can be handled  with the notion of globally hyperbolic spacetime-with-timelike-boundary developed in \cite{AFS, Solis}. Here we will focus in the case without boundary for simplicity (but see footnote~\ref{fnote:kradboundary} in Rem.~\ref{r_radial} regarding the $k(t)$-radial case).}.
A celebrated theorem by Geroch \cite{Ge} refined by Bernal and S\'anchez \cite{BS03, BS05}, proves that a globally hyperbolic spacetime admits a smooth Cauchy temporal  function $\tau$,  so that the spacetime manifold splits globally as an orthogonal product $(I\times \mathcal{S}, \teng)$. Here, $I\subset \mathbb{R}$ is an interval and  the metric takes the form
\begin{equation}
    \label{e_0intro}
    \teng=- \lambda(\tau,x) \dd \tau^2 + \teng_\tau[x],
\end{equation}
where each slice $\{\tau_0\}\times \mathcal{S}$, $ \tau_0\in I$, inherits a positive-definite metric $\teng_{\tau_0}$ and turns out to be a smooth spacelike Cauchy hypersurface. 

From a purely mathematical viewpoint, there is a huge freedom to choose such  a $\tau$.\footnote{\label{foot1}
    This was stressed in \cite{BS06} where, for example, any compact acausal spacelike submanifold with boundary is shown to lie in a $\tau$-slice for some Cauchy temporal $\tau$. The freedom in $\tau$ had already led Andersson et al. to propose a way to construct a ``standard cosmological time'' \cite{Aetal} for quite general spacetimes.}   
Notice that the splitting in Eq.~\eqref{e_0intro} and the one appearing in the definition of cosmological spacetime do not necessarily need to agree. They do agree in the standard FLRW models, since the comoving time plays not only a global role (constant $t$-slices are Cauchy and provide a global splitting), but also a local role as the slices are maximally symmetric and hence have constant curvature.  As mentioned above, the freedom to choose a Cauchy temporal function is huge and, thus, this local symmetry condition which is sometimes linked to physical magnitudes (as it happens in the FLRW case, see for example \cite{RZ}) selects a specially appropriate time, typically unique. 

\subsection{Global and local time functions are independent}

Coming to our example in de Sitter spacetime, the closed slicing gives rise to a globally defined foliation in which the constant comoving time surfaces are Cauchy hypersurfaces (they are compact). The flat slicing though, is such that the comoving time gives rise to noncompact hypersurfaces which clearly cannot be Cauchy surfaces of the de Sitter spacetime as defined above (since all Cauchy surfaces are homeomorphic). In that sense, the example shows that both times do not necessarily need to agree. However, this is actually pathological from the global point of view, as the $\partial_t$ vector field cannot be extended continuously at $t \to - \infty$ and hence it is not possible to smoothly connect it with another vector field defined in the other half-part of de Sitter. 

In fact, time functions that attempt to smoothly interpolate from the closed to the open slicing in de Sitter spacetime have been previously reported by Krasi\'nski taking coordinates in which the metric belongs to the Stephani family~\cite{Krasinski1981,Krasinski1983}, see also~\cite{Krasinski1997, Dabrowski1991}. Furthermore, other solutions describing less symmetric spacetimes that in principle allow for this dissociation of roles have been reported~\cite{Cook1975, Sussman1988}. 

This already suggests that the twofold role played by time in FLRW spacetimes should not be taken as mandatory, as it is actually not needed. In fact, we want to highlight the inherent independence between two different properties implicitly assumed for the ``spatial part'' of  cosmological spacetimes so far:
\begin{itemize}
    \item[(a)] Existence of a temporal function with a constant curvature slicing (motivated by  considerations on  intrinsic isotropy and homogeneity).
    \item[(b)] Existence of a Cauchy temporal function $t$  (motivated by considerations on predictability), that is, inextensible causal curves, which  represent particles, must cross each spatial $t$-slice exactly once. 
\end{itemize}
In fact, whereas the topology of Cauchy slices (b) is fixed, there is no requirement that the topology of the constant curvature slices (a) needs to be fixed. This means that models in which the topology of the constant curvature slices changes (our $k(t)$ metrics allow this possibility) will admit  some temporal functions fulfilling  (a) and others fulfilling (b) but none fulfilling both.

\subsection{Current Physics might require the  disassociation of roles}
The different roles  for the spatial slicing pointed out in (a) and (b) above are clearly   distinguished from a purely physical viewpoint.  
   
The property (a) is a strong assumption with local nature. In principle, the time which produces such a constant curvature slicing should be then tied to some locally measurable physical process or magnitude.

However, the property (b) is  a mathematical consequence of predictability, which has a  philosophical rather than  physical nature: we may believe that  the present determines the future (in a mathematically precise  way), but  only  omniscent observers could check it.  Indeed, very general  acausal subsets become a part of a  slice for some Cauchy temporal function $\tau$ (recall footnote~\ref{foot1}) but no local measurement could determine the whole slice. 

More down-to-earth, the dissociation of both roles in some of our cosmological spacetimes with changing $k(t)$ permits to model appealing cosmological possibilities, such as a Big-Bang in a closed model which may seem an open model for comoving observers later.

\section{Construction of the \texorpdfstring{$k(t)$}{k(t)} metrics and local curvature properties}
\label{Sec:kmetrics}

We begin this section by reviewing the FLRW metric, as the various possible coordinate choices on the spatial slices serve as the starting point for constructing the metrics analyzed in this article. 

\subsection{FLRW metrics in several coordinate systems}

The metric of FLRW spacetimes in comoving coordinates takes the form
\begin{align}
   \teng_{\text{FLRW}} = - \dd t^2 + a(t)^2 \tilde{\teng}_{t}\,,
\end{align}
where $t$ is the comoving time, $a(t)$ the scale factor, and $\tilde{\teng}_{t}$ the metric of constant (time independent) curvature spatial sections. It is well known that there are only three possibilities,\footnote{
    Up to taking the quotient by a discrete subgroup of the symmetry group, effectively changing the topology of the spatial slices~\cite{Lachieze-Rey1995}.}
depending on the sign of the curvature: in three dimensions positive constant curvature spaces are $3$-spheres $\mathbb{S}^3$, negative constant curvature spaces are hyperboloids $\mathbb{H}^3$ and the zero curvature space is flat Euclidean space $\mathbb{R}^3$. Concretely, for $k>0$ the metric $\tilde{\teng}_{t}$ describes a $3$-sphere with curvature radius $1/\sqrt{k}$, for $k<0$ a hyperbolic space with curvature radius $1/\sqrt{-k}$ and $k=0$ is Euclidean space. 

Different coordinate systems can be used to describe the spatial hypersurfaces, for instance we have the three following possibilities:
\begin{align}
        \teng_{\text{FLRW}} &= -\dd t^2 + a(t)^2 \left(\frac{1}{1-k r^2}\dd r^2 + r^2 \tengS \right) \label{Eq:FLRW1} \\
     &= -\dd t^2 + a(t)^2 \left(\dd \chi^2 + F(\chi)^2 \tengS \right) \label{Eq:FLRW2} \\
     &= -\dd t^2 + \frac{4a(t)^2}{(1+k R^2)^2} \Big( \dd R^2 + R^2 \tengS \Big) \,, \label{Eq:FLRW3}
\end{align}
where the different radial coordinates are related by:
\begin{align}
    r=S_k(\chi) = \frac{2R}{1+kR^2}\,,
\end{align}
and with the function $S_k(\chi)$ defined as
\begin{equation}\label{eq:defSk}
    S_k(\chi):= \begin{cases}
        \dfrac{\sin(\sqrt{k}\, \chi)}{\sqrt{k}}    & \text{if} \; k>0 \\
        \chi                                       & \text{if} \; k=0 \\
        \dfrac{\sinh(\sqrt{-k}\, \chi)}{\sqrt{-k}} & \text{if} \; k<0 
    \end{cases}\,.
\end{equation} 
A rigidity theorem for FLRW is presented in~\cite{ONeill1983}, where it is shown that FLRW can be characterized as the set of cosmological spacetimes that are isotropic, where isotropy should be understood in the sense of spacetime isotropy as defined in~\cite{Avalos2022}. 

As carefully analyzed in~\cite{Avalos2022}, it has sometimes been incorrectly stated in the literature that FLRW spacetimes are simply those that are spatially homogeneous and isotropic (observe that now ``isotropy'' is understood in the sense of spatial isotropy), that is, that each constant-time slice inherits an induced metric that is maximally symmetric. In fact, we can see that the existence of a foliation with maximally symmetric slices does not lead to the rigidity of FLRW spacetimes by providing three straightforward counterexamples. Any promotion of the intrinsic curvature of the spatial slices in Eqs.~\eqref{Eq:FLRW1}-\eqref{Eq:FLRW3} to a time function $k \to k(t)$ leads to a counterexample. Indeed, among other local and global properties, we will see that none of the three resulting spacetimes is isometric to a patch of FLRW or each other (previous examples appear in \cite{Sanchez2023}, see also  \cite{MarsVera2024}).

\subsection{Introducing the \texorpdfstring{$k(t)$}{k(t)} metrics}
\label{ssec:def_kt_metrics}

Let $I\subset \mathbb{R}$ be an open interval and $a, k: I \rightarrow \mathbb{R}$ any smooth functions with $a >0$. Consider the following metrics on some open subsets $\U$ of $I\times\mathbb{R}^n$, which are defined by taking spherical coordinates $(r,\theta^A)$ in $\mathbb{R}^n$ (the convention $1/\sqrt{C}=\infty$ if $C\leq 0$ is used):

\begin{enumerate}

\item  The $k(t)$-warped metric
\begin{equation}
    \gw=-\dd t^2+a(t)^2 \Big(\dd r^2+ S_{k(t)}^2(r) \tengS \Big) \,, 
\end{equation}
where $\tengS$ is the standard metric on $\mathbb{S}^{n-1}$, the domain of the coordinates is 
  \begin{equation}
    \uwar=\left\{(t,r,\theta^A): t \in I, 0<r<\frac{\pi}{\sqrt{k(t)}}\right\}
\end{equation}
and the function $S_k(r)$ has been introduced in Eq.~\eqref{eq:defSk}. 

\item  The $k(t)$-conformal metric:
\begin{equation}\label{e_ucon}
    \gc=-\dd t^2 + \dfrac{4a(t)^2}{\left(1+k(t) r^2\right)^2} \left(\dd r^2+  r^2 \tengS\right)\,,
\end{equation}
where the $\{ t, r \}$ take values in
\begin{equation}
    \ucon= \left\{(t,r,\theta^A): t \in I, 0<r < \frac{1}{\sqrt{-k(t)}}\right\}\,.
\end{equation}

\item  The $k(t)$-radial metric:
\begin{equation}
    \gr=-\dd t^2 + a(t)^2 \left(\frac{1}{1-k(t) r^2} \dd r^2+  r^2 \tengS\right)\,,   
\end{equation}
where the $\{ t, r \}$ take values in
\begin{equation}
    \urad=\left\{(t,r,\theta^A): t \in I, 0<r<\frac{1}{\sqrt{k(t)}}\right\}\,.
\end{equation}
\end{enumerate}
In the three cases we will collectively denote by $x = (r,\theta^A)$ the spacelike coordinates. 

We now turn our attention to their local geometric properties, focusing on the curvature tensors and the behavior of the congruence of integral curves of the geodesic vector field $\partial_t$. The discussion of whether these geometries can be extended to the boundary of the interval over which $r$ is defined on each case will be postponed until Sec.~\ref{Sec:extendibility}.

\subsection{Curvature decomposition for the \texorpdfstring{$k(t)$}{k(t)} metrics}

The geometries given by the $k(t)$ metrics have the structure of a warped product with fiber $\mathbb{S}^{n-1}$ and base manifold $\N$ an open neighborhood of $\mathbb{R}^2$. To be precise, the total manifold admits a local product decomposition $\N\times \mathbb{S}^{n-1}$, and the metric tensor has the form
\begin{equation}
    \teng = \tengB + f^2 \tengS \,,
    \label{eq:warpedprod}
\end{equation}
with
\begin{align}
    \tengB &= \gB_{ij}(y) \dd y^i \dd y^j = -\dd t^2 + W^2 \dd r^2 \,, \label{eq:basespace}\\
    \tengS &= \gS_{AB}(\theta) \dd \theta^A \dd \theta^B\,.
\end{align}
The warping functions $f$ are given by
\begin{align}
        f(t,r) = \begin{cases}
          a(t) S_{k(t)} (r) &  \text{$k(t)$-warped metric} \\
          a(t) \dfrac{2r}{1+k(t)r^2} &  \text{$k(t)$-conformal metric}   \\
          a(t) r  &  \text{$k(t)$-radial metric}
        \end{cases} \quad
    \label{Eq:Warp_Function}
\end{align}
and the functions $W$ are
\begin{align}
        W(t,r) =  \begin{cases}
            a(t) &  \text{$k(t)$-warped metric}\\
            a(t) \dfrac{2}{1+k(t)r^2} &  \text{$k(t)$-conformal metric} \\[5mm]
            a(t)\dfrac{1}{\sqrt{1-k(t)r^2}} &  \text{$k(t)$-radial metric}   
        \end{cases} \quad .
    \label{Eq:Base_warped}
\end{align}
Given that the fiber is an $(n-1)$-sphere of unit radius (hence maximally symmetric with constant curvature equal to one), the Ricci tensor of the fiber is
\begin{align}
      \R^{(\tengS)}_{AB} =(n-2) \gS_{AB}\,.
\end{align}
On the other hand, the components of the Ricci tensor and the Ricci scalar of the base space can be easily computed:
\begin{equation}
    \RB_{tt} = - \frac{\ddot{W}}{W}\,, \qquad
    \RB_{tr} = 0\,,\qquad
    \RB_{rr} =  W \ddot{W}\,,
\end{equation}
\begin{equation}
    \RB = 2 \frac{\ddot{W}}{W}\,.\label{eq:CurvScal_B} 
\end{equation}

Now we can make use of well-known expressions for the Ricci tensor of a warped product (see e.g.~\cite[Cor. 7.43]{ONeill1983}) to find the components of the Ricci tensor of the total space:
\begingroup
\allowdisplaybreaks
\begin{align}
    \R_{tt} &=  - \frac{\ddot{W}}{W} - (n-1) \frac{\ddot{f}}{f} \,, \label{eq:R_WF1} \\
    \R_{tr} &= (n-1)\left[  \frac{\dot{W} f'}{Wf} -  \frac{\dot{f}'}{f} \right]\,, \\
    \R_{rr} &= W\ddot{W}  + (n-1) \left[W\dot{W}\frac{ \dot{f}}{f} + \frac{W' f'}{Wf} - \frac{f''}{f} \right]\,, \\
    \R_{tA} &= \R_{rA} =0\,, \\
    \R_{AB}  &=  \left[ f\ddot{f} + \frac{\dot{W}}{W}f\dot{f}  - \frac{ff''}{W^2} + \frac{W'f f'}{W^3} + (n-2) \left( 1+ \dot{f}^2  - \frac{f'^2}{W^2}\right)\right] \gS_{AB}\,.
\end{align}
\endgroup
The Ricci scalar is then given by
\begin{align}\label{eq:R_WF6}
    \R = 2\frac{\ddot{W}}{W} + 2(n-1)\left(\frac{\ddot{f}}{f}- \frac{f''}{W^2f} + \frac{\dot{W}\dot{f}}{ Wf}   + \frac{W' f'}{W^3f}\right) + \frac{(n-1)(n-2)}{f^2} \left(1 + \dot{f}^2 - \frac{f'^2}{W^2}\right) \,.
\end{align}
The explicit expressions for the specific metrics $\gc$, $\gw$ and $\gr$ can be found in App.~\ref{app:curvatures}.

\subsection{Congruences}
\label{SSec:congruences}

We can take the vector $u^a = (\partial_t)^a$ and analyze the behavior of the congruence of its integral curves. We start by constructing the tensor:
\begin{align}
    B_{ab} := \nabla_b u_a,
\end{align}
Since the one form $u_a$ is closed ($\dd\boldsymbol{u}=0$) it follows that $B_{ab}$ is symmetric. From $u$ being unit we get $B_{ab} u^a =0$, so $B_{ab}$ is purely spatial,
i.e., $u^a B_{ab} = u^a B_{ba} = 0$, and $u^a$ is geodesic. The information about the infinitesimal behavior of the congruence is encoded in the irreducible parts of $B_{ab}$ which are called vorticity, shear and expansion and are respectively given by
\begin{align}
    \omega_{ab} &:= \frac{1}{2}(B_{ab}-B_{ba})\,,\\
    \sigma_{ab} &:= \frac{1}{2}(B_{ab}+B_{ba}) - \frac{1}{n} \Theta \mathfrak{s}_{ab}\,,\\
    \Theta &:= g^{ab} B_{ab}\,,
\end{align}
where $\mathfrak{s}_{ab}:= g_{ab} + (\dd t)_a (\dd t)_b$ is the spatial part of the metric. In our case, the vorticity vanishes identically and the components $B_{\mu\nu}$ in the chart $\{x^\mu\}=\{t, r,\theta^A\}$ are purely spatial and read:
\begin{align}
    B_{\mu\nu} = \partial_\nu u_\mu - \Gamma^{\rho}{}_{\nu\mu} u_\rho = \Gamma^{t}{}_{\nu\mu} = \frac{1}{2} \partial_t \mathfrak{s}_{\mu\nu}\,.
\end{align}
Using $\mathfrak{s}_{\mu\nu} \dd x^\mu \dd x^\nu = W^2(t,r) \dd r^2 + f^2 \gS_{AB} \dd \theta^A \dd \theta^B \,,$ the only nontrivial components $B_{\mu\nu}$ are
\begin{equation}
    B_{rr} = W \dot{W}\,, \qquad
    B_{AB} = f\dot{f}  \gS_{AB}\,.
\end{equation}
Thus, the expansion, the non-vanishing components of the shear tensor and the shear scalar $\sigma^2:= \sigma^{\mu\nu} \sigma_{\mu\nu}$ read
\begin{align}
    \Theta & = \frac{\dot{W}}{W} + (n-1) \frac{\dot{f}}{f}\,, \quad \\
    \sigma_{rr} &= \frac{n-1}{n} W^2  \left(\frac{\dot{W}}{W} - \frac{\dot{f}}{f}  \right) , \qquad    \sigma_{AB} = -\frac{1}{n} f^2\left( \frac{\dot{W}}{W}-\frac{\dot{f}}{f} \right) \gS_{AB}\,, \\
    \sigma^2 &  = \frac{n-1}{n} \left ( \frac{\dot{W}}{W} - \frac{\dot{f}}{f} \right )^2\,.
\end{align}
Notice that the shear vanishes  in the FLRW case, as it should. Inserting the specific form of $f$, $W$ one finds the expansion and shear for each of the metrics. We collect the results in App.~\ref{app:congruence}.

Finally, we present another relevant quantity that will be used in the following to analyze the spacetime close to singular regions. As a measure of the tidal forces, we consider the relative acceleration of two nearby test particles that follow geodesics of the comoving observers $\partial_t$. This can be computed via the geodesic deviation equation given by
\begin{equation}
    \frac{\nabla^2 V^a}{\dd t^2} = \R_{bcd}{}^a V^b (\partial_t)^c (\partial_t)^d\,,
\end{equation}
where $V^a$ is a deviation vector, satisfying $[\boldsymbol{V}, \partial_t]=0$. If we focus on particles separated along the $r$ direction, at the initial time of deviation $t_0$ we find 
\begin{equation}
    \label{eq:geodesicdev}
    \left .     \frac{\nabla^2 V^\mu}{\dd t^2}  \right |_{t=t_0} =
    \left . \R_{rtt}{}^\mu \right |_{t=t_0},
\end{equation}
since $V^\mu |_{t_0} =(\partial_r)^\mu= \delta^\mu{}_r$ in the coordinates $\{t,r,\theta^A\}$. All these expressions are valid for the three $k(t)$ metrics.

\subsection{Singularities}

In the three cases, the metrics were defined on an open subset of $I\times \mathbb{R}^n$ in Sec.~\ref{ssec:def_kt_metrics}, and they are smoothly extensible to $r=0$. However curvature singularities may appear at the maximum value of $r$ making the metric inextensible therein (at least as a $C^2$ metric). In the following we analyze case by case and study in detail the differences between them.

\subsubsection{The \texorpdfstring{$k(t)$}{k(t)}-warped case}

\begin{proposition}\label{prop:kwar_sing}
    The $k(t)$-warped metric with $k(t)>0$ and $\dot{k}(t)\neq 0$ exhibits a curvature singularity at the antipodal point $r \to \pi/ \sqrt{k(t)}$.
\end{proposition}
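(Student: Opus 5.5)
The plan is to exhibit a scalar curvature invariant that diverges as $r\to\pi/\sqrt{k(t)}$. The simplest candidate is the Ricci scalar, computed from the general warped-product formula \eqref{eq:R_WF6}. For the $k(t)$-warped metric we have $W=a(t)$, so $W'=0$ and $\ddot W/W=\ddot a/a$ is regular. We also have $f=a(t)S_{k(t)}(r)$. I will fix $t_0$ with $k(t_0)>0$ and $\dot k(t_0)\neq 0$, write $s=\sqrt{k}$ and $x=sr$, and study the limit $x\to\pi^-$. By continuity the same conclusion holds along any curve approaching a point $(t_0,\pi/\sqrt{k(t_0)})$.

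First, the spatial derivatives are harmless. Since $f''/f=-k$ and $f'^2/W^2=\cos^2 x$, the combination $1-f'^2/W^2=\sin^2x=k S^2$. Therefore $(1-f'^2/W^2)/f^2=k/a^2$ stays bounded.

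All the singular behaviour comes from the time dependence through $k(t)$. Differentiating gives $\dot f=\dot a S+a\dot k\,\partial_k S$, with
\begin{equation*}
\partial_k S=\frac{1}{2s}\left(\frac{r\cos x}{s}-\frac{\sin x}{s^2}\right)\;\longrightarrow\;-\frac{\pi}{2k^{3/2}}\neq 0 \quad (x\to\pi).
\end{equation*}
Meanwhile $S\to 0$. Hence $\dot f/f\sim \dot k\,\partial_kS/S$ blows up like $1/S$.

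Next I will check the orders of all terms in \eqref{eq:R_WF6}. The terms $\ddot f/f$ and $\dot W\dot f/(Wf)$ are at most $O(1/S)$: $\ddot f$ involves $\dot k^2\partial_k^2S$, $\ddot k\,\partial_kS$ and $\dot a\dot k\,\partial_kS$, which all remain bounded at $x=\pi$, divided by $f=aS$. The term $(n-1)(n-2)\dot f^2/f^2$ instead behaves like
\begin{equation*}
(n-1)(n-2)\,\frac{\dot k^2\pi^2}{4k^3S^2}\;\to\;+\infty.
\end{equation*}
Since $n\ge 3$, the prefactor is nonzero, and this $O(1/S^2)$ term dominates all others. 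So $\R\to+\infty$.

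The main obstacle is the bookkeeping needed to make sure no cancellation occurs between the $O(1/S^2)$ term and the rest. In particular, I must verify that the $\ddot f/f$ contribution is genuinely only $O(1/S)$. This requires that $\partial_k^2 S$ be bounded at $x=\pi$, which follows from smoothness of $S_k(r)$ in $(k,r)$ for $k>0$. As a cross-check I would compare with the explicit curvature expressions in App.~\ref{app:curvatures}. As a fallback, if the Ricci scalar were delicate, I would use the tidal term \eqref{eq:geodesicdev} or $\R_{AB}$, which contain the same $\dot f^2/f^2$ divergence. Since a curvature scalar diverges, the metric cannot be $C^2$-extended there, and the singularity is a curvature singularity.
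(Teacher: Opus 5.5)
Your argument is correct and is essentially the paper's proof: the paper also shows that the Ricci scalar \eqref{eq:RicciScalarkwar} blows up, with leading term $\alpha_3\hat r^2\cot^2\hat r$ and $\alpha_3=\frac{(n-1)(n-2)}{4}\dot k^2/k^2$, and this is exactly your $(n-1)(n-2)\dot f^2/f^2$ contribution (because $\dot f/f=\dot a/a+N\dot k/(2k)$ with $N=\hat r\cot\hat r-1$). The only difference is that you extract the asymptotics directly from the general warped-product formula \eqref{eq:R_WF6} by counting orders in $1/S$, whereas the paper reads them off the explicit appendix expression.
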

\begin{proof}
  Consider a time interval in which $k(t)>0$ and $\dot{k}(t)\neq 0$. Using \eqref{eq:RicciScalarkwar}, we find the following expression:
\begin{equation}
     \R= \alpha_0(t) + \alpha_1(t) \hat{r}^2 + \alpha_2(t) \hat{r} \cot(\hat{r}) + \alpha_3(t) \hat{r}^2 \cot^2(\hat{r}),\qquad \hat{r} := \sqrt{k(t)} r\,,\label{eq:Rkwar_kpositive}
\end{equation}
for certain functions $\alpha_0$, $\alpha_1$, $\alpha_2$ and $\alpha_3$. The antipodal point corresponds to the limit $\hat{r}\nearrow \pi$ (at a fixed time). As we approach such a limit for a fixed time, we find the leading asymptotic behavior is
\begin{equation}
    \R \ \sim\  \alpha_3  \hat{r}^2 \cot^2(\hat{r}) \qquad (\hat{r}\nearrow \pi)\,,
\end{equation}
where
\begin{equation}
    \alpha_3 = \frac{(n-1)(n-2)}{4}\frac{\dot{k}^2}{k^2}\neq 0
\end{equation}
due to the hypothesis $\dot{k}\neq 0$. Consequently, the Ricci scalar diverges as we approach the antipodal point.
\end{proof}
However, following the approach in \cite[Sec. 4]{Sanchez2023}, it is possible to introduce a nonsingular geometry (a \emph{smoothening}) that agrees with the $k(t)$-warped line-element everywhere but at a small open neighborhood near the antipodal point and preserves the global properties of the spacetime (more on this in Prop.~\ref{p_3metrics1} below).

The geodesic deviation equation \eqref{eq:geodesicdev} for the $k(t)$-warped metric reads, at any initial time $t_0$,
\begin{align}
  \left .     \frac{\nabla^2 V^\mu}{\dd t^2} \right |_{t=t_0} =
  \left . \frac{\ddot{a}}{a}   \right |_{t=t_0} \delta^\mu{}_r \, .
\end{align}
This tidal acceleration is finite everywhere. This shows the mildness of this singularity and the possibility to smoothen it as in~\cite{Sanchez2023} in a simple way (in that reference $a(t)$ was constant so that the  geodesic deviation identically vanished). After the smoothening, there still remains an unremovable  singularity that appears along a comoving observer lying at the antipodal point as $k(t) \searrow 0$. Indeed, even when, say, $k(t_0)=0$ and $\dot k(t_0)<0$, the antipodal point must be removed at $t= t_0$, which underlies the topological change in the spatial part.

The singularity is similar to the one that appears in the $k(t)$-conformal case, but in a somehow more symmetric way: the space around the comoving observer would be growing so fast that they would need to disappear in finite proper time.

\subsubsection{The \texorpdfstring{$k(t)$}{k(t)}-conformal case}

\begin{figure}
    \centering
    \includegraphics[scale=1.4]{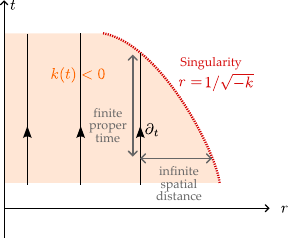}
    \caption{Spacetime diagram that shows the situation of the cosmological comoving observer of the $k(t)$-conformal metric hitting the curvature singularity at $r\to 1/\sqrt{-k(t)}$ whenever $|k(t)|$ increases in a certain interval. As indicated in the picture, such a singularity is always at an infinite spatial distance (along the $\partial_r$ direction) from any of these observers but is reached in a finite proper time for any of them at sufficiently large distance from the origin $r=0$. 
    }
    \label{fig:kcon_obser_sing}
\end{figure}

\begin{proposition}\label{prop:kconf_Weylzero}
    The $k(t)$-conformal metric is locally conformally flat.
\end{proposition}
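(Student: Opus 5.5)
The plan is to reduce the statement to a two-dimensional curvature computation, using the warped-product structure \eqref{eq:warpedprod} with the data \eqref{Eq:Warp_Function}--\eqref{Eq:Base_warped} for $\gc$. Since conformal flatness is invariant under conformal rescaling, I will factor out $f^2$:
\begin{equation*}
  \gc = \tengB + f^2\tengS = f^2\left(\frac{\tengB}{f^2} + \tengS\right),
\end{equation*}
so it suffices to show that $\hat{\tengB}+\tengS$ is locally conformally flat, where $\hat{\tengB}:=\tengB/f^2$.

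\textbf{Reference model.} Minkowski spacetime in spherical coordinates gives
\begin{equation*}
  -\dd t^2+\dd r^2+r^2\tengS = r^2\left(\frac{-\dd t^2+\dd r^2}{r^2}+\tengS\right).
\end{equation*}
Setting $\rho=\ln r$, the two-dimensional factor is $-e^{-2\rho}\dd t^2+\dd\rho^2$. This is a constant-curvature Lorentzian surface, namely a patch of $\mathrm{AdS}_2$. Hence $\mathrm{AdS}_2\times\mathbb{S}^{m}$, with unit radii, is locally conformally flat. It therefore suffices to check that $\hat{\tengB}$ is locally isometric to this same two-dimensional metric. Two-dimensional (Lorentzian) metrics with the same constant curvature are locally isometric, so it is enough that their Gaussian curvatures coincide.

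\textbf{Computing $\hat{\tengB}$.} For the $k(t)$-conformal metric we have $W/f=1/r$ and $1/f=(1+k(t)r^2)/(2a(t)r)$. With $\rho=\ln r$ this gives
\begin{equation*}
  \hat{\tengB} = -N^2\,\dd t^2+\dd\rho^2,\qquad N(t,\rho)=\frac{e^{-\rho}+k(t)e^{\rho}}{2a(t)}.
\end{equation*}
For a metric of the form $-N^2\dd t^2+\dd\rho^2$, the $t$-dependence of $N$ drops out of the curvature, because the $\dd\rho^2$ coefficient is constant. The Gaussian curvature is then determined by $\partial_\rho^2 N/N$, with the same sign convention for both metrics. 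Here $\partial_\rho^2N=N$, exactly as for the Minkowski lapse $N_0=e^{-\rho}$. Hence $\hat{\tengB}$ has the same constant curvature as the reference model, so it is locally isometric to it, and $\gc$ is locally conformal to Minkowski.

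\textbf{Remaining care and fallback.} The main care needed is twofold. First, the curvature formula for $-N^2\dd t^2+\dd\rho^2$ must be checked to involve only $N_{\rho\rho}/N$ even when $N$ depends on $t$ (this is true by a direct Christoffel computation). Second, the argument must be restricted to regions where $N\neq 0$, i.e.\ $1+kr^2\neq0$, which holds on $\ucon$. As a fallback, one could instead verify directly that the Weyl tensor vanishes, using the curvature components in App.~\ref{app:curvatures}. For such warped products this reduces to the single scalar condition $\mathcal{R}^{(\hat{\tengB})}$ = constant, matching the above.
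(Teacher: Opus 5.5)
Your argument is correct, but it takes a different route from the paper. The paper computes the Weyl tensor of $\gc$ directly, finds it identically zero for every $n\geq 3$, and implicitly uses that $C_{abcd}=0$ is equivalent to local conformal flatness in dimension $n+1\geq 4$. You instead use $W/f=1/r$ to write $\gc=f^2(\hat{\tengB}+\tengS)$ with $\hat{\tengB}=-N^2\dd t^2+\dd\rho^2$ and $N=(e^{-\rho}+k(t)e^{\rho})/(2a(t))$. The facts you rely on hold. The Gaussian curvature of such a metric is $-\partial_\rho^2N/N$, since the $t$-derivatives of $N$ drop out of $R_{\rho t\rho t}$. Moreover $\partial_\rho^2N=N$, with $N>0$ on $\ucon$, gives curvature $-1$. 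This matches the Poincar\'e $\mathrm{AdS}_2$ factor $(-\dd t^2+\dd r^2)/r^2$ of $r^{-2}\teng_{\mathbb{L}^{n+1}}$. Your route gives an explicit local conformal equivalence with $\mathbb{L}^{n+1}$ without invoking the Weyl criterion. It also explains why the property is insensitive to $a(t)$ and $k(t)$: they enter only through the $t$-dependence of the lapse, which the two-dimensional curvature does not see. Combined with the standard fact that the Weyl tensor of $\hat{\tengB}+\tengS$ is proportional to $\R^{(\hat{\tengB})}+2$, it goes further. Among metrics $-\dd t^2+\Psi(t,r)^2(\dd r^2+r^2\tengS)$, conformal flatness holds exactly when $1/\Psi=A(t)+B(t)r^2$. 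For $A>0$ this is the $k(t)$-conformal form with $a=1/(2A)$ and $k=B/A$. The same fact corrects your fallback remark: the reduced condition is $\R^{(\hat{\tengB})}=-2$, not merely $\R^{(\hat{\tengB})}$ constant (e.g.\ $\mathbb{S}^2\times\mathbb{S}^2$ is not conformally flat). Your main argument does fix this value, so nothing breaks. The paper's direct computation is less illuminating but quick to verify. At the points $r=0$, and $r=\infty$ for $k>0$, added in Prop.~\ref{p_3metrics2}, your splitting degenerates. There both arguments conclude the same way: $C_{abcd}=0$ on the dense set $r>0$ extends by continuity.
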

\begin{proof}
    This can be shown by directly computing the Weyl tensor which turns out to be identically vanishing (independently of the spatial dimension $n\geq 3$).
\end{proof}

As a consequence of the previous result, all the curvature properties are hence encoded in the Ricci tensor. 
\begin{proposition}\label{prop:kconf_sing}
    Consider a $k(t)$-conformal metric with $k(t)<0$ and $\dot{k}(t)\neq 0$.
    \begin{itemize}
        \item[(1)] The spacetime exhibits a curvature singularity at the antipodal point $r \to 1/ \sqrt{-k(t)}$. 

        \item[(2)] If $\dot{k} <0$, the comoving observers $\partial_t$ at sufficiently high $r$ reach the singularity in finite proper time.

        \item[(3)] The spatial distance along $\partial_r$ curves between an event at $t=t_0$ and \mbox{$r=r_0< 1/\sqrt{-k(t_0)}$} and the singularity is infinite.
    \end{itemize}
\end{proposition}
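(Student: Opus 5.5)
The plan is to handle the three items separately, writing everything in terms of $u(t,r):=1+k(t)r^2$, which is positive on $\ucon$ and tends to $0$ at the antipodal point $r\to 1/\sqrt{-k(t)}$ when $k(t)<0$. Recall that $f=2ar/u$ and $W=2a/u$.

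For item (1), I would avoid expanding the general formula \eqref{eq:R_WF6} term by term and instead use the Gauss–Codazzi decomposition with respect to the geodesic unit normal $\partial_t$. Each $t$-slice carries the metric $\frac{4a^2}{u^2}(\dd r^2+r^2\tengS)$. This slice has constant curvature $k(t)/a(t)^2$, so its intrinsic scalar curvature is $n(n-1)k/a^2$, which is bounded. The slice is also totally umbilical: $B_{ab}=H\,\mathfrak{s}_{ab}$ with
\begin{equation}
H=\frac{\dot W}{W}=\frac{\dot a}{a}-\frac{\dot k\, r^2}{u},
\end{equation}
and indeed $\dot f/f=\dot W/W$, so the shear vanishes and $\Theta=nH$.

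For a hypersurface-orthogonal geodesic congruence, the identity $\R=\R_{\Sigma}+2\dot\Theta+\Theta^2+B_{ab}B^{ab}$ (with the sign conventions of the paper, to be double-checked against the explicit formula in App.~\ref{app:curvatures}) then gives
\begin{equation}
\R=\frac{n(n-1)k}{a^2}+2n\dot H+n(n+1)H^2 .
\end{equation}
Differentiating, $\dot H$ contains the term $+\dot k^2r^4/u^2$, and $H^2$ contains $\dot k^2 r^4/u^2$. All remaining terms are $O(1/u)$ or bounded. Hence
\begin{equation}
\R\sim n(n+3)\,\frac{\dot k^2 r^4}{u^2}\qquad\text{as } u\searrow 0,
\end{equation}
which diverges because $\dot k\neq 0$. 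Tracking the coefficient correctly is the step I expect to be most error-prone. The safe fallback is to substitute $f$ and $W$ into \eqref{eq:R_WF6} and keep only the $u^{-2}$ terms, checking that they do not cancel. The Gauss-equation route makes non-cancellation transparent, since both contributions carry the same sign.

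For item (2), the curves $r=r_0$ are integral curves of the unit geodesic field $\partial_t$, so proper time equals coordinate time. Fix $t_0$ with $k(t_0)<0$ and $\dot k(t_0)<0$. By continuity there is $t_1>t_0$ in $I$ with $k(t)<0$ and $\dot k(t)<0$ on $[t_0,t_1]$. Then $1/\sqrt{-k(t)}$ is strictly decreasing there. Any observer with
\begin{equation}
1/\sqrt{-k(t_1)}<r_0<1/\sqrt{-k(t_0)}
\end{equation}
starts in $\ucon$. By the intermediate value theorem it reaches $r_0=1/\sqrt{-k(t_*)}$ at some $t_*\in(t_0,t_1)$, that is, the singular locus of item (1), after proper time $t_*-t_0<\infty$. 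This is exactly the statement for "sufficiently high $r$".

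For item (3), the spatial length along $\partial_r$ at fixed $t_0$ is
\begin{equation}
\int_{r_0}^{1/\sqrt{-k}} W\,\dd r=2a(t_0)\int_{r_0}^{1/\sqrt{-k}}\frac{\dd r}{1-|k|r^2}.
\end{equation}
The integrand has a simple pole at the upper endpoint, since $1-|k|r^2=(1-\sqrt{|k|}r)(1+\sqrt{|k|}r)$. The integral therefore diverges logarithmically, explicitly as $\frac{2a}{\sqrt{|k|}}\,\mathrm{artanh}(\sqrt{|k|}\,r)$. This is the expected behaviour, since the slice is a Poincaré-ball model of hyperbolic space and the excised sphere is its ideal boundary.
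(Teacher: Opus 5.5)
Your proposal is correct and follows the paper's proof. In item (1) the blow-up comes from the $n(n+3)\dot{k}^2r^4/(1+kr^2)^2$ term of the Ricci scalar: your Gauss--Raychaudhuri identity $\R=n(n-1)k/a^2+2n\dot H+n(n+1)H^2$ reproduces \eqref{Eq:Ricci_kconformal} exactly, whereas the paper simply reads the divergence off that formula. Items (2) and (3) use the same shrinking-domain argument and the same divergent integral $2a(t_0)\int \dd r/(1+k(t_0)r^2)$ as the paper.
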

\begin{proof} ~
\begin{itemize}
    \item[(1)] Immediate from the expression of the Ricci scalar \eqref{Eq:Ricci_kconformal}, which blows up at that point as long as $\dot{k} \neq 0$ and $k<0$. 
    \item[(2)] Fix any time $t_1$.  Since $\dot{k}<0$, the minimum of $|k|$ in the domain $t< t_1$ is reached at $t=t_1$. Let $r_0>0$ be defined by $r_0^2 = - 1/k(t_1)$. The range of values of the coordinate $r$ for which the metric is defined shrinks as $t$ increases. An observer along $\partial_t$ at a value $r> r_0$ will reach the singularity at the antipodal point before $t=t_1$, hence in finite proper time. 
    \item[(3)] A direct computation leads to:
    \begin{equation}
        \int^{1/\sqrt{-k(t_0)}}_{r_0} \sqrt{g_{rr}} \ \dd r = 2a(t_0)\int^{1/\sqrt{-k(t_0)}}_{r_0}\frac{\dd r}{1 +k(t_0)r^2} = \infty\,.
    \end{equation}
\end{itemize}
\end{proof}

The singular behavior can also be seen from the following fact: 
\begin{proposition}
  The singularity of the $k(t)$-conformal metric with $k(t)<0$ and $\dot{k}(t)\neq 0$ is such that the tidal forces and the expansion of the congruence of comoving observers defined by $\partial_t$ diverge as $r \to 1/ \sqrt{-k(t)}$ at constant $t$.
\end{proposition}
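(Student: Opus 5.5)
The approach is to compute both quantities explicitly for the $k(t)$-conformal metric, using the general formulas of Sec.~\ref{SSec:congruences}. There the expansion is
\begin{equation*}
\Theta=\frac{\dot W}{W}+(n-1)\frac{\dot f}{f},
\end{equation*}
and the tidal acceleration along $\partial_r$ is $\R_{rtt}{}^\mu$ via \eqref{eq:geodesicdev}. We insert $f=2a r/(1+k r^2)$ and $W=2a/(1+kr^2)$ from \eqref{Eq:Warp_Function}--\eqref{Eq:Base_warped}.

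For the expansion, the logarithmic derivatives are
\begin{equation*}
\frac{\dot W}{W}=\frac{\dot f}{f}=\frac{\dot a}{a}-\frac{\dot k r^2}{1+k r^2}.
\end{equation*}
Hence
\begin{equation*}
\Theta=n\left(\frac{\dot a}{a}-\frac{\dot k\, r^2}{1+k r^2}\right),
\end{equation*}
and the shear vanishes, as expected since the $t$-slices are umbilical. At fixed $t$ with $k<0$, the denominator $1+kr^2\to 0^+$ as $r\to 1/\sqrt{-k}$, while the numerator tends to $\dot k/(-k)\neq0$. Therefore $|\Theta|\to\infty$, with sign opposite to that of $\dot k$.

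For the tidal forces, we compute the component $\R_{rtt}{}^r$ (the only nonzero one along this direction, by the warped-product structure). For a metric $-\dd t^2+W^2\dd r^2+\dots$ with $\partial_t$ geodesic and $B_{rr}=W\dot W$, the relevant identity is
\begin{equation*}
\frac{\nabla^2 V^r}{\dd t^2}=-\frac{\ddot W}{W}\,V^r,
\end{equation*}
up to the sign convention of Wald. It can be checked directly from $\R_{tt}$ in \eqref{eq:R_WF1}, where the base term is $-\ddot W/W$, or from the Jacobi equation, because the Jacobi field along $\partial_r$ is just $W\partial_r/W$ scaled.

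So the key step is computing $\ddot W/W$. Write $W=2a\,\phi$ with $\phi=(1+kr^2)^{-1}$. Then
\begin{equation*}
\frac{\ddot W}{W}=\frac{\ddot a}{a}+2\frac{\dot a}{a}\frac{\dot\phi}{\phi}+\frac{\ddot\phi}{\phi},
\end{equation*}
with
\begin{equation*}
\frac{\dot\phi}{\phi}=-\frac{\dot k r^2}{1+kr^2},\qquad \frac{\ddot\phi}{\phi}=\frac{2\dot k^2r^4}{(1+kr^2)^2}-\frac{\ddot k r^2}{1+kr^2}.
\end{equation*}
As $r\to1/\sqrt{-k}$ the dominant term is
\begin{equation*}
\frac{2\dot k^2 r^4}{(1+kr^2)^2}\sim \frac{2\dot k^2}{k^2}\,(1+kr^2)^{-2}\to+\infty,
\end{equation*}
because $\dot k\neq0$. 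The remaining terms diverge only like $(1+kr^2)^{-1}$, so they cannot cancel it. Hence the tidal acceleration diverges.

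The main obstacle is bookkeeping. First, one must pin down the sign and index placement of $\R_{rtt}{}^\mu$ in the paper's conventions. Second, one must ensure that "diverge" is measured invariantly, for instance through the orthonormal-frame component $\R_{\hat r\hat t\hat t\hat r}=-\ddot W/W$, rather than a coordinate component that might blow up only because of $W$. I would state the result for the orthonormal component, which is coordinate-independent along the comoving observers. The leading $(1+kr^2)^{-2}$ divergence with nonzero coefficient settles it.
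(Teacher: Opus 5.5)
Your proposal is correct and follows the paper's route. You recompute $\Theta$, which matches \eqref{Eq:kconf_expansion}, and the deviation coefficient $\ddot W/W$, which reproduces exactly the expression in \eqref{Eq:tidal_conformal}; you then read off the $(1+kr^2)^{-1}$ and $(1+kr^2)^{-2}$ blow-ups, whose coefficients $\dot k/|k|$ and $2\dot k^2/k^2$ are nonzero. The only slip is a sign: with the paper's (Wald's) conventions $\R_{rtt}{}^r=+\ddot W/W$, as the direct computation $\nabla_{\partial_t}\nabla_{\partial_t}\partial_r=(\ddot W/W)\,\partial_r$ shows (it reduces to $+\ddot a/a$ in FLRW), not $-\ddot W/W$, though this does not affect the divergence claim.
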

\begin{proof}
  The first term in the expression of the expansion  \eqref{Eq:kconf_expansion} is finite as in FLRW, but the second term blows up as $r \rightarrow 1/ \sqrt{-k(t)}$.
  
For the $k(t)$-conformal metric, the geodesic deviation \eqref{eq:geodesicdev} reads, at initial deviation,
\begin{equation}
 \left .    \frac{\nabla^2 V^\mu}{\dd t^2}  \right |_{t=t_0} = \left .  \left (\frac{\ddot{a}}{a}- \frac{r^2(2\dot{a} \dot{k} +a\ddot{k})}{a(1+kr^2)} + \frac{2r^4 \dot{k}^2}{(1+kr^2)^2}\right) \right |_{t=t_0} \delta^\mu{}_r\,,
    \label{Eq:tidal_conformal}
\end{equation}
so the tidal forces are indeed divergent as $r\to 1/\sqrt{-k}$ unless $\dot{k}=0$. 
\end{proof}

This singularity can be interpreted in the following way. If $k$ is negative and decreasing, the spatial sections become more and more hyperbolic, in the sense that the same displacement in the radial coordinate corresponds to an increasingly larger proper distance. This happens in such a way that the separation between two close freely-falling observers grows until the one at bigger $r$ disappears from the spacetime (it being pushed to infinity), so the spacetime becomes timelike geodesically incomplete. See Fig.~\ref{fig:kcon_obser_sing}, for a pictorial representation of the singularity.

\subsubsection{The \texorpdfstring{$k(t)$}{k(t)}-radial case}

\begin{figure}
    \centering
    \includegraphics[scale=1.4]{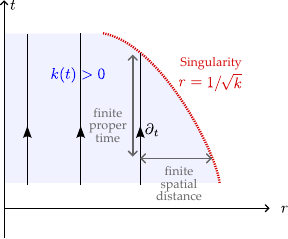}
    \caption{Spacetime diagram that shows the situation of the cosmological comoving observer of the $k(t)$-radial metric hitting the curvature singularity at $r\to 1/\sqrt{k(t)}$ whenever $k(t)$ increases in a certain interval. As indicated in the picture, the singularity is hit in a finite proper time by the observers $\partial_t$ at sufficiently large distance from the origin $r=0$. 
    }
    \label{fig:krad_obser_sing}
\end{figure}

\begin{proposition}
    \label{prop:krad_sing}
    Consider a $k(t)$-radial metric with $k(t)>0$ and $\dot{k}(t)\neq 0$.
    \begin{itemize}
        \item[(1)] The spacetime exhibits a curvature singularity at $r \to 1/ \sqrt{k(t)}$, where both the Ricci scalar and the scalar $C_{abcd}C^{abcd}$ become infinite. 

        \item[(2)] If $\dot{k}(t)>0$, the comoving observers $\partial_t$ at sufficiently high $r$  reach the singularity in finite proper time.

        \item[(3)] The spatial distance along $\partial_r$ curves between the singularity at $t=t_0$ and an event at $t=t_0$ and $r=r_0< 1/\sqrt{k(t_0)}$ is finite.
    \end{itemize}

\end{proposition}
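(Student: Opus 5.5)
The plan mirrors the proofs of Prop.~\ref{prop:kwar_sing} and Prop.~\ref{prop:kconf_sing}, treating the three items separately and using the warped-product formulas \eqref{eq:R_WF1}--\eqref{eq:R_WF6} with $f=a r$ and $W=a(1-kr^2)^{-1/2}$.

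\textbf{Item (1).} I would introduce $u:=1-k(t)r^2\in(0,1)$, so that the singular limit $r\to1/\sqrt{k(t)}$ at fixed $t$ is $u\searrow0$. Since $f=ar$ is smooth and $f''=0$, all singular behaviour comes from $W$, through $\dot W/W=\dot a/a+\dot k r^2/(2u)$, $\ddot W/W$, $W'/W=kr/u$ and $1/W^2=u/a^2$. The $\dot k$-independent terms should reproduce the (regular) FLRW-type scalar curvature of a positive-curvature slice. The dominant term comes from $2\ddot W/W$, which contains $\tfrac{3}{2}\dot k^2r^4/u^2$; the other pieces are at most $O(u^{-1})$. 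Hence
\[
\R\sim \frac{3\dot k^2}{2k^2}\,\frac{1}{(1-kr^2)^2}\to\infty ,
\]
and this cannot cancel as long as $\dot k\neq0$. For the Weyl scalar I would use the fact that, for this warped product with spherical fibre, the Weyl tensor is determined by a single function. This function is a combination of the base Gauss curvature $\RB/2=\ddot W/W$ and terms in $f$, $\dot f$, $f'/W$, $f''$. Only $\ddot W/W$ carries the $u^{-2}$ divergence, so $C_{abcd}C^{abcd}\propto(\text{that function})^2/f^4$ diverges like $u^{-4}$. To avoid relying on a general formula, I would simply verify this against App.~\ref{app:curvatures}. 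The main obstacle is this step: checking that the leading coefficient in $C_{abcd}C^{abcd}$ really is a nonzero multiple of $\dot k^2$ and is not cancelled by the fibre terms. That reduces to tracking only the $u^{-2}$ coefficient.

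\textbf{Item (2).} This follows the argument of Prop.~\ref{prop:kconf_sing}(2). If $\dot k>0$, then $1/\sqrt{k(t)}$ decreases. I would fix $t_1$ and set $r_0=1/\sqrt{k(t_1)}$. A comoving curve $r=\text{const}>r_0$, starting at some $t<t_1$ where it lies in $\urad$, leaves $\urad$ before $t_1$, and it does so by reaching $r=1/\sqrt{k(t)}$. Since $\partial_t$ is unit, the elapsed proper time is at most $t_1-t<\infty$.

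\textbf{Item (3).} This is a direct computation:
\[
\int_{r_0}^{1/\sqrt{k(t_0)}}\frac{a(t_0)\,\dd r}{\sqrt{1-k(t_0)r^2}}=\frac{a(t_0)}{\sqrt{k(t_0)}}\Big(\frac{\pi}{2}-\arcsin\big(\sqrt{k(t_0)}\,r_0\big)\Big)<\infty .
\]
The integrand has only an integrable inverse-square-root singularity at the endpoint, which is the contrast with the $k(t)$-conformal case.
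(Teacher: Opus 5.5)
Your proposal is correct and follows the paper's proof item by item. For (1), the divergence of $\R$ comes from the $\frac{3r^4\dot k^2}{2(1-kr^2)^2}$ term in \eqref{Eq:RicciScal_krad}; (2) uses the same shrinking-range argument for $r<1/\sqrt{k(t)}$; and (3) is the same $\arcsin$ integral. The only difference is the Weyl scalar: the paper skips the structural reasoning and simply displays \eqref{eq:CC_kradial}, $C_{abcd}C^{abcd}=\frac{n-2}{n}[\cdots]^2$, whose bracket contains exactly that $\frac{3r^4\dot k^2}{2(1-kr^2)^2}$ term, confirming there is no cancellation. That formula carries no $1/f^4$ factor, but the slip in your sketch is harmless because $f\to a/\sqrt{k}\neq 0$ there.
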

\begin{proof} ~
\begin{itemize}
    \item[(1)] Immediate from the expression of the Ricci scalar Eq.~\eqref{Eq:RicciScal_krad}, which blows up at that point provided $\dot{k} \neq 0$ and $k>0$.
    
      We can compute the square of the Weyl tensor to obtain
\begin{equation}
    C_{abcd}C^{abcd} = \frac{n-2}{n}\left[\frac{r^2}{1-kr^2}\left(\frac{\dot{a}}{a}\dot{k}+\ddot{k}\right)+\frac{3r^4}{2(1-kr^2)^2}\dot{k}^2\right]^2\,, 
    \label{eq:CC_kradial}
\end{equation}
which also becomes infinite at the limit $r \to 1/\sqrt{k}$.

\item[(2)] The comoving time $t$ measures proper time for the geodesic observers defined by $\partial_t$. Fix any time $t=t_1$. When $\dot{k}>0$, the value of $k$ increases and the radius at which the singular behavior appears decreases with time. An observer in an orbit $r = r_0$ with $r_0 > 1/\sqrt{k(t_1)}$ will hit the singularity before $t= t_1$ and hence at finite proper time. 

\item[(3)] In this case, we find: 
\begin{equation}
\int^{r_\text{max}}_{r_0} \sqrt{g_{rr}} \ \dd r  = a(t_0) r_\text{max} \left (\frac{\pi}{2} - \arcsin(\frac{r_0}{r_\text{max}}) \right )
\end{equation}
with $r_\text{max}:= 1/\sqrt{k(t_0)}$.
\end{itemize}
\end{proof}

Notice that the divergence of the square of the Weyl tensor at $r = r_\text{max}$,  which only encapsulates information of the conformal structure of the geometry, points to the fact that the lightcone structure itself is suffering a weird phenomenon there. In fact, we will later see that the lightcones degenerate at the singularity. 

\begin{proposition}
    The singularity of the $k(t)$-radial metric with $k(t)>0$ and $\dot{k}(t)\neq 0$ is such that the tidal forces, the expansion and the shear of the congruence of comoving observers defined by $\partial_t$ diverge as $r \to 1/ \sqrt{k(t)}$.    
\end{proposition}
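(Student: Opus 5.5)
We specialise the general congruence formulas of Sec.~\ref{SSec:congruences} to the $k(t)$-radial choice $f=a r$, $W=a(1-kr^2)^{-1/2}$, and read off the behaviour as $r\to 1/\sqrt{k(t)}$ at fixed $t$. Logarithmic differentiation in $t$ at fixed $r$ gives
\begin{equation}
\frac{\dot f}{f}=\frac{\dot a}{a},\qquad \frac{\dot W}{W}=\frac{\dot a}{a}+\frac{r^2\dot k}{2(1-kr^2)} .
\end{equation}
Hence
\begin{equation}
\Theta=n\frac{\dot a}{a}+\frac{r^2\dot k}{2(1-kr^2)},\qquad
\sigma^2=\frac{n-1}{n}\,\frac{r^4\dot k^2}{4(1-kr^2)^2}.
\end{equation}
Both quantities diverge as $1-kr^2\to0^+$ whenever $\dot k\neq0$, since the $\dot a/a$ term is bounded at fixed $t$. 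In particular, $\Theta$ tends to $+\infty$ or $-\infty$ according to the sign of $\dot k$. These expressions should agree with those collected in App.~\ref{app:congruence}, and I would cross-check against them.

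For the tidal forces, I would use Eq.~\eqref{eq:geodesicdev}, which requires the component $\R_{rtt}{}^\mu$. Because the base metric is diagonal and $\partial_t$ is geodesic and hypersurface orthogonal, the only nonzero component along $\partial_r$ should be $\R_{rtt}{}^r=\ddot W/W$. This is the same structure that produced $\ddot a/a$ in the warped case, where $W=a$, and the conformal case in Eq.~\eqref{Eq:tidal_conformal}. So the key step is computing $\ddot W/W$, writing $W=a\,\phi$ with $\phi=(1-kr^2)^{-1/2}$. Then
\begin{equation}
\frac{\ddot W}{W}=\frac{\ddot a}{a}+2\frac{\dot a}{a}\frac{\dot\phi}{\phi}+\frac{\ddot\phi}{\phi},
\end{equation}
with
\begin{equation}
\frac{\dot\phi}{\phi}=\frac{r^2\dot k}{2(1-kr^2)},\qquad
\frac{\ddot\phi}{\phi}=\frac{r^2\ddot k}{2(1-kr^2)}+\frac{3r^4\dot k^2}{4(1-kr^2)^2}.
\end{equation}

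The main point to check is that this highest-order term cannot be cancelled. The term $\tfrac{3r^4\dot k^2}{4(1-kr^2)^2}$ is a double pole in $1-kr^2$ with strictly positive coefficient when $\dot k\neq0$. All remaining terms have at most a simple pole, so the double pole dominates and $\ddot W/W\to+\infty$. This mirrors the structure already seen in Eq.~\eqref{eq:CC_kradial}.

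I should also confirm that $\R_{rtt}{}^\mu$ has no other components, i.e.\ that the deviation stays along $\partial_r$. This follows from the absence of mixed $t$--$r$ terms in $\tengB$ and from the warped-product block structure, so the acceleration is purely radial. Measured in the orthonormal frame, the relative acceleration per unit proper separation is exactly $\ddot W/W$, so its divergence is a genuine, frame-independent blow-up of the tidal forces rather than a coordinate artefact.
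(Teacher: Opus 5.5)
Your proposal is correct and takes the same route as the paper: your $\Theta$ and $\sigma^2$ coincide exactly with Eqs.~\eqref{Eq:Expansion_krad}--\eqref{Eq:Shear}, and your $\ddot W/W$ reproduces Eq.~\eqref{Eq:tidal_radial} term by term. Your argument that the strictly positive double pole $3r^4\dot k^2/\bigl(4(1-kr^2)^2\bigr)$ dominates the simple-pole terms spells out the divergence that the paper only describes as clear.
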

\begin{proof}
  The expansion and the shear scalar $\sigma^2$ of the congruence associated to $\partial_t$ become infinite as $r \rightarrow 1/\sqrt{k}$, cf.  Eqs.~\eqref{Eq:Expansion_krad}-\eqref{Eq:Shear}. In this case, the expression for the tidal forces~\eqref{eq:geodesicdev} reads
\begin{align}
  \left. \frac{\nabla^2 V^\mu}{\dd t^2} \right |_{t=t_0} = \left .  \left[\frac{\ddot{a}}{a}+ \frac{r^2(2\dot{a} \dot{k} +a\ddot{k})}{2a(1-kr^2)} + \frac{3r^4 \dot{k}^2}{4(1-kr^2)^2}\right] \right |_{t=t_0}\delta^\mu{}_r\,,
    \label{Eq:tidal_radial}
\end{align}
which clearly diverges as $r \to 1/\sqrt{k}$.
\end{proof}

Notice that even though the expression~\eqref{Eq:tidal_radial} is qualitatively similar to the one obtained in the $k(t)$-conformal case~\eqref{Eq:tidal_conformal}, in this case the situation is qualitatively different from a causal point of view (see e.g. Lem.~\ref{l_3metrics3}).

It is worth pointing that Stichel~\cite{Stichel2018} made an observational and local study of the $k(t)$ conformal and radial models which is complementary to ours. Specifically, these are his models V1 and V2 which are able to account for an accelerated expansion without a dark energy component. Even though Weak Energy Condition violations are found, it is suggested that they can be a consequence of the backreaction effects from the averaging of a more fundamental cosmological model (something consistent with previous literature~\cite{Larena2008}).

\section{Smooth extensions of the \texorpdfstring{$k(t)$}{k(t)} metrics}
\label{Sec:extendibility}

In this section we study the possibility of extending the metrics to $r=0$ and to the upper limit of $r$.  For the latter, we focus on the case $k(t)>0$ (including $r=\infty$ for $\gc$), since the constant-$t$ slices correspond to a domain of a sphere  that might a priori be extendible across its boundary. On the other hand, for $k(t)\leq 0$ the topology of these spatial slices is $\mathbb{R}^n$ and the upper limit of $r$ is already at infinity.

Decompose $I=I_{+}\cup I_{\leq 0}$, where
\begin{equation}
    \label{e_Ik0}
    I_{+}:= \{t\in I: k(t)> 0\}, \qquad I_{\leq 0} := \{t\in I: k(t)\leq  0\} \quad (=I\setminus I_+) ,
\end{equation}
and let us consider separately the three cases. Notice that in the three cases we have 
\begin{align}
    \teng = - \dd t^2 + a(t)^2 \tilde{\teng}_{t} = a(\tau)^2 \left( - \dd \tau^2 + \tilde{\teng}_{t(\tau)} \right),
    \label{Eq:a_conformalfactor}
\end{align}
where we have introduced the coordinate $\tau$ by 
\begin{equation}
    \dd \tau = \frac{\dd t}{a(t)}.
\end{equation}
In this section and the following one we study extendibility and global properties of the spacetimes. Those results that hold for the metric $\teng$ will also hold for the metric $- \dd \tau^2 + \tilde{\teng}_{t(\tau)}$ as long as the function $a(t)$ is nonvanishing.

In fact, for the rest of this section and the following one we set $a(t)=1$. The only place where this may have a potential effect is regarding the properties of the conformal boundary appearing in the $k(t)$-radial spacetime, where it can change the causal character of the boundary depending on whether the range of the coordinate $\tau$ is bounded (leading to a natural conformal boundary that is spacelike). See Rem.~\ref{rem:a_effect} below. 

\subsection{The \texorpdfstring{$k(t)$}{k(t)}-warped metric}

\begin{proposition}\label{p_3metrics1} 

The metric $\gw$ is smooth on $\uwar\, \cup  \{r=0\}$, but it cannot be smoothly extended to $r=\pi/\sqrt{k(t)}$ whenever $k(t)>0$, $\dot{k}(t)\neq 0$.

However, $\gw$ can be deformed in an arbitrarily small region $\mathcal{U}_ \pi$ around $r=\pi/\sqrt{k(t)}$, so that:  
\begin{itemize}

    \item[(a)] the deformed metric becomes smooth on $I_+\times \mathbb{S}^n$ and, so, on the manifold obtained as the union $(I_+\times \mathbb{S}^n) \cup (I_{\leq 0}\times \mathbb{R}^n)$, and  

    \item[(b)] each spherical  slice $\{t_0\}\times \mathbb{S}^n, t_0\in I_+,$ is endowed with a Riemannian metric of constant curvature $k(t_0)$. 

\end{itemize}
This deformation will be called a \emph{smoothening of $\gw$} on $\mathcal{U}_ \pi$. 
 
\end{proposition}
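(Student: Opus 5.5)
Smoothness at $r=0$ is the first step. For each fixed $t$ we have $S_{k}(r)^2 = r^2\,\phi(k r^2)$, where $\phi(x)=\left(\sin\sqrt{x}/\sqrt{x}\right)^2$ is an entire (hence real-analytic) function of $x\in\mathbb{R}$ with $\phi(0)=1$; this single formula covers $k>0$, $k=0$ and $k<0$. Write $\rho_\iEi=x^\iEi$ for the Cartesian coordinates, so that $\dd r^2 + r^2\tengS = \sum_\iEi (\dd x^\iEi)^2$. Then
\begin{equation}
\gw = -\dd t^2 + a^2\Big(\dd r^2 + \phi(k(t) r^2)\,(\delta_{\iEi\iEj}\dd x^\iEi \dd x^\iEj - \dd r^2)\Big).
\end{equation}
Since $r^2\dd r^2 = (x_\iEi \dd x^\iEi)^2$, this can be rewritten as
\begin{equation}
\gw=-\dd t^2+a^2\Big(\phi\,\delta_{\iEi\iEj}+\frac{1-\phi}{r^2}\,x_\iEi x_\iEj\Big)\dd x^\iEi \dd x^\iEj .
\end{equation}
The quotient $(1-\phi(kr^2))/r^2 = k\,\psi(kr^2)$ is smooth, with $\psi$ entire. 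Hence every coefficient is a smooth function of $(t,x^\iEi)$, and the metric stays Lorentzian at $r=0$ because it reduces to $-\dd t^2+a^2\delta$ there.

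Non-extendibility at the antipodal point follows directly from Prop.~\ref{prop:kwar_sing}. When $k(t)>0$ and $\dot k(t)\neq0$, the Ricci scalar behaves like $\alpha_3 \hat r^2\cot^2\hat r\to\infty$ as $\hat r\nearrow\pi$ along the $t$-slice. Any $C^2$ extension would make $\R$ continuous, and hence bounded, near the added point, which is a contradiction. I would also remark on $C^1$. The spatial coordinate change to antipodal polar coordinates $\tilde r=\pi/\sqrt{k(t)}-r$ depends on $t$, which produces cross terms $\dd t\,\dd\tilde r$ of the form $\dot k$ times a function that is not smooth at $\tilde r=0$. That is the obstruction, although the statement itself only needs the curvature blow-up.

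For the smoothening, I would follow \cite[Sec.~4]{Sanchez2023}. On $I_+$, identify $\{t\}\times\{0\le r\le\pi/\sqrt{k(t)}\}$ with $\mathbb{S}^n$ through the map $x\mapsto(\cos(\sqrt{k}r),\sin(\sqrt{k}r)\hat n)$. This is a $t$-dependent diffeomorphism $\Phi_t$ onto the round sphere of radius $1/\sqrt{k}$ minus the antipode. The metric $\gw$ pulled back by $\Phi$ takes the form $-\dd t^2 + \teng_t + 2\,\dd t\,\boldsymbol{\beta}_t + \ldots$. Here the shift one-form $\boldsymbol{\beta}$ arises from $\partial_t\Phi_t$, i.e. from the vector field generated by the $t$-dependence of the radius. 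It is exactly this velocity field that is singular at the antipode. Concretely, points at fixed $r$ move with a speed $\propto \dot k r/\sqrt{k}$ along meridians, and this field is discontinuous at the antipode.

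The construction is as follows. Choose a smooth cutoff in the ambient sphere coordinates supported in a small cap $\mathcal{U}_\pi$ around the antipode, and replace the comoving identification by one in which the spatial coordinates are the fixed sphere coordinates (the unit sphere rescaled by $1/\sqrt{k(t)}$). The flow vector field $\boldsymbol{X}_t$ that relates the two identifications is modified inside $\mathcal{U}_\pi$ into a vector field that is smooth on the whole sphere and equals the original one outside the cap. The metric is then defined as $-\dd t^2$ plus the round metric of curvature $k(t)$ on each slice, with the shift given by the modified field; explicitly, $\teng = -\dd t^2 + \teng_{k(t)}(\dd x + \boldsymbol{X}\dd t,\dd x+\boldsymbol{X}\dd t)$. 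Property (b) holds by construction. Smoothness on $I_+\times\mathbb{S}^n$ follows from the smoothness of $\boldsymbol{X}$. Gluing with $I_{\le0}\times\mathbb{R}^n$ is smooth because the deformation affects only the region near the antipode, which escapes to infinity and disappears as $k\searrow0$, so the union manifold is defined with the comoving chart away from the antipode. The metric must also remain Lorentzian, which holds if $\partial_t-\boldsymbol{X}$ stays timelike, i.e. $|\boldsymbol{X}|_{k}<1$ in the cap. I expect this step to be the main obstacle. The original speed near the antipode is roughly $|\dot k|\pi/(2k^{3/2})$, which is unbounded as $k\to0^+$ and need not be small. So I would either make $\mathcal{U}_\pi$ shrink suitably with $t$ and interpolate the field so that its norm never exceeds the original norm near the boundary of the cap (for instance by rotating or damping only the tangential discontinuity), or accept that the original comoving field is already timelike and check that a convex-combination interpolation preserves timelikeness pointwise.
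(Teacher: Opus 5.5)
Your proposal is correct and follows essentially the paper's route. The paper simply cites \cite[Thm.~3.1]{Sanchez2023} for smoothness at $r=0$, and your even-analytic Cartesian argument is a valid self-contained substitute. It uses the blow-up of $\R$ from Prop.~\ref{prop:kwar_sing} for $C^2$-inextendibility, as you do. It takes the smoothening from \cite[Sec.~4]{Sanchez2023}, where, as in your construction, only the $\dd t$--space cross terms (your shift field $\boldsymbol{X}$) are altered near the antipode by a cutoff function, so the $t$-slices stay round.

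The step you single out as the main obstacle is, however, a non-issue, and your criterion for it is mistaken. For \emph{any} smooth $\boldsymbol{X}$, the tensor $-\dd t^2+\teng_{k(t)}(\dd x+\boldsymbol{X}\dd t,\dd x+\boldsymbol{X}\dd t)$ equals $\mathrm{diag}(-1,\teng_{k(t)})$ in the coframe $\{\dd t,\dd x+\boldsymbol{X}\dd t\}$. Hence it is Lorentzian with $t$ temporal, and $\partial_t-\boldsymbol{X}$ is automatically a unit timelike normal. The inequality $|\boldsymbol{X}|_k<1$ only decides whether the coordinate field $\partial_t$ of the fixed-sphere chart is timelike, which is irrelevant. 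Indeed, it already fails for the unmodified $\gw$ near the antipode once $|\dot k|\pi/(2k^{3/2})>1$, cf.\ the $\dd t^2$ coefficient in \eqref{e_sutil}.

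So replacing $\boldsymbol{X}$ by $\chi\boldsymbol{X}$ finishes the construction directly. Here $\chi$ is a cutoff vanishing near the antipode and equal to $1$ outside the cap. This works uniformly on all of $I_+$, so you do not even need the paper's reduction to the connected components of $I_+$.
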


\begin{proof} 
    The required smoothness of $\gw$ at $r=0$ was  proven in \cite[Thm.~3.1]{Sanchez2023}. The singularity at $r=\pi/\sqrt{k(t)}$ (see Prop.~\ref{prop:kwar_sing}) makes $\gw$ $C^2$-inextensible therein.

    When $I_+$ is an interval, the smoothenings with the stated properties (a) and (b) were carefully constructed in \cite[Sec. 4]{Sanchez2023}. Otherwise, as $I_+$ is open it can be written as the union of open disjoint intervals the procedure can be applied to each of them. 
\end{proof}
\noindent Notice that $\mathcal{U}_ \pi$ can be regarded as ``small'' and irrelevant for our purposes.\footnote{\label{foot_simplification_causality}
    Notice that, in a globally hyperbolic spacetime with a temporal function $t$, any Lorentzian perturbation of the metric with compact support $K$ which preserves $t$ as a temporal function is still globally hyperbolic; moreover, if $t$ was Cauchy for the original metric then so it is for the perturbed one. This can be proved by using the properties explained at the beginning of Sec.~\ref{Sec:globalhyp}. Namely, as the cones of both metrics agree outside $K$, $J(K,K)$ is the same set for both the original and the perturbed metrics and, thus, it is compact.
    }

\begin{remark}\label{r_3metrics1}
The previous proof shows $C^2$-inextendibility, but the following computations deepen in the model by proving even $C^1$-inextendibility.
\end{remark}
Replace $r$ by the coordinate $\hat{r}=\sqrt{k(t)} \, r$ ($<\pi$) in $\uwar\cap (I_+\times \mathbb{R}^n)$, which was introduced in \eqref{eq:Rkwar_kpositive}. Then $\dd\hat{r}=\big(\partial_t\sqrt{k(t)}\big) r \, \dd t + \sqrt{k(t)}\, \dd r$ and
\begin{equation}
    \dd r^2= \frac{1}{k(t)} \dd\hat{r}^2  +  \frac{\dot{k}(t)^2}{4k(t)^3} \hat{r}^2 \dd t^2  - \frac{\dot{k}(t)}{k(t)^2} \hat{r} \dd t \dd\hat{r} , 
    \quad S_{k(t)}(r)=\frac{1}{\sqrt{k(t)}} \sin (\hat{r}) . 
\end{equation}
In these coordinates, 
\begin{align}
    \gw & =-\dd t^2+ \frac{1}{k(t)}\left(\dd\hat{r}^2+ \sin^2(\hat{r}) \tengS\right) \nonumber \\
    &\quad + \frac{\dot{k}(t)^2}{4k(t)^3} \hat{r}^2 \dd t^2  - \frac{\dot{k}(t)}{2 k(t)^2} \dd(\hat{r}^2) \dd t ,
    \qquad \qquad 0<\hat{r} <\pi \,.\label{e_sutil}
\end{align}
The first line clearly extends smoothly to $\hat{r}=\pi$ (for the same reason it does at $\hat{r}=0$) and, so, on $\mathbb{S}^n$. For the second line,  the function $\hat{r}^2$ corresponds to the square of the intrinsic distance in $\mathbb{S}^n$ to a selected point $p_0 \in \mathbb{S}^n$, which is known to be smooth at $\hat{r}=0$ (as it corresponds with the standard function $\sum_i (x^i)^2$ in normal coordinates at $p_0$), but it is not smooth in any slice $t=t_0$ at the antipodal point $-p_0$ (i.e., when  $\hat{r}= \pi$). In general, the distance to a point fails to be smooth at zero and at the cut or conjugate points along radial geodesics; for the squared distance, smoothness is recovered at zero but not at the other points. In our specific case, given a unit sphere $\mathbb{S}^n(1) \subset \mathbb{R}^{n+1}$, the distance between two points  $p,q\in \mathbb{S}^n(1)$ is $d(p,q)=\arccos(p\cdot q)$ where $ \cdot$ denotes the usual scalar product of $\mathbb{R}^{n+1}$. The derivative of $\arccos^2(x)$  is $-2\arccos(x)/\sqrt{1-x^2}$, which diverges at $-1$ (i.e., when $d^2=\pi^2$); notice that this derivative is smooth at 1 (use L'H{\^o}pital's rule) and, thus, so is $d^2$  when $d=0$.

\subsection{The \texorpdfstring{$k(t)$}{k(t)}-conformal metric}

\begin{proposition}\label{p_3metrics2}
    The  metric $\gc$ extends smoothly to $r=0$ independently of the sign of $k(t)$, and to $r=\infty$ when $k(t)>0$, that is, in $I_+\times \mathbb{S}^n$. So, the domain of the metric $\gc$ will be topologically equivalent to
    \[
        (I\times \mathbb{S}^n) \setminus (I_{\leq 0}\times \{x_\infty\}) \quad \cong \quad (I_+\times \mathbb{S}^n) \cup (I_{\leq 0}\times \mathbb{R}^n)
    \]
    where $x_\infty$ is a point in $\mathbb{S}^n$ at infinity (i.e, at each $t$-slice with  $k(t)>0$,  $x_\infty$ becomes the point  $r=\infty$ which compactifies $\mathbb{R}^n$ by sterographic projection).
\end{proposition}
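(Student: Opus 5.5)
The plan is to pass to Cartesian coordinates $x^\iEi$ on $\mathbb{R}^n$ and, for the point at infinity, to inverted Cartesian coordinates. First, at $r=0$: since $\dd r^2+r^2\tengS=\delta_{\iEi\iEj}\dd x^\iEi \dd x^\iEj$ is the Euclidean metric, we get
\begin{equation*}
    \gc=-\dd t^2+\frac{4}{(1+k(t)|x|^2)^2}\,\delta_{\iEi\iEj}\dd x^\iEi \dd x^\iEj .
\end{equation*}
Here $|x|^2=\sum_\iEi (x^\iEi)^2$ is a polynomial, so the conformal factor is a smooth positive function of $(t,x)$ wherever $1+k(t)|x|^2>0$. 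This region is an open set containing $I\times\{0\}$ and includes all of $\ucon$. Hence $\gc$ extends smoothly to $r=0$ for any sign of $k(t)$. There are no cross terms in $\dd t\,\dd x$, in contrast with the warped case, because $k(t)$ enters only through a scalar factor and never through the coordinates.

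Next, at $r=\infty$ on $I_+$: I would use the inversion $y^\iEi=x^\iEi/|x|^2$, so that $\rho:=|y|=1/r$ and $\dd r^2+r^2\tengS=\rho^{-4}(\dd\rho^2+\rho^2\tengS)$. This gives
\begin{equation*}
    \frac{4}{(1+k r^2)^2}\left(\dd r^2+r^2\tengS\right)=\frac{4}{\rho^4(1+k\rho^{-2})^2}\,\delta_{\iEi\iEj}\dd y^\iEi \dd y^\iEj=\frac{4}{(\rho^2+k)^2}\,\delta_{\iEi\iEj}\dd y^\iEi \dd y^\iEj .
\end{equation*}
Again the map $t\mapsto t$, $x\mapsto y$ involves no $t$, so $\dd t$ is unchanged and no mixed terms appear. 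The factor $4/(|y|^2+k(t))^2$ is smooth and positive near $y=0$ precisely when $k(t)>0$. So on $I_+$ the metric extends smoothly across $y=0$, the point $x_\infty$. For each fixed $t$ this is exactly the stereographic chart of the round sphere of curvature $k(t)$; the two charts $x$ and $y$ are the standard atlas of $\mathbb{S}^n$, applied uniformly in $t$.

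Finally, the topology. For $t\in I_{\leq0}$ the slice is the ball $r<1/\sqrt{-k}$ (or all of $\mathbb{R}^n$ when $k=0$), which is diffeomorphic to $\mathbb{R}^n$. For $t\in I_+$ the slice is $\mathbb{R}^n\cup\{x_\infty\}\cong\mathbb{S}^n$. Openness of the union is the one point that needs care: the domain is a set of the form $\{(t,p): \ldots\}$ inside $I\times\mathbb{S}^n$ defined by $1+k(t)|x|^2>0$ in the $x$ chart and $|y|^2+k(t)>0$ in the $y$ chart, and both conditions are open. The only remaining step, mild but the main technical point, is to identify this domain with $(I\times\mathbb{S}^n)\setminus(I_{\leq0}\times\{x_\infty\})$. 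The shrinking balls for $k<0$ are not all of $\mathbb{R}^n$, so this identification only holds up to a fibrewise diffeomorphism. I would realize it by a $t$-dependent radial rescaling $r\mapsto r\sqrt{\max(0,-k(t))}$ composed with a fixed diffeomorphism from the ball to $\mathbb{R}^n$. The rescaling has to be smoothed near $k=0$, for example by replacing it with a smooth radial diffeomorphism depending smoothly on $t$ that maps the region $\{1+k(t)r^2>0\}$ onto the target. Since the statement only claims topological equivalence, a homeomorphism suffices, and that is straightforward.
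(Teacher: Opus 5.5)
Your argument is correct and is essentially the paper's. Cartesian coordinates handle $r=0$, and the inversion $\bar{r}=1/r$ turns the spatial part into $4(\bar{r}^2+k(t))^{-2}(\dd\bar{r}^2+\bar{r}^2\,\tengS)$, which is smooth at $\bar{r}=0$ when $k(t)>0$; your inverted Cartesian coordinates $y=x/|x|^2$ simply make the paper's ``clearly smooth'' explicit. The paper does not argue the topological identification at all, and your explicit rescaling $r\mapsto r\sqrt{\max(0,-k(t))}$ is wrong as written, since it collapses every slice with $k(t)\geq 0$ to a point. The map $(t,x)\mapsto\big(t,\,x/\sqrt{1+\min(k(t),0)\,|x|^2}\big)$, extended by the identity at $x_\infty$ over the open set $I_+$, gives the required homeomorphism directly.
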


\begin{proof}
The smoothness of the metric at $r  = 0$ can be immediately checked by changing to Cartesian coordinates in Eq.~\eqref{e_ucon} as it was done in~\cite[Proof of Thm. 3.1]{Sanchez2023}.  

To check that for $k(t)>0$  the metric extends smoothly to $r=\infty$, 
consider  $\bar{r}=1/ r$ as a new coordinate whenever $ r>0$. Then
\begin{align}
\label{e_g3}
    \gc & =  -\dd t^2 + 4\left(1+\frac{k(t)}{\bar{r}^2}\right)^{-2} \left(\frac{1}{\bar{r}^4}\dd\bar{r}^2+ \frac{1}{\bar{r}^2} \tengS\right) \nonumber \\ 
    & =  -\dd t^2 + \frac{4}{\left(\bar{r}^2+k(t)\right)^2} \left(\dd\bar{r}^2+ \bar{r}^2 \tengS\right),
\end{align}
which is clearly smooth at $\bar{r}=0$.
\end{proof}

These properties make the $k(t)$-conformal case simpler from the technical viewpoint.

\subsection{The \texorpdfstring{$k(t)$}{k(t)}-radial metric}

\begin{proposition}\label{p_3metrics3}
    The  metric $\gr$ can be extended smoothly to $r=0$, but it cannot be extended smoothly to $r=1/\sqrt{k(t)}$ whenever  $k(t)>0, \dot{k}(t)\neq 0$. 
\end{proposition}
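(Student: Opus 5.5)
The plan has two parts, mirroring the proofs of Props.~\ref{p_3metrics1} and \ref{p_3metrics2}: smoothness at the centre $r=0$, then inextendibility at the boundary $r=1/\sqrt{k(t)}$.

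\textbf{Smoothness at $r=0$.} Pass to Cartesian coordinates $x^\iEi$ on $\mathbb{R}^n$ with $r^2=\sum_\iEi (x^\iEi)^2$. Using $r^2\tengS=\delta_{\iEi\iEj}\dd x^\iEi \dd x^\iEj - \dd r^2$ and $r\,\dd r = x_\iEi \dd x^\iEi$, the spatial part becomes
\begin{equation*}
    \frac{\dd r^2}{1-k(t)r^2}+r^2\tengS = \delta_{\iEi\iEj}\dd x^\iEi \dd x^\iEj + \frac{k(t)}{1-k(t)r^2}\, x_\iEi x_\iEj\, \dd x^\iEi \dd x^\iEj ,
\end{equation*}
since $\frac{1}{1-kr^2}-1=\frac{kr^2}{1-kr^2}$ and $r^2\dd r^2=(x_\iEi \dd x^\iEi)^2$. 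The coefficient $k(t)/(1-k(t)r^2)$ is a smooth function of $(t,r^2)$ near $r=0$, so all components are smooth in $(t,x^\iEi)$. The matrix is $\delta_{\iEi\iEj}$ plus a rank-one term, hence positive definite there. This is the same argument as \cite[Thm.~3.1]{Sanchez2023}, and I expect no difficulty.

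\textbf{Inextendibility at $r=1/\sqrt{k(t)}$.} Take a time interval with $k(t)>0$ and $\dot k(t)\neq 0$. By Prop.~\ref{prop:krad_sing}(1), the Ricci scalar \eqref{Eq:RicciScal_krad} and $C_{abcd}C^{abcd}$ in \eqref{eq:CC_kradial} diverge as $r\to 1/\sqrt{k(t)}$. The square bracket in \eqref{eq:CC_kradial} is dominated by $\tfrac{3r^4\dot k^2}{2(1-kr^2)^2}$, which blows up. A $C^2$ (in particular smooth) extension of the metric across any boundary point would make these scalar invariants bounded near that point, a contradiction. So no smooth, indeed no $C^2$, extension exists.

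\textbf{Main obstacle.} The delicate point is making precise what ``extension'' means, since the boundary $r=1/\sqrt{k(t)}$ moves with $t$. I would argue as follows. Any extension is a spacetime containing $\urad$ isometrically, in which some point lies in the closure of $\urad$ and is approached along a curve $r\to 1/\sqrt{k(t)}$ with $t\to t_0$ in the interval. Along any such curve the scalar blows up, because the divergence is uniform in $t$ on compact subintervals where $\dot k\neq 0$.

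To mirror Rem.~\ref{r_3metrics1}, one can also note where the non-smoothness comes from. With $\hat r=\sqrt{k}\,r=\sin\chi$, the metric $\dd\hat r^2/(k(1-\hat r^2))$ becomes regular in $\chi$ at $\chi=\pi/2$. However, the cross term $\propto \dot k\,\hat r\,\dd t\,\dd\hat r/(1-\hat r^2)$ and the $\dd t^2$ term $\propto \dot k^2\hat r^2/(1-\hat r^2)$ blow up there. This extra observation is optional; the curvature argument suffices.
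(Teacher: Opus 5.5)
Your proposal is correct. The inextendibility part coincides with the paper's, which simply invokes the curvature blow-up of Prop.~\ref{prop:krad_sing}(1). Your observation that the divergence (driven by the term $3r^4\dot k^2/(2(1-kr^2)^2)$) is uniform for $t$ in a compact subinterval with $\dot k\neq 0$ makes explicit a point the paper leaves implicit.

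The routes differ at $r=0$. The paper works on $I_+$ and substitutes $\hat r=\sqrt{k(t)}\,r$ and then $\bar r=\arcsin\hat r$. This writes $\gr$ as a round metric of radius $1/\sqrt{k(t)}$ plus correction terms proportional to $\tan^2\bar r\,\dd t^2$ and $\tan\bar r\,\dd t\,\dd\bar r$ (Eq.~\eqref{e_g_r}). Smoothness at $\bar r=0$ then follows from the evenness of $\tan\bar r/\bar r$ and the smoothness of $\bar r^2$. You instead write the spatial part in Cartesian coordinates as $\delta_{\iEi\iEj}+\frac{k}{1-kr^2}x_{\iEi}x_{\iEj}$, which is manifestly smooth and positive definite near $x=0$ with no change of radial variable.

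Your argument is more elementary and is uniform in the sign of $k(t)$. In particular it covers zeros of $k$, where $\hat r=\sqrt{k}\,r$ is unavailable, so it directly justifies the paper's claim that smoothness at the origin holds ``independently of the sign of $k(t)$''. What the paper's route buys is the representation \eqref{e_g_r} itself. It locates the obstruction at the equator $\bar r=\pi/2$ in the divergent $\tan\bar r$ terms. It is also the starting point for the causal analysis that follows (Rem.~\ref{r_3metrics3}, the value $\bar r_\star(t)$, Lem.~\ref{l_3metrics3}), which your optional last paragraph only sketches.
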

\begin{proof}
  We only need to prove extendibility at $r=0$, since inextendibility at $r = 1/\sqrt{k(t)}$ has already been established in  Prop.~\ref{prop:krad_sing}(1). Consider the change $\hat{r}=\sqrt{k(t)} \; r$ in $\urad\cap (I_+\times \mathbb{R}^n)$ with $\hat{r}\in [0,1]$. Reasoning as in Rem.~\ref{r_3metrics1},
\begin{align}
    \gr & =-\dd t^2+ \frac{1}{k(t)} \left(\frac{\dd\hat{r}^2}{1-\hat{r}^2} + \hat{r}^2 \tengS\right) \nonumber \\
    & \quad +  \frac{\dot{k}(t)}{4 k(t)^2} \frac{1}{1-\hat{r}^2} \left( \frac{\dot{k}(t)}{k(t)} \hat{r}^2 \dd t^2  -  4\hat{r} \dd t \dd\hat{r} \right),
    \qquad\qquad 0<\hat{r} <1 .
\end{align}
Taking now $\bar{r}=\arcsin (\hat{r})$, so that $\dd\bar{r}=\dd\hat{r}/\sqrt{1-\hat{r}^2}$ and $\hat{r}^2/(1-\hat{r}^2)=\tan^2 (\bar{r})$ we obtain
\begin{align}
    \gr & =-\dd t^2+ \frac{1}{k(t)} \left(\dd\bar{r}^2+ \sin^2(\bar{r}) \tengS\right) \nonumber\\
        & \quad +  \frac{\dot{k}(t)}{4 k(t)^2} \left( \frac{\dot{k}(t)}{k(t)} \tan^2(\bar{r}) \dd t^2  -  4\tan(\bar{r}) \dd t \dd\bar{r}\right),\qquad 0<\bar{r} <\pi/2. \label{e_g_r}
\end{align}
To check that $\gr$ is smoothly extensible to $\bar{r}=0$,  the first line of \eqref{e_g_r} is clearly smooth  and, in the second one, $2\tan(\bar{r}) \dd\bar{r}=(\tan(\bar{r})/\bar{r}) \dd(\bar{r}^2)$, where both $\tan(\bar{r})/\bar{r}$ (whose power series only has even powers of $\bar{r}$) and $\bar{r}^2$ are smooth at $\bar{r}=0$, independently of the sign of $k(t)$.
\end{proof}
 
\begin{remark}\label{r_3metrics3} 
    The metric $\gr$ exhibits a striking difference compared with the previous warped and conformal cases. The range of the coordinate $r \in [0, 1 /\sqrt{k(t)})$ covers only half of a sphere. In the case of constant $k$, the metric can be extended across the equator $r = 1/\sqrt{k}$ to the whole sphere yielding the usual radial representation of a FLRW. However, when $\dot{k} \neq 0$, the smooth extension of the metric across the equator fails. Notice that this obstruction occurs along the entire equator of the sphere, rather than at a single point as in the previous cases.

    Therefore, the case $\dot{k}(t)\neq 0$ affects the smooth extension of the metric to an equator, rather than a point, in striking difference with the previous  warped and conformal cases.
     
    What is more, when $\dot{k}(t)\neq 0$, in the coordinates $\bar t= t, \bar{r}= \arcsin(\sqrt{k(t)}\, r)$,  $\partial_{\bar t}$  becomes spacelike close to the boundary $\bar{r}=\pi/2$ of the half sphere and, so, $\partial_{\bar t}$ cannot represent observers. Indeed, $r=1/\sqrt{k(t)}$ will behave as a spacelike hypersurface from the viewpoint of the causal and conformal boundaries (Lem.~\ref{l_3metrics3}). This has a dramatic impact for the global properties of the metric, see Prop.~\ref{p_radial} below.   
\end{remark}

\begin{remark}\label{rem:krad_topology}
    A direct consequence of the previous remark is that the spatial $t$-slices of the $k(t)$-radial spacetime with $k(t)>0$ are homeomorphic to $\mathbb{R}^{n}$. This means that a sign change in $k(t)$ does not correspond to a change in the topology of the constant-$t$ hypersurfaces for this spacetime. Notice that the same is not true for the $k(t)$-warped and the $k(t)$-conformal metrics. Hence, there will be less restrictions on the number of sign changes for $k(t)$ that are compatible with global hyperbolicity in the $k(t)$-radial case (see Rem.~\ref{rem:krad_otherGHcases}).
\end{remark}

Specifically, the expression in  parentheses at the first line of \eqref{e_g_r} is just the metric of a round sphere of radius one, and the range $0\leq \bar{r} \leq \pi/2$ gives a closed  half sphere. The terms in $\dd t^2$ and the cross terms $\dd t \dd\bar{r}$ do not affect the intrinsic geometry of the slice and, so, it becomes a half sphere of $t$-dependent radius. 
 
The second line, however, provides terms in $\tan(\bar{r})$ which diverge at $\bar{r}=\pi/2$ (if $\dot{k}\neq 0$). The coefficient of $\dd t^2$ in this line diverges positively, making $\partial_{\bar t}$ spacelike for $\bar{r}$ bigger than some $\bar{r}_\star(t)>0$ (while the cross terms preserve the Lorentzian character of the metric\footnote{
    Notice that, in these coordinates, the determinant of the part of the metric corresponding to  $(t,\bar{r})$ is always $-1/k(t)$.
    }).
More precisely, the vector $\partial_{\bar t}$ becomes lightlike at 
\begin{equation}
    \bar{r}_\star(t)= \arctan\left\vert 2\frac{k(t)^{3/2}}{\dot{k}(t)}\right\vert\,,
    \label{g_epsilon}
\end{equation}
and spacelike when $\bar{r}>  \bar{r}_\star(t)$. Let $\bar{r}_\epsilon :=\frac{\pi}{2}-\epsilon$, then each hypersurface $H_{\epsilon}:=\{\bar{r} =\bar{r}_\epsilon \}$
inherits the metric
\begin{equation}
    \gr_\epsilon= \left(-1+ \frac{\dot{k}(t)^2}{4 k(t)^3} \tan^2(\bar{r}_{\epsilon})\right) \dd t^2+ \frac{1}{k(t)} \sin^2(\bar{r}_\epsilon) \tengS \,,    
    \label{Eq:grepsilon}
\end{equation}
which is positive definite whenever $\bar{r}_\epsilon > \bar{r}_\star(t)$.  Notice that the whole region $\bar{r}<\pi/2$ can be regarded as the limit of the regions $\bar{r}_\epsilon <\pi/2$ when $\epsilon\searrow 0$ (see Fig.~\ref{fig:krad_Hepsilon}) and that the function $\tan(\bar{r}_{\epsilon})$ becomes unbounded when $\bar{r}_{\epsilon} \rightarrow \pi/2$. Summing up, we have proved the following result to be used later\footnote{
    Recall two Riemannian metrics $\teng_1$ and $\teng_2$ satisfy $\teng_1 > \teng_2$ if the norm of any non-zero vector with respect to $\teng_1$ is strictly larger than its norm with respect to $\teng_2$.}
\begin{figure}
    \centering
    \includegraphics[scale=1.4]{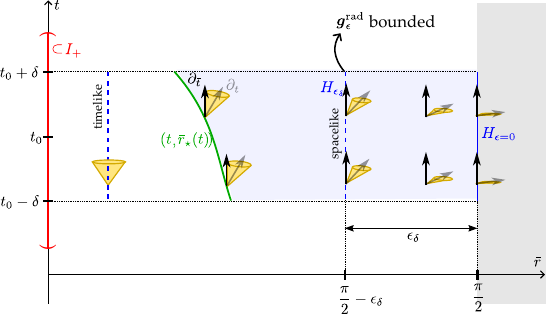}
    \caption{Causal structure of the $k(t)$-radial spacetime with $k(t)>0$.}
    \label{fig:krad_Hepsilon}
\end{figure}

\begin{lemma}\label{l_3metrics3}
    For the $k(t)$-radial spacetime, the boundary $\bar{r}=\pi/2$ for the region $I_+\times \{\bar{r}<\pi/2\}$ (included in $\urad \subset I\times \mathbb{R}^n$) is spacelike in the following sense: for each $t_0\in I_+$ and $\delta>0$ such that $[t_0-\delta,t_0+\delta]\subset I_+$, there exists $\epsilon_\delta>0$ such that the hypersurface $H_\epsilon (=\{\bar{r} =\bar{r}_\epsilon \})$ in $[t_0-\delta,t_0+\delta]\times \{\bar{r}<\pi/2\}$ endowed with the induced metric $\gr_\epsilon$ above is spacelike and satisfies
    \begin{equation}
        \gr_\epsilon > \dd t^2 + \frac{1}{k_\delta}\sin^2(\bar{r}_{\epsilon_\delta}) \tengS, \qquad \forall \epsilon\in (0,\epsilon_\delta) \,, \label{ineq}
    \end{equation}
    where $k_\delta$ is the maximum of $k([t_0-\delta,t_0+\delta])$.
\end{lemma}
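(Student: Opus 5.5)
The plan is to read everything off directly from the explicit expression \eqref{Eq:grepsilon} for the induced metric $\gr_\epsilon$ on $H_\epsilon$, using compactness of $[t_0-\delta,t_0+\delta]$ to obtain uniform bounds. Throughout, I take the standing hypothesis of this subsection, $\dot{k}\neq 0$, to hold on $[t_0-\delta,t_0+\delta]$. This is indispensable: at an instant where $\dot{k}=0$ the $\dd t^2$ coefficient in \eqref{Eq:grepsilon} equals $-1$, so $H_\epsilon$ would be timelike there. By continuity and compactness, the function
\[
    \mu(t):=\frac{\dot{k}(t)^2}{4k(t)^3}
\]
then attains a strictly positive minimum $m_\delta>0$ on the interval. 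Indeed, $\dot k^2$ is bounded below by a positive constant there, and $k^3$ is bounded above by $k_\delta^3$ with $k_\delta>0$, since the interval lies in $I_+$.

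\textbf{Choice of $\epsilon_\delta$.} Since $\bar r_\epsilon=\pi/2-\epsilon$, we have $\tan(\bar r_\epsilon)=\cot\epsilon$, which is strictly decreasing in $\epsilon$ and tends to $+\infty$ as $\epsilon\searrow 0$. I would pick $\epsilon_\delta\in(0,\pi/2)$ such that $m_\delta\cot^2\epsilon_\delta>2$. Then for every $\epsilon\in(0,\epsilon_\delta)$ and every $t\in[t_0-\delta,t_0+\delta]$,
\[
    -1+\mu(t)\tan^2(\bar r_\epsilon)\;\geq\; -1+m_\delta\cot^2\epsilon\;>\;-1+m_\delta\cot^2\epsilon_\delta\;>\;1 .
\]
This shows that the $\dd t^2$ coefficient of $\gr_\epsilon$ exceeds $1$; in particular $\bar r_\epsilon>\bar r_\star(t)$, so $H_\epsilon$ is spacelike there.

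\textbf{Spherical part.} Because $\sin$ is increasing on $(0,\pi/2)$ and $\bar r_\epsilon>\bar r_{\epsilon_\delta}$ for $\epsilon<\epsilon_\delta$, and because $k(t)\le k_\delta$, we get
\[
    \frac{\sin^2(\bar r_\epsilon)}{k(t)}\;\ge\;\frac{\sin^2(\bar r_\epsilon)}{k_\delta}\;>\;\frac{\sin^2(\bar r_{\epsilon_\delta})}{k_\delta}.
\]
To conclude the strict inequality \eqref{ineq}, take a nonzero tangent vector $v=\alpha\,\partial_t+w$ with $w$ tangent to the sphere. Since $\gr_\epsilon$ has no cross terms, the difference of the squared norms splits as
\[
    \Big(c_\epsilon(t)-1\Big)\alpha^2+\Big(\tfrac{\sin^2\bar r_\epsilon}{k(t)}-\tfrac{\sin^2\bar r_{\epsilon_\delta}}{k_\delta}\Big)\tengS(w,w),
\]
where $c_\epsilon(t)$ denotes the $\dd t^2$ coefficient of $\gr_\epsilon$. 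Both brackets are strictly positive, and at least one of $\alpha$, $w$ is nonzero, so the difference is strictly positive.

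The only genuinely delicate point is the uniform positive lower bound on $\dot{k}^2/(4k^3)$, which is what allows a single $\epsilon_\delta$ to work for the whole time interval. This bound relies both on compactness and on $\dot k$ not vanishing on the closed interval; the latter is exactly the non-degeneracy hypothesis of Prop.~\ref{p_3metrics3}, and I would state it explicitly in the proof. Everything else is a monotonicity argument.
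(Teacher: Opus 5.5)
Your proof is correct and follows the paper's own argument: the lemma is read off from the explicit induced metric \eqref{Eq:grepsilon}. The divergence of $\tan(\bar r_\epsilon)$ makes the $\dd t^2$ coefficient blow up, while the spherical factor is bounded below using the monotonicity of $\sin$ on $(0,\pi/2)$ and $k\le k_\delta$. Your compactness argument giving a uniform positive lower bound for $\dot k^2/(4k^3)$, together with the explicit hypothesis $\dot k\neq 0$ on $[t_0-\delta,t_0+\delta]$, makes rigorous what the paper leaves implicit; the lemma's statement omits that hypothesis, but the surrounding discussion assumes it, and the lemma fails without it.
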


\begin{remark}\label{r_conformalnoboundary}
    Notice that $\delta$ and, thus, $k_\delta$ are fixed here and, using \eqref{g_epsilon},  the inequality \eqref{ineq} for  $\gr_\epsilon$ can be sharpened. As a matter of fact, the coefficient which multiplies $\dd t^2$ in Eq.~\eqref{Eq:grepsilon} for each $t$ diverges when $\epsilon\searrow 0$, but the factor of $\tengS$ converges to $1/\sqrt{k(t)}$, which indicates  that the spacetime cones collapse in the direction of $\partial_t$ and remain transverse to $H_\epsilon$ in a uniform way. This is consistent with the divergence of  $C_{abcd}C^{abcd}$, which implies that  no  conformal factor transforming $H_{\epsilon=0}$ in a (non-degenerate) conformal boundary can exist. However, in a natural way the points of $H_{\epsilon=0}$ belong to the causal boundary and are non-timelike (see footnote~\ref{foojpaj} below).
\end{remark}

\begin{remark}\label{rem:a_effect}
    Notice that adding a conformal factor, in particular the $a(t)$ factor (see Eq.~\eqref{Eq:a_conformalfactor}) can change the causal character of the boundary, since it can become lightlike. However, it still behaves as a conformal boundary in the sense described above.
\end{remark}

From now on, we work with the maximal extensions of the metrics, as established in this section.

\section{Global causal properties of the \texorpdfstring{$k(t)$}{k(t)} spacetimes}
\label{Sec:globalhyp} 

As usual, $J^+(p), J^-(p)$ will denote, respectively, the causal future and past of $p$, and $J(p,q):=J^+(p)\cap  J^-(q)$, which extends to subsets $K,K'$ as $J(K,K'):= \cup_{p\in K,q\in K'} J(p,q)$. In addition, we use $p\leq q$ to indicate that the event $p$ is in the causal past of $q$ (i.e., $q\in J^+(p)$). A spacetime is called \emph{globally hyperbolic} when all the $J(p,q)$ are compact and it has no closed causal curves.

The $k(t)$ spacetimes always admit the temporal function $t$, so they are stably causal. The following well-known facts for such spacetimes will be used with no further mention (details can be found, for example, in \cite{MinSan}):
\begin{itemize}
    \item[(a)] they are strongly causal (i.e., any point has a neighborhood whose intrinsic causality matches the restriction of the causality of the spacetime),

    \item[(b)] in order to check global hyperbolicity, it is enough to show that  each $J(p,q)$ is included in a compact subset of $t^{-1}([t(p),t(q)])$ (then, it turns that $J(K, K')$ is compact whenever $K$ and $K'$ are compact),\footnote{
        This comes from an original result by Beem and Ehrlich,  see  \cite[Lem.~4.29]{Beem}, which is applicable here using item (a).
        }

    \item[(c)] the $k(t)$ spacetimes (or regions included in them)\footnote{
        Notice that the notions of global hyperbolicity and Cauchy hypersurfaces are extended to closed subsets of our spacetimes directly.
    }
    are globally hyperbolic if their cones are narrower than the cones of a globally hyperbolic spacetime; in this case, the Cauchy hypersurfaces of the latter spacetime are also Cauchy for the original one.
\end{itemize}

The following sufficient condition  will be used several times to prove that global hyperbolicity can hold under spatial curvature change.
\begin{lemma}\label{l_marc}
Let $t$ be a  temporal function on a spacetime $\M$ and $t_0\in \mathbb{R}$.
\begin{itemize}
    \item[(1)] If both subsets, $t^{-1}((-\infty,t_0))$ and $t^{-1}([t_0, \infty))$ are globally hyperbolic and the $t$-slices $t=t_* \geq t_0$ are Cauchy for  the second one, then the whole spacetime $\M$ is globally hyperbolic and the Cauchy hypersurfaces for the first subset are also Cauchy hypersurfaces for  $\M$. Moreover, when the slices $t=t_* <t_0$ are compact then they are Cauchy for the first subset (and thus for $\M$).

    \item[(2)] If $t_-\leq t_0 \leq t_+$ and the slice $t=t_0$ is a Cauchy hypersurface for  $t^{-1}((-\infty,t_+])$ and $t^{-1}([t_-,\infty))$, then this slice is  a Cauchy hypersurface for the whole spacetime $\M$ (which is therefore globally hyperbolic).
\end{itemize}
\end{lemma}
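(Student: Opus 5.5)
The plan is to work throughout with the criterion (b) stated above. Since the spacetime is stably causal (it admits the temporal function $t$), it suffices to show that for any $p\leq q$ the set $J(p,q)$ lies in a compact subset of $t^{-1}([t(p),t(q)])$. The key tool is the elementary fact that a future-directed causal curve crosses each $t$-slice at most once and increases $t$.

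For (1), write $\M_-:=t^{-1}((-\infty,t_0))$ and $\M_+:=t^{-1}([t_0,\infty))$, and take $p\leq q$ with a causal curve $\gamma$ from $p$ to $q$. If both points lie in $\M_-$ or both in $\M_+$, then $\gamma$ stays in that region, since $t$ is monotone along it. Hence $J(p,q)$ coincides with the causal diamond computed in the corresponding globally hyperbolic region. That diamond is compact there, and for $\M_+$, which is closed, it is compact in $\M$ too. For $\M_-$ one needs the diamond to be compact as a subset of $\M$; this holds because compactness is intrinsic, although one must check that causal relations in $\M_-$ and in $\M$ agree. That follows from the monotonicity of $t$. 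In the mixed case $p\in\M_-$, $q\in\M_+$, the slice $S_0=\{t=t_0\}$ is Cauchy for $\M_+$, so every causal curve from $p$ to $q$ meets $S_0$. Therefore
\begin{equation*}
J(p,q)\subset \bigl(J^+(p)\cap t^{-1}((-\infty,t_0])\bigr)\cup \bigl(J^-(q)\cap \M_+\bigr).
\end{equation*}
The second piece is $J(S_0,q)$ in $\M_+$. This set is compact, because in a globally hyperbolic spacetime the past of a point intersected with the future of a Cauchy surface is compact.

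The first piece is the main obstacle, because $\M_-$ is open and $S_0$ is not inside it. I would handle it using $q$. Pick $t_1<t_0$ close to $t_0$ and a point-free description: $J^-(q)\cap t^{-1}([t_1,t_0])$ is compact, either by a limit-curve argument in $\M_+$ extended slightly below $t_0$, or more simply because $J^-(q)\cap S_0$ is compact in $\M_+$. Then $J^+(p)\cap J^-(J^-(q)\cap S_0)$ should be compact: for $p\in\M_-$, apply the global hyperbolicity of $\M_-$ to the compact set $J^-(q)\cap S_0$ approached from below. I would argue with limit curves: a sequence in $J(p,q)\cap\M_-$ has causal curves from $p$; by strong causality these have a limit causal curve, and its endpoint cannot escape because it is trapped between $p$ and the compact set $J^-(q)\cap S_0$. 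If needed, I would replace $t_0$ by a slightly smaller value to get a closed region. Once $J(p,q)$ is compact for all $p,q$, $\M$ is globally hyperbolic. A Cauchy surface $\Sigma$ of $\M_-$ is then Cauchy in $\M$: an inextensible causal curve in $\M$, restricted to $\M_-$, is inextensible there. Indeed, it enters $\M_+$ through $S_0$ and extends to the past indefinitely, and $t$ must reach values below $t_0$ because the curve is inextensible in $\M_+$ and hits $S_0$. So it meets $\Sigma$ exactly once. For compact slices, a compact acausal slice of a globally hyperbolic spacetime obtained as a level set of a temporal function is Cauchy by the standard result that compact spacelike acausal hypersurfaces without boundary are Cauchy (Budic et al.).

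For (2), I would apply (1) twice, or argue directly. An inextensible causal curve $\gamma$ in $\M$ either meets $t^{-1}([t_-,t_0])$ or not. If it reaches values $\leq t_+$, its portion inside $t^{-1}((-\infty,t_+])$ is inextensible there, so it hits $\{t=t_0\}$. Otherwise $t>t_+\geq t_0$ along all of $\gamma$, which contradicts inextensibility in $t^{-1}([t_-,\infty))$, where the slice is Cauchy. Uniqueness follows from the monotonicity of $t$. Global hyperbolicity then follows from the existence of a Cauchy hypersurface.
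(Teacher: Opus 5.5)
Your same-region case, your argument that Cauchy hypersurfaces of $\M_-$ stay Cauchy in $\M$, the compact-slice claim and part (2) are fine and essentially follow the paper. The gap is at the step you yourself flag as the main obstacle: the mixed case $t(p)<t_0\le t(q)$ of (1). Write $K_0:=J^-(q)\cap\{t=t_0\}$. First, compactness of $K_0$ does not ``more simply'' give compactness of $J^-(q)\cap t^{-1}([t_1,t_0])$. That needs a genuine local argument, and it is valid only for $t_1$ close to $t_0$. For instance, cover $K_0$ by finitely many relatively compact charts in which $t$ is a coordinate and the causal cones have spatial slope bounded by a constant. Then past-directed causal curves from $K_0$ cannot leave these charts while $t$ decreases by a small $\epsilon$; this is the paper's local reasoning in convex neighborhoods. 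Second, and more seriously, you never produce a compact subset of $\M_-$ to which the global hyperbolicity of $\M_-$ can be applied. $K_0$ lies on $\{t=t_0\}$, which is disjoint from the open set $\M_-$, so ``approached from below'' has no content. Your limit-curve sketch is circular as written. A future-inextensible limit curve from $p$ with $t<t_0$ contradicts non-imprisonment only if you already know that $J^+(p)\cap J^-(K_0)\cap\{t<t_0\}$ is relatively compact, which is what has to be shown. Shifting $t_0$ downwards is not available either, since the hypotheses on the upper region are given only at $t_0$.

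The missing step is to pass to a slice that does lie in $\M_-$. Suppose $K_-:=J^-(K_0)\cap\{t\ge t_0-\epsilon\}$ is known to be compact, with $\epsilon$ small enough that also $t(p)<t_0-\epsilon$. Then $K_{-\epsilon}:=K_-\cap\{t=t_0-\epsilon\}$ is a compact subset of $\M_-$, so $J(p,K_{-\epsilon})$ is compact by global hyperbolicity of $\M_-$. Every causal curve from $p$ to $q$ crosses $\{t=t_0-\epsilon\}$ and then $\{t=t_0\}$, so $J(p,q)\subset J(p,K_{-\epsilon})\cup K_-\cup J(K_0,q)$, which is the paper's argument. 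Note that $J^-(q)\cap t^{-1}([t_1,t_0])$, which you already wrote down, is exactly this $K_-$; you only needed to intersect it with $\{t=t_1\}$.

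A limit-curve proof can also be made to work, but the contradiction must come from the other end. Extend the curves from $p$ through $z_j$ to points $w_j\in K_0$ with $w_j\to w$. If they do not converge, the two-sided limit curve theorem gives a past-inextensible causal curve $\mu$ ending at $w$, which apart from $w$ lies in $\M_-$. For any $w'\in\mu\setminus\{w\}$, the part of $\mu$ before $w'$ is then imprisoned in the compact set $J_{\M_-}(p,w')$, because the causal relation of $\M_-$ is closed. This contradicts strong causality.
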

\begin{proof} ~
\begin{itemize}
    \item[(1)] It is  straightforward  that  $J(p,q)$  is included in a compact set when both $p$ and $q$ lie  either in $t\geq t_0$ or in $t<t_0$. In case $t(p)<t_0\leq t(q)$ then $K:=J^-(q)\cap \{t\geq t_0\}$ is compact because the slice $\{t= t_0\}$ is Cauchy in this region; thus,  $K_0:=K\cap \{t= t_0\}$ is compact too.  Moreover,  $K_-:=J^-(K_0)\cap \{t \geq t_0 -\epsilon \}$ is also compact for small $\epsilon >0$ (the proof is reduced to a local reasoning in convex neighborhoods by using the compactness of $K_0$ and strong causality) and, then, so is $K_{-\epsilon}:=K_-\cap \{t= t_0-\epsilon \}$.  By the global hyperbolicity of the region $t<t_0$, $J(p,K_{-\epsilon})$ is compact too and, then,  $J(p,q)$ is included in the compact set $J(p,K_{-\epsilon})\cup K_- \cup J(K_0,q)$, as required. 
    
    For the assertion on Cauchy hypersurfaces, notice that, as $\{t= t_0\}$ is Cauchy for the second region, then any causal curve starting there will arrive at the first region and, thus, it will intersect all its Cauchy hypersurfaces. In case that the slices with $t<t_0$ are compact they must be Cauchy because so is any compact acausal spacelike hypersurface in a globally hyperbolic spacetime (as its projection on a Cauchy hypersurface must be a one-leaf covering).

    \item[(2)] Any inextensible causal curve on $\M$ must contain at least a point either in $t^{-1}((-\infty,t_+])$ or in $t^{-1}([t_-,\infty))$, since they cover $\M$. Due to the fact that $t=t_0$ is Cauchy for both regions, the curve intersects it at exactly one point. Therefore, $t=t_0$ is Cauchy for the entire $\M$.
\end{itemize}
\end{proof}

\subsection{The \texorpdfstring{$k(t)$}{k(t)}-warped cosmological metric}

The properties of the $k(t)$-warped  metric are studied in detail in \cite{Sanchez2023}, where the smoothenings in Prop.~\ref{p_3metrics1} were constructed by using an additional function $\varphi(t,\theta)$ which introduces cross terms between the time and space parts. Here, these techniques will be used and eventually can be simplified.

\begin{proposition} \label{t_warped} For any smoothening of $\gw$ as in Prop.~\ref{p_3metrics1} 
\begin{itemize}
    \item[(1)] If $k(t)\leq 0$, then the $k(t)$-warped spacetime is  globally hyperbolic  and all the (non-compact)  slices $t=$constant are Cauchy.

    \item[(2)] Let $t_0\in I$ and assume that  $k(t)> 0$ for $t<t_0$ and $k(t)\leq 0$ for $t\geq t_0$ (so that there are a topological and a sign curvature change of the slices at $t=t_0$).  Then, the spacetime is globally hyperbolic with compact Cauchy hypersurfaces, being the slices  with $t<t_0$  Cauchy (but not those with $t\geq t_0$). 
\end{itemize}
\end{proposition}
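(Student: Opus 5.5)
For (1) the plan is to compare cones with a product metric. With $a\equiv 1$ (as set in Sec.~\ref{Sec:extendibility}) and $k(t)\le 0$, each slice is $\mathbb{R}^n$ with the complete metric $\dd r^2+S_{k(t)}^2(r)\tengS$. Since $S_k(r)\ge r$ for $k\le 0$, the spatial metric dominates the Euclidean one: $\gw_t\ge \dd r^2+r^2\tengS$. Hence every $\gw$-causal vector is causal for $-\dd t^2+\dd r^2+r^2\tengS$, which is Minkowski restricted to $I\times\mathbb{R}^n$. That product is globally hyperbolic with the slices $t=$const as Cauchy hypersurfaces; this holds for any interval $I$, since a causal curve projects to a curve of Euclidean length at most $|\Delta t|$.

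By fact (c), the $k(t)$-warped spacetime is then globally hyperbolic, and its $t$-slices are Cauchy. The smoothening plays no role here because there is no antipodal point when $k\le 0$.

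For (2) I will apply Lem.~\ref{l_marc}(1) to the splitting $\{t<t_0\}$, $\{t\ge t_0\}$.

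\emph{The region $t\ge t_0$.} There $k\le 0$, and the argument of (1) applies verbatim on the closed region. It shows that this region is globally hyperbolic and that all slices $t=t_*\ge t_0$ are Cauchy for it.

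\emph{The region $t<t_0$.} Here the smoothened slices are spheres $\mathbb{S}^n$, which are compact. For any compact subinterval $[t_1,t_2]\subset(-\infty,t_0)$, the spatial metrics $\teng_t$ are uniformly bounded below by a fixed Riemannian metric $\teng_0$ on $\mathbb{S}^n$, by compactness. The cones are therefore narrower than those of $-\dd t^2+\teng_0$, so causal curves have bounded spatial length and $J(p,q)$ stays in a compact set $[t(p),t(q)]\times\mathbb{S}^n$. By fact (b) this region is globally hyperbolic. Its slices are compact acausal spacelike hypersurfaces, so Lem.~\ref{l_marc}(1) makes them Cauchy for the whole spacetime.

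The only care needed is that the smoothening introduces cross terms $\dd t\,\dd\theta$. These are handled by the same compactness argument, since the cones remain within the cones of $-\lambda\,\dd t^2+\teng_0$ for a suitable constant $\lambda$ on each compact time interval. Alternatively, footnote~\ref{foot_simplification_causality} applies locally.

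\emph{Slices with $t\ge t_0$ are not Cauchy.} These slices are noncompact ($\cong\mathbb{R}^n$), whereas all Cauchy hypersurfaces of a globally hyperbolic spacetime are homeomorphic to one another, and we have just found compact ones.

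The main obstacle I expect is the uniform cone comparison near $t\to t_0^-$, where $k\to 0^+$ and the antipodal point escapes to infinity. Lem.~\ref{l_marc} avoids this, since it only requires global hyperbolicity of the open region $t<t_0$, which is local in $t$ via fact (b).

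A subtle point remains in Lem.~\ref{l_marc}(1). That lemma's argument needs $J^-(q)\cap\{t\ge t_0\}$ to be compact, which requires the $t_0$-slice to be Cauchy in the upper region; that is exactly what (1) provides. I would double-check that the strong-causality step producing the set $K_-$ goes through even though the slices just below $t_0$ are spheres, while the slice at $t_0$ is $\mathbb{R}^n$ with a missing point.
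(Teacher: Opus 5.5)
Your proposal is correct and follows the paper's proof. Part (1) compares cones with a Minkowski slab using $S_k(r)\ge r$ for $k\le 0$. Part (2) applies Lem.~\ref{l_marc}(1), uses compactness of $t^{-1}([t(p),t(q)])\cong[t(p),t(q)]\times\mathbb{S}^n$ for the region $t<t_0$, and concludes with the compact-versus-noncompact homeomorphism argument.

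The extra steps you add for $t<t_0$ are unnecessary: the uniform lower bound $\teng_0$, the cone comparison, and the concern about cross terms. Since $t$ is temporal, $J(p,q)\subset t^{-1}([t(p),t(q)])$ already, and that set is compact, so fact (b) applies directly.
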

\begin{proof} ~
\begin{itemize}
    \item[(1)] The manifold is   $I\times\mathbb{R}^n$, which can be regarded as  a slab of  $\mathbb{L}^{n+1}$ with narrower cones (as $S_k(r)\geq r$ for $k\leq 0$).

    \item[(2)] The property just stated shows that $t^{-1}([t_0,\infty))$ is globally hyperbolic with Cauchy $t$-slices. For the region $t^{-1}((-\infty,t_0))$, global hyperbolicity holds because each $J(p,q)$ is included in the compact subset $t^{-1}([t(p),t(q)])$ (notice that this set has the topology of $[t(p),t(q)]\times \mathbb{S}^n$). So, the result follows from Lem.~\ref{l_marc} (1).
    
    The slices $t\geq t_0$ are not compact and, thus, non-homeomorphic to the Cauchy ones.
    
\end{itemize}
\end{proof}

\begin{figure}
    \centering
    \includegraphics[scale=1.4]{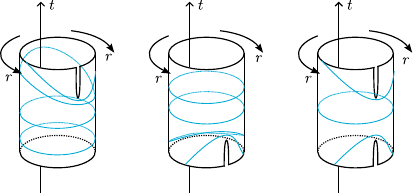}
    \caption{
    Cauchy slices for different $k(t)$-warped spacetimes with topology changes.
    }
    \label{fig:kwar_slices}
\end{figure}

\begin{remark} About the case (2) in Prop.~\ref{t_warped}: 
\begin{itemize}
    \item[(1)] Explicit slicings of the spacetime by (necessarily compact) Cauchy hypersurfaces  can be found. See Fig.~\ref{fig:kwar_slices} and also \cite[Fig. 2]{Sanchez2023}.

    \item[(2)] The curve $\big(t,r=\pi/\sqrt{k(t)}, \theta^A_0\big)$ for $t<t_0$ (and fixed angular coordinates $\theta^A_0$), corresponds to the \emph{singular comoving observer} $\gS_{\pi}$ in \cite{Sanchez2023}. This curve is timelike  therein as the metric is smoothened there. For $\gw$,  $\gamma_\pi$ is just the (smooth) curve where the expansion of the  slices $t=$ constant diverges Physically, $\gamma_\pi$  might be understood as the  blow up  of the expansion which generates the curvature  change. 

\end{itemize}
\end{remark}    

\begin{lemma}\label{lem:topo_restic_kwar}
    The $k(t)$-warped spacetime is \emph{not} globally hyperbolic  when $I_+$ is not an interval, that is, whenever there exists $t_-<t_0<t_+$ such that 
    \[
        k(t_-)>0, \qquad  k(t_0)\leq 0, \qquad k(t_+)>0.
    \]
\end{lemma}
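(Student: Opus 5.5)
We argue by contradiction, using the fact that all Cauchy hypersurfaces of a globally hyperbolic spacetime are homeomorphic, together with the topology of the $t$-slices. We work with the smoothened $k(t)$-warped spacetime (Prop.~\ref{p_3metrics1}). Its manifold is $(I_+\times\mathbb{S}^n)\cup(I_{\leq 0}\times\mathbb{R}^n)$. So the slices at $t_\pm$ are compact spheres and the slice at $t_0$ is $\mathbb{R}^n$.

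Assume the spacetime $\M$ is globally hyperbolic. The first step is to show that the compact slice $\Sigma_-:=\{t=t_-\}$ is a Cauchy hypersurface of $\M$.
\begin{itemize}
\item $\Sigma_-$ is a level set of the temporal function $t$, hence an acausal spacelike hypersurface.
\item It is compact, since $k(t_-)>0$.
\item As recalled in the proof of Lem.~\ref{l_marc}(1), a compact acausal spacelike hypersurface in a globally hyperbolic spacetime is Cauchy: its projection onto a Cauchy hypersurface along a timelike flow is a one-leaf covering, and hence a diffeomorphism.
\end{itemize}
The same argument applies to $\Sigma_+:=\{t=t_+\}$, so both are Cauchy hypersurfaces homeomorphic to $\mathbb{S}^n$.

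The second step derives the contradiction. Consider an inextensible future-directed causal curve $\gamma$ through a point $p\in\{t=t_0\}$, for instance the integral curve of $\partial_t$ at some radius $r$, with $r$ chosen away from the deformation region. Since $\Sigma_-$ is Cauchy, $\gamma$ must meet $\Sigma_-$. Because $t$ is strictly increasing along causal curves, this intersection lies at $t=t_-<t_0$, in the past of $p$.

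To contradict this, we look for a past-inextensible causal curve from some point with $t\ge t_0$ that never reaches $t=t_-$. The natural candidate is a comoving curve at large $r$ through a point $p$ of the slice $t=t_0$. Following it into the past, $k$ becomes positive again on $(t_-,t_0)$. The largest admissible $r$ there is $\pi/\sqrt{k(t)}$, which is finite, so at some $t_1\in(t_-,t_0]$ the curve runs into the removed point of the sphere (the singular comoving observer $\gamma_\pi$). Such a curve leaves the manifold without reaching $\Sigma_-$, and is therefore inextensible but misses a Cauchy hypersurface.

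Concretely, let $t_1:=\sup\{t<t_0: k(t)>0\}$ and take $r$ larger than $\pi/\sqrt{k(t)}$ at some time $t\in(t_-,t_1)$. A cleaner way to phrase the argument is that, near $t_1$ from below, the points with $r\to\infty$ at $t=t_1$ have no limit in the manifold, because $\pi/\sqrt{k(t)}\to\infty$ as $t\nearrow t_1$.

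The main obstacle is making this rigorous near $t_1$, where $k\searrow 0$. The approach we would try is to exhibit a sequence of points $q_j$ on the slice $t=t_0$ with $r_j\to\infty$ whose causal futures meet the compact set $J^-(\Sigma_+)\cap J^+(\Sigma_-)$. The compactness of $J(\Sigma_-,\Sigma_+)$, which follows from global hyperbolicity, would then force the $q_j$ to accumulate. This is impossible, since the slice $t=t_0$ is closed and non-compact and the $q_j$ leave every compact subset of it.

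To get the $q_j$ inside $J(\Sigma_-,\Sigma_+)$, note that any $q$ with $t(q)=t_0$ lies in $J^+(\Sigma_-)$ once $\Sigma_-$ is Cauchy: the past comoving curve from $q$ must hit $\Sigma_-$. Likewise $q\in J^-(\Sigma_+)$, since the future curve must hit $\Sigma_+$. Hence the whole non-compact closed slice $t=t_0$ lies inside the compact set $J(\Sigma_-,\Sigma_+)$, which is the desired contradiction.

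This last formulation bypasses the explicit singularity analysis entirely, so we would adopt it. It uses only two facts: compact spacelike slices are Cauchy (Lem.~\ref{l_marc}), and $J(K,K')$ is compact for compact $K,K'$ in a globally hyperbolic spacetime (item (b) above).
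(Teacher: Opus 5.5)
Your final argument is correct, but it takes a genuinely different route from the paper's.

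\textbf{Shared first step.} Both proofs begin by noting that the compact slice $\{t=t_-\}$ is a compact acausal spacelike hypersurface, hence Cauchy. You also apply this to $\{t=t_+\}$.

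\textbf{The paper's route.} It then argues purely topologically. Global hyperbolicity would force the manifold to be homeomorphic to $\mathbb{R}\times\mathbb{S}^n$. But $(I\times\mathbb{S}^n)\setminus(I_{\leq 0}\times\{x_\infty\})$ carries an extra puncture for each interior maximal interval of $I_{\leq 0}$. This is detected by $\pi_1$ for $n=1$ and by $n$-th cohomology for $n\geq 2$, a case the paper only sketches.

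\textbf{Your route.} You use causal compactness instead. Since $\Sigma_\pm$ are Cauchy and $t$ is temporal, every point with $t\in[t_-,t_+]$ lies in $J^+(\Sigma_-)\cap J^-(\Sigma_+)$. That set is compact, because $\Sigma_\pm$ are compact. Yet it contains the closed, non-compact slice $\{t=t_0\}\cong\mathbb{R}^n$, which is a contradiction.

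\textbf{What each buys.} Your version is more elementary and uniform in $n$. It uses only the slices at $t_-,t_0,t_+$, not the global homotopy type of the manifold. The paper's version shows the obstruction already lives in the topology of the manifold, which is not homeomorphic to $\mathbb{R}\times\mathbb{S}^n$ at all. It also counts how many punctures appear.

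\textbf{Two presentational fixes.}
\begin{itemize}
\item Drop the intermediate attempt with comoving curves at large $r$. In the smoothened spacetime the antipodal point belongs to the manifold, so such a curve does not ``leave the manifold without reaching $\Sigma_-$''. That claim is false as stated, although you discard it.
\item In the last step, replace ``the past comoving curve from $q$'' by ``any inextensible causal curve through $q$''. It meets the Cauchy hypersurface $\Sigma_-$ at a point with $t=t_-<t_0$. That point lies to the past of $q$ simply because $t$ is a temporal function for the smoothened metric. You do not need the coordinate field $\partial_t$ to be causal near the antipodal or smoothening region.
\end{itemize}
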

 \begin{proof}
    If the spacetime were globally hyperbolic, then the Cauchy hypersurfaces would be spheres (as so is $\{t=t_-\}$). Then, the spacetime must be topologically $\mathbb{R}\times \mathbb{S}^n$ and a contradiction will appear as follows. 

    In the case $n=1$, the spacetime would be a topological cylinder $\mathbb{R}\times \mathbb{S}^1$,   which is homeomorphic to a sphere minus two points and has  fundamental group  equal  to $\mathbb{Z}$. However, the existence of $t_-, t_0, t_+$ implies that the spacetime is homotopic to a sphere minus at least a third point (indeed, minus an additional  point for each maximal interval where $k\leq 0$) and, thus, its fundamental group must contain at least $\mathbb{Z} \oplus \mathbb{Z}$. 
    
    If $n\geq 2$ a similar reasoning follows by using the $n$-cohomology ring  (which is also homotopy invariant) for the product $\mathbb{R}\times \mathbb{S}^n$ with some points removed. 
  \end{proof}
  
We can now characterize when the $k(t)$-warped models are globally hyperbolic (see Fig.~\ref{fig:kwar_globhyp}),

\begin{thm} \label{thm:kwar_globhyp} 
    A $k(t)$-warped cosmological spacetime is globally hyperbolic if and only if $I_+$ (see \eqref{e_Ik0}) is an interval. That is, either $k(t)\leq  0$ on  the whole $I=(a,b)$ or one can find  $a\leq t_ -< t_+ \leq b$ such that $I_+=(t_- , t_+)$ (in particular, there are at most  two  topological changes in the $t$-slicing). 

    In this case, the Cauchy hypersurfaces are compact if and only if $I_+\neq \emptyset$, that is, when $k(t_0)>0$ at some $t_0\in I$. 
\end{thm}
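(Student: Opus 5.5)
The plan is to obtain the ``only if'' direction directly from Lem.~\ref{lem:topo_restic_kwar}: if $I_+$ is not an interval, there are $t_-<t_0<t_+$ with $k(t_\pm)>0\geq k(t_0)$, so the spacetime is not globally hyperbolic. For the ``if'' direction, we split into cases according to the shape of $I_+$ inside $I=(a,b)$ and glue regions with Lem.~\ref{l_marc}. Throughout we work with a smoothening of $\gw$ as in Prop.~\ref{p_3metrics1}. By footnote~\ref{foot_simplification_causality} and item (c) at the beginning of Sec.~\ref{Sec:globalhyp}, the smoothening does not affect these causal conclusions.

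\textbf{Case} $I_+=\emptyset$. Here Prop.~\ref{t_warped}(1) applies, and the Cauchy slices $t=$const are $\mathbb{R}^n$, hence non-compact.

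\textbf{Case} $I_+=(t_-,t_+)$ with $a\le t_-<t_+\le b$. We distinguish whether $t_-=a$ and/or $t_+=b$.
\begin{itemize}
\item If $t_-=a$ and $t_+<b$, this is exactly Prop.~\ref{t_warped}(2) with $t_0=t_+$.
\item If $t_-=a$ and $t_+=b$, then $k>0$ throughout, the manifold is $I\times\mathbb{S}^n$, and each $J(p,q)$ lies in the compact set $t^{-1}([t(p),t(q)])$. By item (b) of Sec.~\ref{Sec:globalhyp} the spacetime is globally hyperbolic, and its compact $t$-slices are Cauchy, as noted in Lem.~\ref{l_marc}(1).
\item If $t_->a$ and $t_+=b$, we apply the time-reversed version of Prop.~\ref{t_warped}(2). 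It is proved in the same way, using Lem.~\ref{l_marc}(1) with the roles of past and future exchanged. Equivalently, we substitute $t\mapsto -t$, which preserves the form of $\gw$.
\item If $a<t_-<t_+<b$, we choose $t_*\in(t_-,t_+)$. The region $t\ge t_*$ falls under the previous (time-reversed) subcase, so it is globally hyperbolic with compact Cauchy slices $t=s$, $s\in(t_*,t_+)$. The region $t\le t_*$ falls under Prop.~\ref{t_warped}(2), with compact Cauchy slices for $t\in(t_-,t_*]$ (closed regions are allowed by the footnote in Sec.~\ref{Sec:globalhyp}). We then apply Lem.~\ref{l_marc}(2) with $t_0=t_*$, taking $t_\pm$ slightly off $t_*$ inside $I_+$. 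This requires checking that the slice $t=t_*$ is Cauchy in both $t^{-1}((-\infty,t_*+\eta])$ and $t^{-1}([t_*-\eta,\infty))$. Both regions are again of the already-treated types, and the slice is compact, so the compactness argument in Lem.~\ref{l_marc}(1) yields this.
\end{itemize}

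For the last assertion: when $I_+\ne\emptyset$, a compact slice $\{t\}\times\mathbb{S}^n$ with $t\in I_+$ is Cauchy, so all Cauchy hypersurfaces are homeomorphic to $\mathbb{S}^n$ and hence compact. When $I_+=\emptyset$, the Cauchy hypersurfaces are homeomorphic to $\mathbb{R}^n$.

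The main obstacle I anticipate is the two-sided case $a<t_-<t_+<b$. There we must verify carefully that the reversed version of Prop.~\ref{t_warped}(2) holds, which requires Lem.~\ref{l_marc}(1) with past and future swapped. We must also check that the restricted regions used in Lem.~\ref{l_marc}(2) genuinely have Cauchy slice $t=t_*$, which relies on compactness of that slice within each globally hyperbolic half.
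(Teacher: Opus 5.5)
Your proposal is correct and follows the paper's proof. Necessity comes from Lem.~\ref{lem:topo_restic_kwar}; sufficiency comes from Prop.~\ref{t_warped}, together with its time-reversed version, which the paper uses implicitly, when there is at most one transition; and two transitions are handled by Lem.~\ref{l_marc}(2) applied around an interior slice of $I_+$. The only slip is that in the two-sided case you swapped the labels: the region $t\ge t_*$ is the direct instance of Prop.~\ref{t_warped}(2) (with $t_0=t_+$), and $t\le t_*$ is the time-reversed one, which does not affect your conclusions.
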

\begin{proof}
    The necessary condition follows from Lem.~\ref{lem:topo_restic_kwar}. The sufficient one is straightforward  from  Prop.~\ref{t_warped} when there is at most one topological change. Otherwise, there are two topological changes, and we can choose $t_0\in I$ and $ \epsilon >0$  such that $I_\epsilon:=[t_0-\epsilon, t_0+\epsilon] \subset (t_- , t_+)$. From  Prop.~\ref{t_warped},  both regions $t\geq  t_0-\epsilon$ and $t\leq  t_0+\epsilon$ are globally hyperbolic, and  the slices $t=t^*$ with  $t^*\in I_\epsilon$ are Cauchy for both,  so Lem.~\ref{l_marc}(2) applies. 

\end{proof}

\begin{figure}
    \centering
    \includegraphics[scale=1.4]{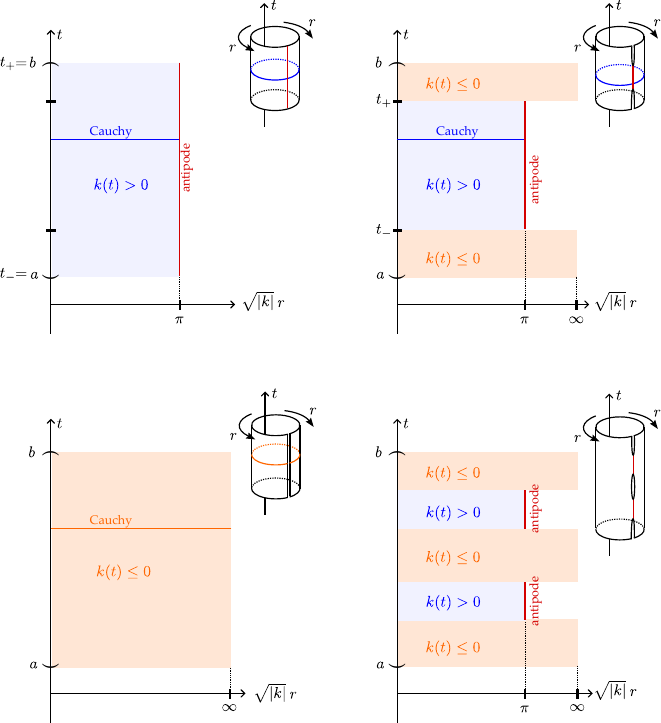}
    \caption{Top and bottom-left diagrams represent those cases for which $k(t)$-warped spacetime is globally hyperbolic. A Cauchy surface has been highlighted in each case (corresponding to a $t=t_0$ surface). Bottom-right diagram is an example of a non-globally hyperbolic case ($I_+$ is not an interval).}
    \label{fig:kwar_globhyp}
\end{figure}

\subsection{The \texorpdfstring{$k(t)$}{k(t)}-conformal cosmological spacetime}

Next, our aim is to determine exactly when the $k(t)$-conformal spacetime is globally hyperbolic.
 
We begin by noting that Lem.~\ref{lem:topo_restic_kwar} remains valid in this case as well:
\begin{lemma}\label{lem:topo_restic_kcon}
    The $k(t)$-conformal spacetime is \emph{not} globally hyperbolic  when $I_+$ is not an interval, that is, whenever there exists $t_-<t_0<t_+$ such that 
    \[
        k(t_-)>0, \qquad  k(t_0)\leq 0, \qquad k(t_+)>0.
    \]
\end{lemma}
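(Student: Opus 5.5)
The plan is to repeat the topological argument of Lem.~\ref{lem:topo_restic_kwar}, after first pinning down the topology of the maximal domain of $\gc$. By Prop.~\ref{p_3metrics2}, once extended to $r=0$ and to $r=\infty$ on the slices with $k(t)>0$, the manifold is
\[
(I\times\mathbb{S}^n)\setminus(I_{\leq 0}\times\{x_\infty\}).
\]
There is one subtlety: for $k(t)<0$ the coordinate range is $r<1/\sqrt{-k(t)}$, not all of $\mathbb{R}^n$. Even so, the slice $\{t\}\times\{r<1/\sqrt{-k(t)}\}$ is an open ball. The domain is therefore $I\times\mathbb{S}^n$ with a closed set removed, namely
\[
C:=\{(t,x): k(t)\leq 0,\ r(x)\geq 1/\sqrt{-k(t)}\}\cup(I_{\leq 0}\times\{x_\infty\}),
\]
where $1/\sqrt{-k}=\infty$ when $k=0$. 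Each $t$-slice of $C$ is a closed ball around $x_\infty$, or a single point when $k(t)=0$. A radial deformation toward $x_\infty$ makes the domain homotopy equivalent to $(I\times\mathbb{S}^n)\setminus(I_{\leq 0}\times\{x_\infty\})$. So only $I_{\leq 0}$ matters for the topology, exactly as in the warped case.

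Next, I argue by contradiction. Suppose the spacetime is globally hyperbolic. The slice $\{t=t_-\}$ is a compact spacelike hypersurface diffeomorphic to $\mathbb{S}^n$. It is acausal because $t$ is a temporal function. By the fact used in the proof of Lem.~\ref{l_marc}(1), a compact acausal spacelike hypersurface in a globally hyperbolic spacetime is Cauchy. Hence the whole manifold is homeomorphic to $\mathbb{R}\times\mathbb{S}^n$ and has the homotopy type of $\mathbb{S}^n$.

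On the other hand, $I_{\leq 0}$ contains $t_0$ and lies strictly between the points $t_-,t_+$ of $I_+$. The removed set $I_{\leq 0}\times\{x_\infty\}$ therefore has a connected component, a segment $J\times\{x_\infty\}$ with $J\ni t_0$, that is compact in the interior of the cylinder, i.e.\ it does not reach the ends of $I$. Removing a compact arc (or point) from the $(n+1)$-manifold $I\times\mathbb{S}^n$ creates a new nontrivial class in $H_n$, given by a small $n$-sphere linking the segment. Equivalently, collapsing the arc shows the complement is homotopic to $\mathbb{R}\times\mathbb{S}^n$ minus a point, which is $\mathbb{S}^n\vee\mathbb{S}^n$. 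Thus $H_n$ of the domain has rank at least $2$. This contradicts $H_n(\mathbb{R}\times\mathbb{S}^n)\cong\mathbb{Z}$.

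The main obstacle is this homology count done rigorously, because $I_{\leq 0}$ may have several components and some may touch the ends of $I$. I would handle it with Mayer--Vietoris, or Alexander duality on the compactification $\mathbb{S}^{n+1}$ of $I\times\mathbb{S}^n$ with its two ends added. The complement of a closed set whose components are arcs, with at least one interior arc disjoint from both ends, has $\tilde H_n$ of rank equal to the number of components of the set in $\mathbb{S}^{n+1}$ minus one. The two ends together with the interior arc $J\times\{x_\infty\}$ already give at least three components, so the rank is at least $2$. Components of $I_{\leq 0}$ reaching an end of $I$ merge with that end point and do not spoil the bound. This gives the contradiction for every $n\geq 2$, and the same count in $H_1$ covers the fundamental-group argument when $n=1$.
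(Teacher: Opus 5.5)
Your proposal is correct and follows the paper's route. The paper's proof just reuses the argument of Lem.~\ref{lem:topo_restic_kwar}: global hyperbolicity would make the compact slice $\{t=t_-\}\cong\mathbb{S}^n$ Cauchy, so the spacetime would be $\mathbb{R}\times\mathbb{S}^n$, and this is contradicted by the extra degree-$n$ (co)homology produced by the removed interior segment $J\times\{x_\infty\}$.

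Your version is more careful than the paper's one-line proof in two respects. It handles the restriction $r<1/\sqrt{-k(t)}$, and the map $r\mapsto r/\sqrt{1+\min(k(t),0)\,r^2}$ in fact gives a homeomorphism onto $(I\times\mathbb{S}^n)\setminus(I_{\leq 0}\times\{x_\infty\})$. It also treats the remaining components of $I_{\leq 0}$ via Alexander duality. One small correction: when $I_{\leq 0}$ has infinitely many components, ``rank $=$ components minus one'' should be read only as a lower bound. Splitting the removed compact set into the three clopen pieces separated by $t_-$ and $t_+$ still gives rank at least $2$, which is all you need.
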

\begin{proof}
    The reasoning used to prove Lem.~\ref{lem:topo_restic_kwar} is also valid here.
\end{proof}

The next two propositions yield representative particular cases, and the technical Lem.~\ref{l_conf} (in addition to Lem.~\ref{lem:topo_restic_kcon}) shows  the obstacle for global hyperbolicity. Then, the full result is attained at Thm.~\ref{thm:kcon_globhyp}. 

\begin{proposition}\label{t_conformal0}
  Assume that either
  \begin{equation}
      \begin{cases}
          k(t)> 0  &\ \mathrm{for} \ t<0 \\
          k(t)= 0  &\ \mathrm{for} \ t\geq 0
      \end{cases}
      \qquad \mathrm{or} \qquad
      \begin{cases}
          k(t)= 0  &\ \mathrm{for}\ t\leq 0\\
          k(t)> 0  &\ \mathrm{for}\ t>0 
      \end{cases}      
  \end{equation}
  (so that there are a topological and a sign curvature change of the slices at $t=0$).  Then, the $k(t)$-conformal spacetime is globally hyperbolic with compact Cauchy hypersurfaces, and  the slices $\{t=t_0\}$ are Cauchy if and only if $t_0<0$.  
\end{proposition}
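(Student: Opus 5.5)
The plan is to treat the first configuration directly and reduce the second one to it by reversing time, $t\mapsto -t$. Under this reversal, $\gc$ keeps the form of Eq.~\eqref{e_ucon}, with $k(t)$ replaced by $k(-t)$ and $a(t)$ by $a(-t)$. Global hyperbolicity and the property of a slice being Cauchy are invariant under time reversal. So in the second configuration the Cauchy slices are the compact ones lying on the $k>0$ side, and these are the slices $t_0<0$ of the time-reversed coordinate. \emph{I read the statement's condition ``$t_0<0$'' in the second case as referring to this reversed time coordinate.} In the original coordinate of the second configuration, where $k>0$ for $t>0$, these same slices are $t>0$. As in Sec.~\ref{Sec:extendibility}, I would also set $a\equiv 1$, using the conformal rescaling \eqref{Eq:a_conformalfactor}, which preserves the causal structure.

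For the first configuration, split $\M$ at $t=0$.

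\textbf{The region $t\geq 0$.} Here $k\equiv 0$, so the metric is $-\dd t^2+4(\dd r^2+r^2\tengS)$. This is a slab of $\mathbb{L}^{n+1}$ after rescaling the spatial coordinate, hence globally hyperbolic, and every slice $t=t_*\geq 0$ is Cauchy for this region.

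\textbf{The region $t<0$.} By Prop.~\ref{p_3metrics2} the metric extends smoothly to the point at infinity, so this region is $(I\cap(-\infty,0))\times\mathbb{S}^n$. Since $t$ is temporal, for $t(p)\leq t(q)<0$ we have $J(p,q)\subset t^{-1}([t(p),t(q)])\cong[t(p),t(q)]\times\mathbb{S}^n$, which is compact. Item (b) at the beginning of Sec.~\ref{Sec:globalhyp} then gives global hyperbolicity of this region.

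\textbf{Gluing.} Lem.~\ref{l_marc}(1), applied with $t_0=0$, shows that $\M$ is globally hyperbolic. Its last clause applies because the slices $t=t_*<0$ are compact spacelike hypersurfaces, hence Cauchy for the first region and therefore for $\M$. This yields compact Cauchy hypersurfaces homeomorphic to $\mathbb{S}^n$.

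\textbf{The converse.} For $t_0\geq 0$ the slice $\{t_0\}\times\mathbb{R}^n$ is non-compact. All Cauchy hypersurfaces of a globally hyperbolic spacetime are homeomorphic, so such a slice cannot be Cauchy. This gives the ``only if'' part.

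The only delicate step is the compactness of $t^{-1}([t(p),t(q)])$ for the region $t<0$. This relies on the smooth compactification at $r=\infty$ from Prop.~\ref{p_3metrics2}, together with the fact that the region is strongly causal, so that item (b) applies. If $t(q)$ approaches $0$, the compactness is used only for closed subintervals strictly inside $t<0$, and Lem.~\ref{l_marc}(1) handles the crossing of $t=0$. I do not expect any obstacle beyond this; in particular, unlike Lem.~\ref{l_conf}, no degeneration of the cones near infinity occurs here because $\dot k$ plays no role in these arguments.
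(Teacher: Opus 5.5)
Your argument is correct and is essentially the paper's, which proves the proposition ``by reasoning as in Prop.~\ref{t_warped}'': the $k=0$ region is a Minkowski slab with Cauchy $t$-slices, and the $k>0$ region has compact sets $t^{-1}([t(p),t(q)])\cong[t(p),t(q)]\times\mathbb{S}^n$ thanks to the compactification of Prop.~\ref{p_3metrics2}. Lem.~\ref{l_marc}(1) glues the two regions, and non-compactness rules out the remaining slices. Your time-reversal reduction makes the second configuration explicit, which the paper leaves implicit, and you are right that there the Cauchy slices are those with $t_0>0$ in the original coordinate, so the literal ``$t_0<0$'' in the statement only fits the first case.
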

\begin{proof}
As $k(t)\geq 0$ everywhere, there is no restriction for the open subset $\ucon$ in \eqref{e_ucon}  (see Prop.~\ref{p_3metrics2})  and the proof follows by reasoning as in Prop.~\ref{t_warped}.
\end{proof}

\begin{proposition}\label{t_conformal}
    Assume that  $k(t)\leq 0$, and either  $\dot{k}(t)\leq 0$ everywhere or $\dot{k}(t)\geq 0$ everywhere. Then the $k(t)$-conformal spacetime is  globally hyperbolic  and all the (non-compact)  slices $t=$constant are Cauchy.
\end{proposition}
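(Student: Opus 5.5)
The plan is to set $a\equiv 1$, as justified in Sec.~\ref{Sec:extendibility}, and to use criterion (b) at the beginning of Sec.~\ref{Sec:globalhyp}: since $t$ is a temporal function, it suffices to show that every $J(p,q)$ is contained in a compact subset of $t^{-1}([t(p),t(q)])$. For each $t$ write $D_t=\{r<R(t)\}$ with $R(t):=1/\sqrt{-k(t)}$ (and $R=\infty$ when $k(t)=0$). Let $\teng_t:=\frac{4}{(1+k(t)r^2)^2}(\dd r^2+r^2\tengS)$. This is the complete Poincaré-ball metric of curvature $k(t)$ on $D_t$, or the Euclidean metric with factor $4$ if $k(t)=0$. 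The key monotonicity fact is this. For fixed $r$, the factor $4/(1+kr^2)^2$ is increasing in $|k|$ on the range where $1+kr^2>0$. Hence if $\dot k\le 0$ (so $|k|$ increases and $R$ is nonincreasing), then $D_s\subset D_t$ and $\teng_s\ge \teng_t$ on $D_s$ for all $s\ge t$.

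Assume first $\dot k\le 0$. Let $\gamma$ be a future-directed causal curve from $p$ to $q$, and let $y=\gamma(t)$ be any point on it. For $s\in[t,t(q)]$ the curve satisfies $\teng_s(\dot\gamma,\dot\gamma)\le 1$, so $\teng_t(\dot\gamma,\dot\gamma)\le 1$. Therefore the $\teng_t$-length of the spatial projection of $\gamma|_{[t,t(q)]}$ is at most $t(q)-t\le t(q)-t(p)$. Consequently $d_{\teng_t}(y,x_q)\le t(q)-t(p)$, where $x_q$ is the spatial position of $q$, which lies in $D_{t(q)}\subset D_t$. Every point of $J(p,q)$ at time $t$ thus lies in the closed $\teng_t$-ball $\bar B_t$ of radius $t(q)-t(p)$ about $x_q$. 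Each $\bar B_t$ is compact in $D_t$ because $\teng_t$ is complete. The candidate compact set is $K:=\bigcup_{t\in[t(p),t(q)]}\{t\}\times \bar B_t$.

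The main obstacle is showing that $K$ is compact, that is, that the balls $\bar B_t$ stay uniformly away from the moving boundary $r=R(t)$ as $t$ varies. I would argue by contradiction. Suppose points $(t_j,y_j)\in K$ satisfy $t_j\to t_*$ and $r(y_j)\to R(t_*)<\infty$. The function $R$ is continuous, and the Poincaré distance from the fixed point $x_q$ (whose radius $r(x_q)<R(t(q))\le R(t_*)$ is bounded away from the limiting boundary) to $y_j$ behaves like $\frac{1}{\sqrt{-k}}\log\frac{R+r}{R-r}$. This diverges uniformly as $r(y_j)\to R(t_*)$ and $R(t_j)\to R(t_*)$. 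That contradicts the bound $d_{\teng_{t_j}}(y_j,x_q)\le t(q)-t(p)$. A sequence escaping to $r\to\infty$ when $k(t_*)=0$ is excluded in the same way, using the factor $4/(1+kr^2)^2\ge 4$. Together with the continuity of $\teng_t$ in $t$, this makes $K$ closed and bounded in the manifold, hence compact.

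The case $\dot k\ge 0$ follows by time reversal $t\mapsto -t$. It is the same argument with $p$ in place of $q$, now comparing along the past segment from $y$ to $p$, where $\teng_s\ge\teng_t$ holds for $s\le t$. Finally, to see that every slice $\{t=t_0\}$ is Cauchy, I would show that any inextensible causal curve has $t$ ranging over all of $I$. The same distance estimate shows that a causal curve cannot leave the spacetime through $r=R(t)$ in finite $t$-time, because the boundary lies at infinite $\teng_t$-distance (Prop.~\ref{prop:kconf_sing}(3)) while the relevant spatial length is bounded by the elapsed time. Hence such a curve must cross every slice, exactly once since $t$ is temporal.
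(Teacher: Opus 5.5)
Your global hyperbolicity argument is sound, and it is a legitimate alternative to the paper's route. The paper projects onto the 2-dimensional double base \eqref{Eq:2Dconformal}, follows the outermost past-directed null geodesic from $q$ to the first time $t_0$ where it would reach $X=\varrho(t_0)$, and only there compares with the constant-curvature spacetime $k\equiv k(t_0)$. You use that same comparison, $\teng_s\ge\teng_t$ on $D_s$ for $s\ge t$, at every time at once. This gives the explicit bound $d_{\teng_t}(y,x_q)\le t(q)-t$ and an explicit compact set containing $J^-(q)\cap t^{-1}([t(p),t(q)])$, directly in $n+1$ dimensions.

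The gap is your last step, the Cauchy property of the slices. Your estimate is one-sided in time. For $\dot k\le 0$, the bound on a segment $[t,t']$ is measured in $\teng_t$, the metric of the earlier slice, whose domain $D_t$ is the larger one. This rules out a past-directed causal curve reaching $r=R(t)$. For a future-directed curve approaching a time $t_e$, it only bounds $\teng_{t_1}$-distances for some $t_1<t_e$. The sphere $r=R(t_e)<R(t_1)$ lies inside $D_{t_1}$ at finite $\teng_{t_1}$-distance, so nothing is contradicted. Prop.~\ref{prop:kconf_sing}(3) concerns distance within a single slice, which your estimate does not control.

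In fact the claim is false. Prop.~\ref{prop:kconf_sing}(2) says comoving observers at large $r$ reach $r=R(t)$ in finite proper time when $\dot k<0$. Explicitly, take $a\equiv 1$ and $k(t)=-e^{t}$ on $I=\mathbb{R}$, so $R(t)=e^{-t/2}$. For any $c$, the geodesic $t\mapsto(t,r=e^{-c/2},\theta^A_0)$, $t<c$, is inextensible in $\ucon$, since its would-be endpoint has $r=R(c)$, where the curvature blows up. It never meets $\{t=c\}$. So no $t$-slice is Cauchy, even though your compactness argument correctly shows the spacetime is globally hyperbolic.

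This part of the statement therefore cannot be proved as written. What your estimate does yield, for $\dot k\le 0$, is that every inextensible causal curve runs back to $\inf I$. From this, $\{t=t_0\}$ is Cauchy if and only if $k$ is constant on $I\cap(-\infty,t_0]$. If instead $k(t_1)>k(t_0)$ for some $t_1<t_0$, the comoving observer at any $r_0\in(R(t_0),R(t_1))$ leaves before $t_0$. For $\dot k\ge 0$ the same holds with time reversed. The paper's own proof does not establish the Cauchy assertion either. After the reduction it only proves compactness of $J(p,q)$, and its null-geodesic argument likewise controls only the past direction when $\dot k\le 0$.
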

\begin{proof}
    Notice first that the metric $\gc$ satisfies the stated properties if they also hold for the (\emph{double base}) 2-dimensional spacetime\footnote{
        It is not difficult to check that the converse also holds. Moreover, the result can be naturally extended to spacetimes which are warped products with a compact (or at least complete) Riemannian fiber $\mathcal{F}$ outside a compact region (recall also footnote \ref{foot_simplification_causality}), by taking into account that the projection on the base of warped product maps causal cones  onto cones. The technique  may have interest in its own right because it is easy to find examples of this situation when $\mathcal{F}$ is compact.
    }
    \begin{align}
    \teng_{\ucon_d}= -\dd t^2 + \frac{4\dd X^2}{\left(1+k(t) X^2\right)^2}, \quad 
    \ucon_d= \left\{(t,X)\in I\times \mathbb{R}: |X| < \varrho(t):= \frac{1}{\sqrt{-k(t)}}\right\}.
    \label{Eq:2Dconformal}
    \end{align}
    Indeed, if the original spacetime $\M$ were not globally hyperbolic there would exist $p=(t_p,x_p) \leq q=(t_q,x_q)$ and $z_j=(t_j,y_j)\in J(p,q) \subset \M$ such that  $r_j:=r(y_j)$ could be arbitrarily large. Then, in  $\ucon_d$: $(t_p,r(x_p)) \leq (t_q,r(x_q))$ and $(t_j,r(y_j))\in J((t_p,r(x_p)),(t_q,r(x_q)))$. Thus, the spacetime~\eqref{Eq:2Dconformal} would not be not globally hyperbolic. Analogously, if the $t$-slices were not Cauchy for the original spacetime they neither would be for the 2-dimensional one.

    As $k(t)\leq 0$, the cones of \eqref{Eq:2Dconformal} are always at least as narrow as in $\mathbb{L}^2$ (including the limit case  $r=0$) and, thus, the possible non-global hyperbolicity may appear only when $\varrho(t)<\infty$. 
   
    Let $p\leq q$ and assume $\dot{k}\leq 0$ (thus $\dot{\varrho}(t)\leq 0$); the case $\dot{k}\geq 0$ is similar interchanging the roles of $p$ and $q$.  Let us prove that $J(p,q)$  is compact.

    Consider the two past directed lightlike geodesics departing from $q$ and their parameterizations by $t$, namely, $\eta(t):=(t,r(t))$, $\hat \eta(t):=(t,\hat{r}(t))$. Choose $\eta$ so that $r(t)$ increases strictly when $t$ decreases.\footnote{
        When $r(q)=0$ both geodesics obey the same differential equation and have $\dd r/\dd t<0$ everywhere; otherwise, we choose $\dd r/\dd t<0$ at $q$ (and, then, everywhere) and $\dd \hat{r}/\dd t>0$ at $q$ (and, then, this inequality holds until $\hat{r}$ vanishes).
    }
    If these geodesics are defined in $[t(p),t(q)]$ then $K:=J^-(q)\cap \{t\geq  t(p)  \}$ is compact, the two future directed lightlike geodesics starting at $p$ cannot be imprisoned in $K$ and they will intersect $ \eta,  \hat \eta$ delimiting the compact set $J(p,q)$ (see Fig.~\ref{fig:kcon_proof2dim} left).

    Otherwise, the (outermost) geodesic $\eta$ must satisfy  $r(t_0)=\varrho(t_0)\leq \infty$  at some first point (starting from $t(q)$) $t_0\in [t(p),t(q))$. Notice: (i) the cones of $\teng_{\ucon}$ in $t\geq t_0$ are narrower than the cones of the constant curvature spacetime with $k\equiv k(t_0)$ therein because $\dot{k}(t)\leq 0$ (see Fig.~\ref{fig:kcon_proof2dim} right), and (ii) the curve  $\eta$ is then inextensible, past directed and included in the closed subset  $t\geq t_0$ of the constant curvature space $k\equiv k(t_0)$. This is a contradiction, because the latter admits $\{t=t_0\}$ as a Cauchy hypersurface and $\eta$ is past-inextensible and included in its future but does not cross it.
\end{proof}

\begin{figure}
    \centering
    \includegraphics[scale=1.4]{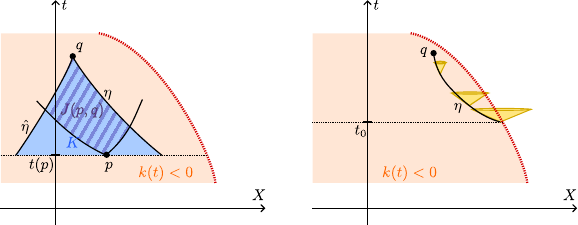}
    \caption{(Left) diagram showing a compact $J(p,q)$ for the situation described in the text. (Right) Lightcones shrinking close to the singularity as $t$ grows.}
    \label{fig:kcon_proof2dim}
\end{figure}

In Prop.~\ref{t_conformal}, whenever $k(t)\leq 0$  a sufficient condition   for global hyperbolicity is the monotonicity of $k(t)$. The following lemma will be the key  necessary condition.  
\begin{lemma}\label{l_conf}
    Assume that $k(t)$  (and thus $\varrho(t)= 1/\sqrt{-k(t)}$) admits a \emph{ strictly  minimizing critical point}, that is a point $t_0\in I$, and $\epsilon_- , \epsilon_+>0$ such that $t_0$ is an absolute  minimum of $k|_{[t_0-\epsilon_-,\ t_0+\epsilon_+]}$ and 
    \begin{equation}
        k(t_0)<0, \qquad k(t_0)<k(t_0\pm \epsilon_\pm )\,.
    \end{equation}
    Then, the $k(t)$-conformal spacetime is not globally hyperbolic.
\end{lemma}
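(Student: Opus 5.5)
We argue by contradiction, reducing first to the 2-dimensional double base $\ucon_d$ of Eq.~\eqref{Eq:2Dconformal}, exactly as in the proof of Prop.~\ref{t_conformal}. The reduction now runs in the opposite direction. We exhibit $p\le q$ in $\ucon_d$ with $J(p,q)$ non-compact, realised by causal curves that stay in the half-plane $X\ge 0$ and approach $X=\varrho(t)$. Each causal curve $(t(s),X(s))$ with $X\ge 0$ in the base lifts to a causal curve $(t(s),r=X(s),\theta^A_0)$ in the full spacetime, since the radial part of $\gc$ coincides with $\teng_{\ucon_d}$. Hence non-compactness of $J(p,q)$ in the base gives non-compactness of the corresponding $J(\tilde p,\tilde q)$ in $\M$. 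The $k(t)$-conformal spacetime is then not globally hyperbolic.

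\textbf{Geometry near the minimum.} At $t_0$ the coordinate bound $\varrho(t)=1/\sqrt{-k(t)}$ attains a strict maximum on $[t_0-\epsilon_-,t_0+\epsilon_+]$, with $\varrho(t_0\pm\epsilon_\pm)<\varrho(t_0)<\infty$. So the domain $\ucon_d$ restricted to this time slab contains a ``bulge'' around $t=t_0$ that reaches out to $X$ values up to $\varrho(t_0)$, while at the endpoint times the domain is strictly narrower. Set $\varrho_*:=\max\{\varrho(t_0-\epsilon_-),\varrho(t_0+\epsilon_+)\}<\varrho(t_0)$. Choose $p=(t_0-\epsilon_-,X_p)$ and $q=(t_0+\epsilon_+,X_q)$ with $X_p,X_q<\varrho_*$.

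\textbf{Constructing the escaping causal curves.} For $X$ close to $\varrho(t)$, the light-cone slope $|\dd X/\dd t|=(1+kX^2)/2$ tends to $0$, so cones collapse near the singular boundary and the $X$-coordinate advances slowly there. This cuts against us. The key point, however, is that the spatial geometry near the boundary is infinitely stretched: the integral $\int \dd X/(1+kX^2)$ diverges. I would therefore change to a coordinate measuring proper radial distance at $t_0$, or more simply work directly with the points $z_j=(t_0,X_j)$, $X_j\nearrow\varrho(t_0)$.

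For each $j$ we need to show two things. First, $z_j\in J^+(p)$: a future causal curve from $p$ must reach $z_j$. Second, $z_j\in J^-(q)$. This is plausible only if the boundary is ``timelike-like'' and reachable, and that is the main obstacle. Using the constant-curvature comparison from the proof of Prop.~\ref{t_conformal}, on $[t_0-\epsilon_-,t_0]$ we have $\dot k\le0$ near $t_0$ from the left, giving cones narrower than those with $k\equiv k(t_0)$. Reaching $X_j$ close to $\varrho(t_0)$ then takes unbounded time, which makes the $z_j$ unreachable within finite time.

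I would therefore instead aim for non-compactness via inextensible causal curves that leave every compact set. The outermost past-directed null geodesic $\eta$ from $q$ (with $X$ increasing as $t$ decreases) cannot cross $t=t_0-\epsilon_-$ at $X>\varrho_*$. It also cannot cross beyond $\varrho(t)$. I would show that $\eta$ must escape to the boundary $X=\varrho(t)$ at some time $t_1\in(t_0-\epsilon_-,t_0+\epsilon_+)$. The mechanism is that, as $t$ decreases past $t_0$, the domain shrinks ($\dot\varrho\ge0$ on the left), so $\eta$, which has been moving outward while $\varrho$ was increasing, is met by the receding boundary.

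Once $\eta$ hits the boundary, it is past-inextensible inside $J^-(q)\cap\{t\ge t_0-\epsilon_-\}$, and it meets $J^+(p)$ if $p$ is chosen just below it. Then $J(p,q)$ contains the tail of $\eta$ approaching the singularity, which is non-compact. The delicate step is this hitting argument. I expect to prove it by comparing $\eta$ with the future-directed null geodesic from $p$ and using the strict inequalities $k(t_0)<k(t_0\pm\epsilon_\pm)$ to tune $X_p,X_q$ by continuity, possibly after shrinking $\epsilon_\pm$.
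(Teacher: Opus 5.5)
Your proposal rests on a sign error. Since $\dot\varrho=\frac{1}{2}(-k)^{-3/2}\dot k$, the function $\varrho$ is monotone in the same sense as $k$. A strict minimum of $k$ at $t_0$ is therefore a strict \emph{minimum} of $\varrho$, as the statement itself indicates: $\varrho(t_0\pm\epsilon_\pm)>\varrho(t_0)$. The domain $\{r<\varrho(t)\}$ is pinched at $t=t_0$, not bulging there. Your remark that $\dot k\le 0$ just left of $t_0$ is in fact inconsistent with your later claim $\dot\varrho\ge 0$ there.

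This is not a cosmetic slip. Take the configuration you analyse: a strict maximum of $\varrho$, equivalently of $k\le 0$, with $k$ monotone on each side. That spacetime is globally hyperbolic. Prop.~\ref{t_conformal} applies to $t\le t_0$ and to $t\ge t_0$, and Lem.~\ref{l_marc}(2) glues the two regions; this is case (1)(ii) of Thm.~\ref{thm:kcon_globhyp} with $t_-=t_+=t_0$. So the ``hitting'' step you leave open cannot be established in your picture. That is why each of your attempts runs into collapsing cones or unreachable points $z_j$. As it stands, the proposal contains no proof.

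With the correct geometry the missing idea is short and needs neither null geodesics nor the double base. For $\delta_j\searrow 0$, consider the comoving segments $\gamma_j(t)=(t,\varrho(t_0)-\delta_j,\theta^A_0)$, $t\in[t_0-\epsilon_-,t_0+\epsilon_+]$.
\begin{itemize}
\item They lie in $\ucon$ because $\varrho(t)\ge\varrho(t_0)$ on the whole slab.
\item Their endpoints converge to $p_\pm=(t_0\pm\epsilon_\pm,\varrho(t_0),\theta^A_0)$. These are points of the spacetime precisely because $k(t_0)<k(t_0\pm\epsilon_\pm)$.
\item Pick $p$ in the chronological past of $p_-$ and $q$ in the chronological future of $p_+$. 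These sets are open, so for large $j$ the point $\gamma_j(t_0-\epsilon_-)$ lies in the chronological future of $p$, and $\gamma_j(t_0+\epsilon_+)$ in the chronological past of $q$. Hence $\gamma_j(t_0)\in J(p,q)$.
\item The sequence $\{\gamma_j(t_0)\}$ has no accumulation point in $\M$, since a limit would have $t=t_0$ and $r=\varrho(t_0)$. So $J(p,q)$ is not compact.
\end{itemize}
This is the paper's argument, phrased there as the non-existence of a causal limit curve from $p_-$ to $p_+$.

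What your construction loses is that the endpoints may sit at radius $\approx\varrho(t_0)$. This is allowed at $t_0\pm\epsilon_\pm$ but not at $t_0$. By choosing $X_p,X_q<\varrho_*$ with $\varrho_*<\varrho(t_0)$, you throw exactly this away.
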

\begin{proof}
    For sufficiently small $\{\delta_j \} \searrow 0$ the pieces of integral  curves $\gamma_j$ of $\partial_t$ in spherical coordinates 
    \begin{equation}
        \gamma_j(t):=\left(t,r=\frac{1}{\sqrt{-k(t_0)}}-\delta_j , \theta^A_0\right)\qquad \forall t\in [t_0-\epsilon_-, t_0+\epsilon_+ ],
    \end{equation}
    (for fixed coordinates in the sphere $\theta^A_0$) are well defined, as the coordinate $r$ does not reach $1/\sqrt{-k}$. The hypotheses on $k(t_0)$ imply that  
    \begin{equation}
        p_{\pm}:=\lim_j \gamma_j(t_0 \pm \epsilon_\pm)= \left(t_0\pm \epsilon_\pm , r=\frac{1}{\sqrt{-k(t_0)}}, \theta^A_0\right)
    \end{equation}
    are also well defined. Thus, if the spacetime were globally hyperbolic then the sequence of causal curves $\{\gamma_j\}_j$ would have a causal limit curve connecting $p_{-}$ and $p_+$. However, this is not possible as no subsequence of $\{\gamma_j (t_0)\}_j$ is contained in a compact set.
\end{proof}

\begin{remark}
    Notice that, for $k(t)<0$, $\varrho(t)  = 1/\sqrt{|k|} $ increases/decreases in the same way as $k(t)$. Therefore minima of $k(t)(<0)$ correspond to minima of $\varrho(t)$ (i.e., of the value of the radial coordinate corresponding to the singular spatial infinity).
\end{remark}

The previous lemma and the topological restriction in Lem.~\ref{lem:topo_restic_kcon} will yield the only restrictions to global hyperbolicity. 

\begin{figure}
    \centering
    \includegraphics[scale=1.4]{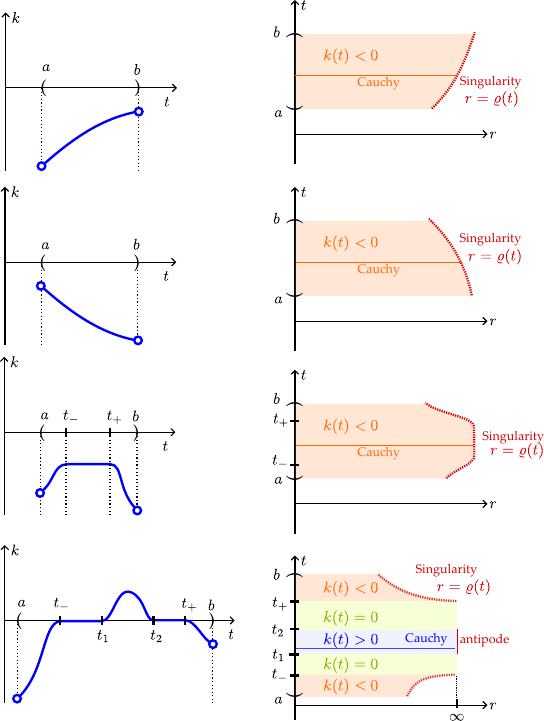}
    \caption{Schematic visualizations of different cases in Thm.~\ref{thm:kcon_globhyp}, on the left the profile of $k(t)$ and on the right its effect on the spacetime. The first two correspond to the case (1)(i), the third one to (1)(ii) and the bottom diagrams to the case (2). They cover relevant possibilities but not all possible qualitative behavior; for instance, in the third diagram $k(t)$ can be zero in $(t_-, t_+)$ or, in the last one, the function can have more critical points in $(t_1,t_2)$ as long as $k(t)$ remains positive.
    }
    \label{fig:kcon_globhyp}
\end{figure}

\begin{thm}\label{thm:kcon_globhyp}
    A $k(t)$-conformal spacetime is globally hyperbolic if and only if one of the following two exclusive cases occur (see Fig.~\ref{fig:kcon_globhyp}):
    \begin{itemize} 
        \item[(1)]  $k(t)\leq 0$ for all $t\in I$ and it does not admit strictly minimizing critical points, that is, either  (i)  the sign of $\dot{k}$ does not change between positive and negative   or  (ii)  there are $t_- \leq t_+$ such that
        \begin{equation}
            \dot{k}(t) \left\{
            \begin{array}{ll}
                \geq 0 & \hbox{if} \; t\leq t_- \\
                = 0 & \hbox{if} \;  t_-\leq t\leq t_+ \\
                \leq 0 & \hbox{if} \;  t_+\leq t
            \end{array}
            \right.  
        \end{equation}

        In any of these cases, the slices $t=$ constant are Cauchy. 

    \item[(2)] There exists $t_\star \in I$ such that $k(t_\star)>0$ and the interval $I=(a,b)$ admits $a\leq t_- \leq t_1   < t_\star <  t_2 \leq t_+ \leq b$ such that:\footnote{
        Notice that if $k$ vanishes at two points $t_-<t_+$ then it cannot be negative in $(t_-,t_+)$ as, otherwise, a strictly minimizing critical point would appear therein. It can be positive in a subinterval $(t_1,t_2)$ but it cannot be positive in a non-connected subset of $(t_-,t_+)$ because of Lem.~\ref{lem:topo_restic_kcon}.
        }
    \begin{itemize}
        \item[(a)]  $k\geq 0$ on $[t_-,t_+]\cap I$ with strict inequality exactly on $(t_1,t_2)$; 
        \item[(b)] elsewhere, $k<0$ with $\dot{k}\geq 0$ on $(a,t_-)$, and $\dot{k}\leq 0$ on $(t_+,b)$.
    \end{itemize}
    In particular, this case holds when $k(t)> 0$ for all $t\in I$ (with no restriction on $\dot{k}$).

    In any of these cases, the slice $\{t=t_0\}$ is Cauchy if and only if $t_1<t_0<t_2$. 
    \end{itemize}
\end{thm}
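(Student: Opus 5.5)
The plan is to prove necessity and sufficiency separately, building everything from Lem.~\ref{lem:topo_restic_kcon}, Lem.~\ref{l_conf}, Prop.~\ref{t_conformal}, Prop.~\ref{t_conformal0} and the gluing Lem.~\ref{l_marc}.

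\emph{Necessity.} Suppose the spacetime is globally hyperbolic. By Lem.~\ref{lem:topo_restic_kcon}, $I_+$ is empty or an interval $(t_1,t_2)$.

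If $I_+=\emptyset$, then $k\le 0$. Lem.~\ref{l_conf} forbids strictly minimizing critical points with $k(t_0)<0$. I will show that a smooth $k\le 0$ with no such point has the monotone or ``increase, plateau, decrease'' profile of (1). The argument is elementary real analysis. If $\dot k$ were positive somewhere after being negative somewhere earlier, take $s<s'$ with $\dot k(s)<0<\dot k(s')$. Then the minimum of $k$ on $[s,s']$ is attained at an interior point $t_0$ with $k(t_0)<k(s),k(s')$.

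One subtlety arises when $k(t_0)=0$. This cannot happen, since then $k(s)>0$, contradicting $k\le 0$. Hence $k(t_0)<0$, and $t_0$ is strictly minimizing, a contradiction. So $\dot k$ never changes from negative to positive. This yields (i), or (ii) with $[t_-,t_+]$ the set where $\dot k$ vanishes between the increasing and decreasing regimes.

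If $I_+=(t_1,t_2)\neq\emptyset$, the same minimum argument applied on $(a,t_1]$ and $[t_2,b)$ produces the structure (a)--(b). On the left, $k$ cannot have a negative interior strict minimum, so wherever it is negative it must be nondecreasing until it reaches the zero set. The zeros adjacent to $(t_1,t_2)$ form $[t_-,t_1]$, since $k$ cannot dip negative and come back (footnote argument). The right side is symmetric.

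\emph{Sufficiency.} Case (1)(i) is exactly Prop.~\ref{t_conformal}. For (1)(ii), pick $t_*\in[t_-,t_+]$. Then $t^{-1}((-\infty,t_+])$ has $\dot k\ge0$ and $t^{-1}([t_-,\infty))$ has $\dot k\le 0$. Prop.~\ref{t_conformal}, whose proof works verbatim on these closed slabs, makes the slice $t=t_*$ Cauchy for both, and Lem.~\ref{l_marc}(2) concludes. Every slice is Cauchy by the same argument with $t_*$ moved.

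For (2), fix $t_0\in(t_1,t_2)$. On $[t_0,b)$, the part $[t_0,t_+]$ has $k\ge0$, so the metric is defined on the full $\mathbb{R}^n$ (or $\mathbb{S}^n$), and beyond $t_+$ one has $k<0$ with $\dot k\le0$. Here I will generalize the proof of Prop.~\ref{t_conformal}: reduce to the 2D double base \eqref{Eq:2Dconformal} and rerun the lightlike-geodesic argument. The outermost past lightlike geodesic from $q$ can only reach $\varrho$ where $\dot k\le0$, and the cones are narrower than those of the constant-$k(t')$ model at the first hitting time $t'$. This gives global hyperbolicity of $t^{-1}([t_{2}-\eta,\infty))$ with Cauchy $t$-slices, for small $\eta$ so that $k>0$ at $t_2-\eta$.

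Symmetrically, $t^{-1}((-\infty,t_1+\eta])$ is globally hyperbolic, with the time-reversed argument using $\dot k\ge 0$ on $(a,t_-)$. The compact slab $t^{-1}([t_1+\eta/2,t_2-\eta/2])\cong[\cdot]\times\mathbb{S}^n$ is trivially globally hyperbolic, with compact slices. Gluing with Lem.~\ref{l_marc}(1) twice gives global hyperbolicity, and the compact slices with $t_0\in(t_1,t_2)$ are Cauchy by the last clause of Lem.~\ref{l_marc}(1). Slices with $t\notin(t_1,t_2)$ are noncompact, so they cannot be homeomorphic to a compact Cauchy surface.

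\emph{Main obstacle.} The hardest step is extending Prop.~\ref{t_conformal} to regions where $k$ crosses from $0$ to negative and where the slab is half-infinite in the compactified direction. This requires checking carefully that the ``first hitting time'' comparison with the constant-curvature model still yields the contradiction. It also requires checking that the junction at $t_\pm$ causes no imprisonment near $r=\infty$, where $\bar r=1/r$ coordinates as in Prop.~\ref{p_3metrics2} should be used.
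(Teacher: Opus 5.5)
Your necessity argument and your sufficiency proof for case (1) are correct and follow the paper: Lem.~\ref{lem:topo_restic_kcon} plus Lem.~\ref{l_conf} for necessity, then Prop.~\ref{t_conformal} and Lem.~\ref{l_marc}(2). Your real-analysis derivation of the profiles from the absence of strictly minimizing critical points is more explicit than the paper's one-line remark.

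The gap is in the sufficiency part of case (2). You claim that $t^{-1}([t_2-\eta,\infty))$, with $k(t_2-\eta)>0$, is globally hyperbolic \emph{with Cauchy $t$-slices}, and symmetrically for $t^{-1}((-\infty,t_1+\eta])$. When $t_2<b$ this is false. The region contains compact slices $\{t\}\times\mathbb{S}^n$ for $t\in[t_2-\eta,t_2)$ and non-compact slices $\{t\}\times\mathbb{R}^n$ for $t\ge t_2$, and these cannot all be Cauchy. Concretely, consider the comoving curve $t\mapsto(t,x_\infty)$, i.e.\ $\bar r=0$ in the chart of Prop.~\ref{p_3metrics2}. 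It is timelike for $t\in[t_2-\eta,t_2)$. It is future-inextensible, because $x_\infty$ is removed from every slice with $k\le 0$. So it never meets any slice $t\ge t_2$. Hence the ``generalized Prop.~\ref{t_conformal}'' you plan to prove on this region cannot hold as formulated. The hypothesis you feed into Lem.~\ref{l_marc}(1), and into its time-reversed version at the past end, is therefore not available. Your self-identified ``main obstacle'' is a symptom of placing the cuts inside $I_+$.

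The fix, which is the paper's route, is to cut exactly at the boundary of $I_+$.
\begin{itemize}
\item[(i)] On $t\le t_1$ you have $k\le 0$ with $\dot k\ge 0$, and on $t\ge t_2$ you have $k\le 0$ with $\dot k\le 0$. The plateaus $k=0$ on $[t_-,t_1]$ and $[t_2,t_+]$ are allowed by the hypotheses of Prop.~\ref{t_conformal}. So that proposition applies verbatim on these closed regions and gives Cauchy $t$-slices there. No extension across $k=0$ and no analysis near $r=\infty$ is needed.
\item[(ii)] The middle region $t_-<t<t_+$ has $k\ge 0$, hence no singular boundary. It is globally hyperbolic, with exactly the compact slices Cauchy, by Prop.~\ref{t_conformal0} applied on each side of a slice $t=t_\star$ with $t_1<t_\star<t_2$ (one transition on each side), combined through Lem.~\ref{l_marc}(2).
\item[(iii)] Lem.~\ref{l_marc}(1) at $t_0=t_2$, together with its time-reversed form at $t_1$, then glues the pieces.
\end{itemize}
Alternatively, you may keep your cuts if you claim only that the initial slice $t=t_2-\eta$ is Cauchy for $t^{-1}([t_2-\eta,\infty))$, which is all the proof of Lem.~\ref{l_marc}(1) actually uses. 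Proving that, however, again requires splitting at $t_2$ and invoking Prop.~\ref{t_conformal} on $t\ge t_2$.
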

\begin{proof}~
\begin{itemize}
    \item[(1)] If $k(t)$ admits a strictly  minimizing critical point then it is not globally hyperbolic by Lem.~\ref{l_conf}. In case  (i), Prop.~\ref{t_conformal} applies. Otherwise case (ii) holds and this proposition (which is applicable also to closed intervals of $I$) applies to the regions $t\leq t_+$ and $t_-\leq t$; so, Lem.~\ref{l_marc}(2) does the job.  

    \item[(2)] The regions $t_-\leq t\leq t_2$ and $t_1\leq t \leq t_+$ are globally hyperbolic subsets by using Prop.~\ref{t_conformal0} and they intersect in the common Cauchy hypersurface $\{t=t_\star \}$.  Thus, Lem.~\ref{l_marc}(2) implies that the region $t_-<t<t_+$ is globally hyperbolic. 
 
    The regions $t\leq t_1$ and $t_2 \leq t$ are globally hyperbolic  subsets with Cauchy $t$ slices by using Prop.~\ref{t_conformal}. Then, Lem.~\ref{l_marc}(1) ensures that they match with $t_-<t<t_+$ in a single globally hyperbolic spacetime and the Cauchy slices are only the compact ones. 
\end{itemize}
The other possibilities are forbidden by Lem.~\ref{l_conf}, taking into account that Lem.~\ref{lem:topo_restic_kcon} only permits two topological transitions.
\end{proof}

\subsection{The \texorpdfstring{$k(t)$}{k(t)}-radial cosmological spacetime}

Let us start arguing as in the previous cases.

\begin{proposition}\label{t_radial0} 
If $k(t)\leq 0$, then the $k(t)$-radial spacetime is  globally hyperbolic and its $t$-slices are Cauchy.
\end{proposition}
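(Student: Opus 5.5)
Since $k(t)\leq 0$ for every $t\in I$, the domain is $\urad=I\times\mathbb{R}^n$. Indeed, the convention $1/\sqrt{C}=\infty$ for $C\leq 0$ removes any upper bound on $r$, and by Prop.~\ref{p_3metrics3} the metric extends smoothly to $r=0$. So the manifold is the whole slab $I\times\mathbb{R}^n$, and as in Sec.~\ref{Sec:extendibility} we may take $a\equiv 1$ (the conformal factor does not change causality). The plan is to copy the argument of Prop.~\ref{t_warped}(1): compare the cones of $\gr$ with those of the flat slab $(I\times\mathbb{R}^n, -\dd t^2+\dd r^2+r^2\tengS)$, which is a slab of $\mathbb{L}^{n+1}$. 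That slab is globally hyperbolic and its $t$-slices are Cauchy.

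The key step is a pointwise comparison of the spatial metrics. For $k(t)\leq 0$ we have $1-k(t)r^2\geq 1$, so $\frac{1}{1-k(t)r^2}\leq 1$. Hence
\[
\frac{\dd r^2}{1-k(t)r^2}+r^2\tengS\;\leq\; \dd r^2+r^2\tengS .
\]
This inequality goes the wrong way for directly narrowing the cones (the spatial metric is smaller, so the cones are wider). I would therefore not compare with Minkowski and instead use a warped-product comparison that handles this direction.

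The cones are wider only in the radial direction, by the factor $\sqrt{1-k(t)r^2}$. Take a future causal curve, reparametrized by $t$, on a compact interval $[t(p),t(q)]$. It satisfies
\[
|\dot r|\leq\sqrt{1-k(t)r^2}\leq \sqrt{1+K r^2}, \qquad K:=\max_{[t(p),t(q)]}(-k).
\]
This ODE comparison gives $\operatorname{arcsinh}(\sqrt{K}\,r)$ growing at most linearly in $t$, so $r$ stays bounded along every causal curve from $p$ to $q$. The angular motion takes place on the compact sphere. Thus $J(p,q)$ lies in a compact subset of $t^{-1}([t(p),t(q)])$, and by fact (b) at the start of Sec.~\ref{Sec:globalhyp} the spacetime is globally hyperbolic. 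Equivalently, on each compact time slab the cones are narrower than those of the static metric $-\dd t^2+\frac{\dd r^2}{1+Kr^2}+r^2\tengS$. That metric is a standard static spacetime whose spatial part is complete (hyperbolic space of curvature $-K$), so it is globally hyperbolic with Cauchy $t$-slices.

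For the Cauchy property of the slices, the same bound shows that an inextensible causal curve cannot escape to $r=\infty$ in finite $t$. It must therefore run through the whole interval $I$, and since $t$ is strictly monotone along it, it crosses each slice $\{t=t_0\}$ exactly once. One could also argue through fact (c) on each compact slab, exhausting $I$ and applying Lem.~\ref{l_marc}(2) to glue. The main obstacle is exactly the cone-direction issue: Minkowski is not a valid comparison here. The fix is to compare with the hyperbolic static model of curvature $-K$ on compact time slabs, or equivalently to use the ODE bound above.
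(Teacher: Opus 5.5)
Your proof is correct. Once you (rightly) discard the Minkowski comparison, it is essentially the paper's argument: on each compact time slab the cones of $\gr$ are narrower than those of $-\dd t^2+\dd r^2/(1+Kr^2)+r^2\tengS$, i.e.\ the FLRW spacetime with constant $k\equiv -K$, which is globally hyperbolic with Cauchy $t$-slices. Your ODE bound on $\mathrm{arcsinh}(\sqrt{K}\,r)$ is just an explicit unpacking of that same cone comparison.
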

\begin{proof}
Topologically, the manifold is $I\times\mathbb{R}^n$ and it is enough to check that, for each compact interval $[c,d]\times \mathbb{R}^n$, the cones are narrower than those of a specific open FLRW spacetime. Let $k^+$ be the maximum of $\{|k(t)|: t\in [c,d]\}$. Clearly, the cones are narrower than those of FLRW of curvature $k(t)\equiv -k^+$.
\end{proof}

\begin{remark}\label{r_radial}
To extend this result to the case $k(t)>0$, recall that, in any open interval where $k(t)$ is a positive constant, the metric could be extended smoothly not only to the half sphere $\bar{r} \leq \pi/2$ but also to a whole sphere (Rem.~\ref{r_3metrics3}). Thus, if we insisted in considering only the region with $\bar{r}< \pi/2$ then the hypersurface $\bar{r} = \pi/2$ would be timelike  \emph{and the spacetime would not be globally hyperbolic}\footnote{\label{fnote:kradboundary}
    However, it might be regarded as globally hyperbolic with timelike boundary (according to \cite{AFS}), eventually under a further cutoff. Indeed, if one can choose $\epsilon$ so that $r_\epsilon < \bar{r}_\star(t)$ for all $t\in I$ (see  \eqref{g_epsilon}), then $H_\epsilon$ would serve as such a timelike boundary.
    } 
while, otherwise, there would be topological transitions even in the region $k>0$, in each interval where the whole sphere is included. 
\end{remark}

From this remark, we will consider that $\dot{k}$ does not vanish
when $k>0$.
\begin{proposition}\label{p_radial}
    When $k>0$, the metric $\gr$ is globally hyperbolic if either $\dot{k}(t)>0$ or $\dot{k}(t)<0$ holds for all $t\in I$. However, no  $t$-slice is a Cauchy hypersurface. 
\end{proposition}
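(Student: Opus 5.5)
The plan is to work in the coordinates $(t,\bar{r},\theta^A)$ of \eqref{e_g_r}, with $\bar{r}=\arcsin(\sqrt{k(t)}\,r)\in[0,\pi/2)$. In these coordinates the manifold is $I\times\{\text{open half sphere}\}\cong I\times\mathbb{R}^n$, and the only potential obstruction to global hyperbolicity is the end $\bar{r}\to\pi/2$. I treat the case $\dot{k}>0$; the case $\dot{k}<0$ follows by reversing the time orientation, i.e.\ exchanging the roles of past and future and of $p$ and $q$.

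\emph{No $t$-slice is Cauchy.} I would argue as in Prop.~\ref{prop:krad_sing}(2). Fix $t_1\in I$ and some $t_0<t_1$ in $I$. Since $\dot{k}>0$, we have $1/\sqrt{k(t_1)}<1/\sqrt{k(t_0)}$. Take an integral curve of $\partial_t$ (original coordinates) with $1/\sqrt{k(t_1)}<r_0<1/\sqrt{k(t_0)}$. It is future-inextensible in the spacetime, because it leaves the domain $\urad$ through the singular boundary. It does so before reaching $t_1$, so it never meets $\{t=t_1\}$. Its past end lies at $t\to\inf I$ or at the past limit of the domain, so the curve is an inextensible causal curve missing the slice. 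Hence the slice $\{t=t_1\}$ is not Cauchy.

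\emph{Orientation of the surfaces $H_\epsilon$.} The key claim is that, for $\dot{k}>0$, the surfaces $H_\epsilon=\{\bar{r}=\bar{r}_\epsilon\}$ are crossed by future-directed causal curves only in the direction of increasing $\bar{r}$. This is where Lem.~\ref{l_3metrics3} enters: on any compact interval $[c,d]\subset I$ there is $\epsilon_\delta$ such that each $H_\epsilon$ with $\epsilon<\epsilon_\delta$ is spacelike over $[c,d]$. A spacelike hypersurface separating the region has all causal cones crossing it in a fixed direction, determined by continuity. That direction is fixed by one timelike vector: along a comoving observer $r=$const we have $\bar{r}=\arcsin(\sqrt{k(t)}\,r)$, and $\dd\bar{r}/\dd t>0$ because $\dot{k}>0$. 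So the future side of $H_\epsilon$ is $\bar{r}>\bar{r}_\epsilon$.

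\emph{Global hyperbolicity.} I would use fact (b): it suffices to show that each $J(p,q)$ lies in a compact subset of $t^{-1}([t(p),t(q)])$. Put $[c,d]=[t(p),t(q)]$ and choose $\epsilon<\epsilon_\delta$ so small that $\bar{r}(q)<\bar{r}_\epsilon$. Suppose a future causal curve from $p$ to $q$ reaches $H_\epsilon$. It then enters $\{\bar{r}>\bar{r}_\epsilon\}$, and by the previous step it can never return to $\bar{r}<\bar{r}_\epsilon$ while $t\in[c,d]$. This contradicts $\bar{r}(q)<\bar{r}_\epsilon$. Therefore
\begin{equation*}
J(p,q)\subset [c,d]\times\{\bar{r}\le \max(\bar{r}_\epsilon,\bar{r}(p))\},
\end{equation*}
which is compact because it is a product of a closed interval with a closed ball of the half sphere. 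The ball is taken about $\bar{r}=0$, where the metric is smooth by Prop.~\ref{p_3metrics3}.

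\emph{Main obstacle.} I expect the hardest step to be making the orientation argument rigorous. One must check that $H_\epsilon$ is spacelike uniformly on $[c,d]$, which Lem.~\ref{l_3metrics3} supplies. One must also check that the future cone at $H_\epsilon$ points towards increasing $\bar{r}$ consistently for all $t\in[c,d]$. The first thing I would try is to verify the latter by computing the sign of $\dd\bar{r}(v)$ for future causal vectors $v$ directly from the $(t,\bar{r})$ block of \eqref{e_g_r}. That block has determinant $-1/k$ (see the footnote to \eqref{g_epsilon}), and its cross term is $-\dot{k}\tan(\bar{r})/k^2$. Both null directions should then satisfy $\dd\bar{r}/\dd t>0$ once $\bar{r}>\bar{r}_\star(t)$, which is exactly what the argument needs.
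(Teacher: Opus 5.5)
Your proof is correct. For global hyperbolicity you use the same mechanism as the paper: Lem.~\ref{l_3metrics3} makes $H_\epsilon$ spacelike uniformly over $[t(p),t(q)]$, so causal curves cross it in only one direction.

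The difference is in how you run the argument. You identify the future side explicitly, through $\dd\bar{r}(\partial_t)=\frac{\dot{k}}{2k}\tan\bar{r}>0$ for the comoving field of the original chart, and then argue directly to the compact set $[t(p),t(q)]\times\{\bar{r}\le\bar{r}_\epsilon\}$. The paper instead argues by contradiction, with a sequence $q_j\in J(p,q)$ accumulating at $\bar{r}=\pi/2$. It only needs that \emph{one} of the future/past cones lies on the $\bar{r}$-increasing side, without deciding which.

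The computation you anticipate does go through. The null slopes of the $(t,\bar{r})$ block of \eqref{e_g_r} are $\dd\bar{r}/\dd t=\frac{\dot{k}}{2k}\tan\bar{r}\pm\sqrt{k}$. For $\dot{k}>0$ both are positive exactly when $\bar{r}>\bar{r}_\star(t)$. So $\bar{r}$ is strictly increasing along every future causal curve in that region, which makes the one-way crossing of $H_\epsilon$ rigorous.

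For the failure of the Cauchy property you take a different and more concrete route. The paper argues qualitatively that the increasingly tilted cones near $\bar{r}=\pi/2$ prevent future causal curves starting at $t_0-\delta/2$ close to the boundary from reaching $\{t=t_0\}$. You instead exhibit one explicit comoving geodesic $r=r_0$ with $1/\sqrt{k(t_1)}<r_0<1/\sqrt{k(t_0)}$, as in Prop.~\ref{prop:krad_sing}(2), and check inextensibility at both ends. This uses only the monotonicity of $k$ and is, if anything, cleaner.
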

\begin{proof}
    Let us check that $J(p,q)$ lies in a compact subset for any points $p,q$.  Trivially, they lie in the compact subset $[t(p), t(q)]\times \{0\leq \bar{r} \leq \pi/2\}$ for the coordinates introduced in \eqref{e_g_r}, however, the  points $\bar{r} = \pi/2$ do not belong to the spacetime. Anyway, it is enough to check that no sequence $\{q_j=(t_j,x_j)\}\subset J(p,q)$ converges in these coordinates to a point $q_\infty=(t_\infty, x_\infty)$ with $\bar{r}(x_\infty)=\pi/2$. Reasoning by contradiction,  apply Lem.~\ref{l_3metrics3} with $t_0=t_\infty$ and $\delta= t(q)-t(p)$ to find a $\epsilon >0$  such that $H_\epsilon$ is spacelike for small $\epsilon$ and, so, both the future directed causal curve departing from $p$ to $q_j$ and the past directed one from $q$ to $q_j$ must cross $H_\epsilon$ transversely with increasing coordinate $\bar{r}$ (for big $j$ and small $\epsilon$). But this is impossible because only one of the two type of cones (the future or the past directed ones) on $H_\epsilon$ can lie on  the side where $\bar{r}$ increases, as $H_\epsilon$ is spacelike.\footnote{
        From a technical viewpoint, this result is related to the claim that a strongly causal spacetime is globally hyperbolic when its conformal boundary has no timelike points (see \cite[Cor. 3.4]{FHS_ATMP} for a precise formalization). As pointed out  in Rem.~\ref{r_conformalnoboundary}, the natural boundary for the radial spacetime cannot be regarded as a (standard) conformal one. However,	Lem.~\ref{l_3metrics3} provides a hypothesis which play the role of ``having no timelike points'' here. The general setting of the causal boundary (which is intrinsic and defined with independence of conformal embeddings) properly describes this situation \cite[Sec. 4]{FHS_ATMP}. Indeed, the absence of timelike points correspond to the causal boundary property that no TIP (Terminal Indecomposable Past) is S-related to a TIF (Terminal Indecomposable Future)~\cite[App. A]{AFS} and then, to the absence of naked singularities.
        \label{foojpaj}
    }

    In order to prove that no slice $\{t=t_0\}$ is Cauchy, choose any $\delta>0$ and take  $\epsilon>0$ from Lem.~\ref{l_3metrics3} as above. Assume that the future cones point out in the $\bar{r}$-increasing side along $H_\epsilon$ (an analog reasoning works if the cones are past-directed).  These cones are more tilted as $\bar{r}$ increases and, thus, $\{t=t_0\}$ will not be intersected by any future-directed causal curve starting at a point in $\{t = t_0 - \delta/2 \}$ with $\bar{r}$ close to $\pi/2$.
\end{proof}

From the proof, the projection of any Cauchy hypersurface on the $(t,\bar{r})$ part can be seen as a graph $\bar{r} \mapsto t(\bar{r})$ which diverges when $\bar{r}\rightarrow \pi/2$  (this can be seen more clearly by taking a ``double base'' 2-dimensional spacetime as in the proof of Prop.~\ref{t_conformal}).

\begin{figure}
    \centering
    \includegraphics[scale=1.4]{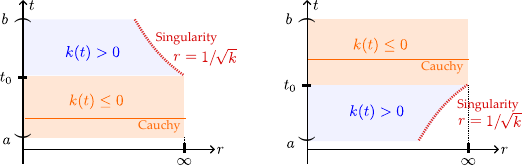}
    \caption{Schematic visualizations of the globally hyperbolic cases (i) (left) and (ii) (right) for the $k(t)$-radial metric presented in Thm.~\ref{thm:krad_globhyp} (the limit possibilities $k(t)>0$ everywhere or $k(t)\leq 0$ everywhere are permitted). In both cases only the $t$-slices in the region with $k(t)\leq 0$ are Cauchy. Observe that, contrary to the singular $k(t)$-conformal case, here whenever $k(t)>0$ and $\dot{k}(t)>0$ the singularity approaches the origin. 
    }
    \label{fig:krad_globhyp}
\end{figure}

In conclusion:
\begin{thm}\label{thm:krad_globhyp}
    A $k(t)$-radial spacetime is globally hyperbolic if there exists $a \leq t_0\leq b$ such that (see Fig.~\ref{fig:krad_globhyp})
    \begin{equation}
        \text{(i)}\ 
        \begin{cases}
            k(t)\leq 0 &\text{if } t\leq t_0 \\
            k(t)>0, \dot{k}(t)>0 &\text{if } t>t_0
        \end{cases}
        \qquad \text{or}\qquad 
        \text{(ii)}\ 
        \begin{cases}
            k(t)>0, \dot{k}(t)<0 &\text{if } t\leq t_0 \\
            k(t)\leq 0 &\text{if } t>t_0
        \end{cases}\,.
    \end{equation}
\end{thm}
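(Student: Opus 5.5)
The plan is to glue the two regimes already handled, Prop.~\ref{t_radial0} (for $k\leq 0$) and Prop.~\ref{p_radial} (for $k>0$ with $\dot k$ of constant sign), using Lem.~\ref{l_marc}(1). I will treat case (i) in detail; case (ii) follows by time reversal $t\mapsto -t$, which exchanges the signs of $\dot k$ and swaps past and future. The limit cases $t_0=a$ or $t_0=b$ reduce directly to one of the two propositions.

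In case (i), I will split $\M$ along the temporal function $t$ into $\M_-:=t^{-1}((a,t_0])$ and $\M_+:=t^{-1}((t_0,b))$. On $\M_-$ we have $k\leq 0$, so the proof of Prop.~\ref{t_radial0} applies verbatim to the closed region. It shows that $\M_-$ is globally hyperbolic with Cauchy $t$-slices, by comparing cones on compact $t$-intervals with an open FLRW model of curvature $-k^+$. On $\M_+$ we have $k>0$ and $\dot k>0$, and the argument of Prop.~\ref{p_radial} shows that $J(p,q)$ stays in a compact set. The key ingredient there is Lem.~\ref{l_3metrics3}: near $\bar r=\pi/2$ the hypersurfaces $H_\epsilon$ are spacelike, and future cones point toward increasing $\bar r$. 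So $\M_+$ is globally hyperbolic, although its $t$-slices are not Cauchy.

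The difficulty is that Lem.~\ref{l_marc}(1) needs the slices of the \emph{later} region to be Cauchy for it, and here the Cauchy $t$-slices lie in the earlier region. I therefore plan to use the time-dual version of Lem.~\ref{l_marc}(1), obtained by reversing time orientation. In that version the earlier region $t\leq t_0$ has Cauchy $t$-slices and the later region is merely globally hyperbolic. The proof of Lem.~\ref{l_marc}(1) dualizes directly. For $t(p)\leq t_0<t(q)$, the set $K:=J^+(p)\cap\{t\leq t_0\}$ is compact because $\{t=t_0\}$ is Cauchy in $\M_-$, and hence $K_0:=K\cap\{t=t_0\}$ is compact. Then $J^+(K_0)\cap\{t\leq t_0+\epsilon\}$ is compact by strong causality for small $\epsilon$. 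Finally, $J(K_{+\epsilon},q)$ is compact by the global hyperbolicity of $\M_+$, where $K_{+\epsilon}:=J^+(K_0)\cap\{t=t_0+\epsilon\}$. It follows that $J(p,q)$ lies in a compact union, so by property (b) the whole spacetime is globally hyperbolic. The same dual argument shows the slices $t=t_*\leq t_0$ are Cauchy for $\M$: any inextensible causal curve meets $\M_-$, because the future-directed curves from $\{t=t_0\}$ enter $\M_+$ and past-inextensibility forces a crossing.

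The main obstacle is the local step at the interface $t=t_0$, where $k(t_0)=0$ and the coordinate $\bar r$ (defined only for $k>0$) degenerates as $k\searrow 0$. Lem.~\ref{l_3metrics3} is uniform only on compact subintervals of $I_+$. So I must check that the compact set $K_0$, which is contained in $\{r\leq R\}$ for some $R$, together with its causal future for $t\in[t_0,t_0+\epsilon]$, stays away from the boundary $r=1/\sqrt{k(t)}\to\infty$. I would do this by comparing cones on $[t_0,t_0+\epsilon]\times\{r\leq 2R\}$ with a flat-like model: for small $\epsilon$ the coefficient $1/(1-kr^2)$ is close to $1$ there, so the radial speed of causal curves is bounded and they cannot reach $r=2R$. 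After this check, the rest is the gluing argument above.
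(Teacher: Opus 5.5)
Your proposal is correct and is essentially the paper's proof: the paper simply invokes Lem.~\ref{l_marc}(1), applying Prop.~\ref{t_radial0} on the closed $k\leq 0$ piece and Prop.~\ref{p_radial} on the complementary open $k>0$ piece. It leaves implicit the time-dual version that you rightly note is needed in case (i), whereas case (ii), read with $k\leq 0$ on $[t_0,b)$, fits the lemma as stated; your interface check is just the ``local reasoning'' step already contained in the lemma's proof. The only slip is in your side remark about Cauchy slices, which the theorem does not need: dualizing Lem.~\ref{l_marc}(1) shows that Cauchy hypersurfaces of $\M_+$ are Cauchy for $\M$, not that the slices $t=t_*\leq t_0$ are. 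Proving the latter requires that no past-inextensible causal curve stays in $\M_+$, which uses that for $\dot k>0$ the past cones on $H_\epsilon$ point toward decreasing $\bar r$.
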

\begin{proof}
    Both cases follows from Lem.~\ref{l_marc}(1) by applying Prop.~\ref{t_radial0} (to a closed interval) and Prop.~\ref{p_radial} (to a complementary open interval). 
\end{proof}
    
\begin{remark}\label{rem:krad_otherGHcases}
    This theorem generalizes Props. \ref{t_radial0} (which can be regarded as a limit case $t_0=a, b$) and \ref{p_radial}, but it does not exhaust all the possibilities for global hyperbolicity.  A trivial one is that $k(t)$ is a positive constant, as the spatial part here can be extended to the whole sphere, in striking difference with all the cases in the theorem (recall Rem.~\ref{r_radial}). 

    A less trivial possibility is to consider a (finite or infinite) sequence of points  $\ldots t_{-2} < t_{-1} <t_0 < t_1 <t_2 \ldots$ where the part with $k\leq 0$ matches with $k>0$ so that Lem.~\ref{l_marc} can be applied (as happens in Thm.~\eqref{thm:krad_globhyp}(ii) for $t_0$). This means that several transitions between $k>0$ and $k\leq 0$ can occur without spoiling global hyperbolicity. This can be done in the $k(t)$-radial metric because it is the only one among the three $k(t)$ spacetimes in which sign changes in $k(t)$ do not correspond to changes in the  topology of the constant-$t$ slices (see Rem.~\ref{rem:krad_topology}). 
\end{remark}

\section{Killing vector fields}
\label{Sec:isometries}

In this section, we work in terms of the dimension of the fiber $\dimF= n-1\geq2$ (i.e., the spheres at constant $t$ and $r$) for convenience and reintroduce the function $a(t)$.

\subsection{Necessary and sufficient conditions for FLRW metric}

\begin{lemma}  \label{FLRW_kdot}
    Let $\teng$ be any of the $k(t)$-radial, $k(t)$-warped or $k(t)$-conformal metrics. Then $\teng$ is locally isometric to a FLRW spacetime if and only if the function $k(t)$ is constant.
\end{lemma}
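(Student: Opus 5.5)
The direction ``$k$ constant $\Rightarrow$ FLRW'' is immediate: when $k(t)\equiv k$ the three metrics reduce to the three coordinate forms \eqref{Eq:FLRW1}--\eqref{Eq:FLRW3} of the same FLRW metric. So the work is in the converse. I will use the kinematics of the congruence $\vecu=\partial_t$ computed in Sec.~\ref{SSec:congruences}. Before using it, I need to pin down the fluid velocity of a putative FLRW metric.

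Suppose $\teng$ is locally isometric to FLRW. A locally FLRW spacetime that is not locally of constant curvature has Ricci tensor of perfect-fluid form $\R_{ab}=A\,u^{\text{F}}_a u^{\text{F}}_b + B\, g_{ab}$, with $A\neq 0$ and a uniquely determined unit timelike eigenvector $u^{\text{F}}$. This $u^{\text{F}}$ is shear-free. The constant-curvature case (Minkowski, de Sitter, anti-de Sitter) I treat separately by demanding that the Weyl tensor and the trace-free Ricci tensor both vanish.

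The main step is to show that $\vecu=\partial_t$ must coincide with $u^{\text{F}}$, or else reach a contradiction directly. By spherical symmetry, $\R_{tA}=\R_{rA}=0$, so the timelike Ricci eigenvector lies in the span of $\partial_t,\partial_r$. Moreover, the $\R_{AB}$ eigenvalue must equal the spatial eigenvalue $B$. Hence, in the orthonormal frame $\{\partial_t, W^{-1}\partial_r\}$, the $(t,r)$ block of the trace-free Ricci tensor has a null eigenvalue structure forcing $\R_{tr}=0$, or else a degenerate case which I will rule out by explicit computation. Then $\R_{tr}=0$ gives
\[
\frac{\dot W f'}{W f}=\frac{\dot f'}{f}.
\]
Together with shear-freeness of $\partial_t$, namely $\dot W/W=\dot f/f$, this is what I expect to force $\dot k=0$.

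Actually it is simpler to use the shear condition alone, once $\vecu=u^{\text{F}}$ is known. Inserting \eqref{Eq:Warp_Function}--\eqref{Eq:Base_warped} into $\sigma^2=\frac{n-1}{n}\big(\dot W/W-\dot f/f\big)^2$ gives a nonzero multiple of $\dot k$ in each case:
\begin{itemize}
\item[(i)] for the $k(t)$-radial metric, $\dot W/W-\dot f/f=\dfrac{\dot k r^2}{2(1-kr^2)}$;
\item[(ii)] for the $k(t)$-conformal metric, the difference is $\dfrac{2\dot k r^2}{1+kr^2}-\dfrac{\dot k r^2}{1+kr^2}=\dfrac{\dot k r^2}{1+kr^2}$;
\item[(iii)] for the $k(t)$-warped metric, the difference is $-\partial_t\log S_{k(t)}(r)=-\dot k\,\partial_k \log S_k(r)$, and $\partial_k\log S_k(r)\neq0$ for $r>0$ (for instance, it equals $-r^2/6+O(r^4)$).
\end{itemize}
In each case vanishing shear on an open set forces $\dot k=0$.

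I expect the main obstacle to be the identification $\vecu=u^{\text{F}}$, that is, showing the Ricci tensor of the $k(t)$ metrics has $\partial_t$ as its eigenvector whenever it has perfect-fluid form. I would try this first via the $\R_{tr}$ component above, using the explicit curvatures of App.~\ref{app:curvatures}. Where $\R_{tr}$ is proportional to $\dot k$ times a nonvanishing function, the claim follows immediately.

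In the degenerate and constant-curvature cases, where the Ricci tensor alone does not fix $u^{\text{F}}$, I would fall back on the Weyl tensor. The computation \eqref{eq:CC_kradial} shows $C_{abcd}C^{abcd}\neq 0$ when $\dot k\neq0$ in the radial case, whereas FLRW is conformally flat; the warped case would be handled the same way. The conformal case is conformally flat by Prop.~\ref{prop:kconf_Weylzero}, so there I must use $\R_{tr}$, which I expect to be a nonzero multiple of $\dot k\, r$.
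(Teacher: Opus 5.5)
\textbf{There is a genuine gap, and it lies in the $k(t)$-conformal case, which is the hard case.}

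\emph{The identification step is false in general.} You claim that perfect-fluid form of the Ricci tensor forces $\R_{tr}=0$. Take $u^{\text{F}}=\cosh\psi\,\partial_t+\sinh\psi\,W^{-1}\partial_r$, tilted in the $(t,r)$-plane. Then $\R_{ab}=A\,u^{\text{F}}_a u^{\text{F}}_b+B\,g_{ab}$ has orthonormal component $\R_{\hat t\hat r}=-A\cosh\psi\sinh\psi\neq 0$. The only necessary condition is that the $(t,r)$-block of $\R^i{}_j-\R^\theta{}_\theta\delta^i{}_j$ be singular, i.e.\ $\R^t{}_r\R^r{}_t=(\R^t{}_t-\R^\theta{}_\theta)(\R^r{}_r-\R^\theta{}_\theta)$. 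This is the quantity $\mathrm{Pol}$ of Prop.~\ref{perfect-fluid is RW}. So you never establish $\vecu=u^{\text{F}}$.

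\emph{Your shear computation (ii) is wrong.} For $\gc$ one has $f=rW$, hence $\dot W/W=\dot f/f$, and the shear of $\partial_t$ vanishes identically (App.~\ref{app:congruence_kconf}). So for the $k(t)$-conformal metric neither the Weyl tensor (Prop.~\ref{prop:kconf_Weylzero}) nor the shear detects $\dot k$, and your argument gives nothing there. Your remark that you ``must use $\R_{tr}$'' there is the right instinct, but it relies on the false general claim above for why $\R_{tr}$ should vanish.

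\emph{The radial and warped cases are fine, but not by the shear route.} There the shear detour is unnecessary. The Weyl argument you keep as a fallback is already a complete proof. For the radial metric it is essentially the paper's argument with $C_{trtr}$. For the warped metric it needs the Weyl computation, which the paper takes from Avalos.

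\emph{The missing idea for the conformal case.} For $\gc$ the tilt is not free. From App.~\ref{app:curvatures}, $\R_{AB}=r^2\R_{rr}\gS_{AB}$, i.e.\ $\R^r{}_r=\R^\theta{}_\theta$. This follows from $\partial_t$ being shear-free with constant-curvature leaves. The singularity condition therefore collapses to $\R^t{}_r\R^r{}_t=-\R_{tr}^2/W^2=0$. Since $\R_{tr}=2(n-1)r\dot k/(1+kr^2)^2$, this gives $\dot k=0$. It also covers the Einstein/constant-curvature case, where $\R^t{}_r=0$ trivially.

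This is exactly the computation $\mathrm{Pol}=-m^2r^2\dot k^2/(a^2(1+kr^2)^2)$ that the paper uses later. It involves no circularity and is more elementary than the paper's proof of this lemma. The paper instead invokes the Mars--Vera characterization of FLRW for shear-free geodesic congruences with constant-curvature leaves, and evaluates $\Delta_{\teng_t}\Theta|_{r=0}=-\frac{(m+1)^2}{2a^2}\dot k$.
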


\begin{proof}
The case when $\teng$ is $k(t)$-warped was proved in \cite[Thm. 4.1]{Avalos2022}), so we only need to consider the $k(t)$-radial and $k(t)$-conformal metrics.
  
Sufficiency is obvious, so we only need to prove that if $\teng$ is locally FLRW then $k(t)$ is necessarily constant. Assume that at every point $p$, there exists an open neighborhood $\U_p$ of $p$ where $\teng$ is isometric to a FLRW metric. 

For the $k(t)$-radial case the component $\{t,r,t,r\}$ of the Weyl tensor is
\begin{equation}
    C_{trtr} = -\frac{\dimF-1}{4(\dimF+1)} \frac{a r^2}{(1 - k r^2)^3} 
    \left[3 a r^2 \dot{k}^2 +2 (1- k r^2) \left ( \dot{a} \dot{k} + a \ddot{k} \right ) \right].
    \label{Eq:Weylkradial}
\end{equation}
This quantity must be zero. The functions  $r^2$ and $1-k r^2$ are functionally independent (as functions of $r$), so the factors in front of them must vanish. This yields $\dot{k}=0$ on $\U_p$. Since $\M$ is connected, we conclude that $k(t)$ is a constant function.

In the case when $\teng$ is $k(t)$-conformal we cannot use a similar argument because now the Weyl tensor vanishes identically (see Prop.~\ref{prop:kconf_Weylzero}). However, in this case the shear of the congruence $\partial_t$, which is geodesic in all cases, vanishes identically (see App.~\ref{app:congruence_kconf}). Moreover, this congruence is hypersurface orthogonal and the orthogonal spaces have constant curvature $k(t)$. Therefore, \cite[Thm. 4.2]{MarsVera2024} ensures that if $\teng$ is locally FLRW then the spatial gradient and Laplacian of the expansion $\Theta$ of $\partial_t$ must vanish along any (arbitrary) timelike curve. It suffices to take the curve at the origin of the radial coordinate, that is $\mathfrak{o}(t)=\{t,r=0\}$, where we have that
\begin{align}
    \mathrm{grad}_{\teng_t} \Theta|_{\mathfrak{o}(t)}&=0\,,\\
    \Delta_{\teng_t} \Theta|_{\mathfrak{o}(t)}&=-\frac{(\dimF+1)^2}{2 a(t)^2}\dot k(t)\,,
\end{align}
where $\teng_t$ is the induced metric on the hypersurfaces of constant $t$, and $\mathrm{grad}_{\teng_t}$ and $\Delta_{\teng_t}$ are respectively the associated gradient and Laplacian. Thus $k(t)$ is constant and the result follows.
\end{proof}

\begin{proposition}\label{perfect-fluid is RW}
  Let $\teng$ be any of the $k(t)$-radial, $k(t)$-warped or $k(t)$-conformal metrics.
  Then $\teng$ is locally isometric to a FLRW spacetime if and only if the Einstein tensor has the algebraic structure of a perfect fluid.
\end{proposition}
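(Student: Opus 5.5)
The direction ``locally FLRW $\Rightarrow$ perfect fluid'' is immediate: the Einstein tensor of any FLRW metric has the form $G_{ab}=(\rho+p)u_au_b+p\,g_{ab}$ with $u=\partial_t$, and this algebraic form is invariant under local isometries. For the converse I will show that a perfect-fluid Einstein tensor forces $\dot k\equiv 0$, and then conclude with Lem.~\ref{FLRW_kdot}.

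\textbf{Reduction to the base.} I will use the warped-product structure \eqref{eq:warpedprod} and the Ricci components \eqref{eq:R_WF1}--\eqref{eq:R_WF6}. Since $\R_{tA}=\R_{rA}=0$ and $\R_{AB}\propto\gS_{AB}$, the mixed tensor $G^a{}_b$ is block diagonal: a $2\times2$ block $G^i{}_j$ on the base $\N$, and $P\,\delta^A{}_B$ on the fibre, with $P:=G^A{}_A/\dimF$. A perfect fluid $G^a{}_b=(\rho+p)v^av_b+p\,\delta^a{}_b$ has an eigenvalue $p$ of multiplicity $n$ on $v^\perp$. Because $\dimF\ge2$, this eigenspace must meet the fibre directions, so $p=P$.

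Next I would argue that $v$ is tangent to the base. Generically the eigenvalue $-\rho$ differs from $p$, and then $v$ must be the unique timelike eigenvector, which by the block structure lies in the base. In the degenerate case $\rho+p=0$ the Einstein tensor is pure trace, and the same equations below hold trivially. So the perfect-fluid condition is equivalent to the statement that $G^i{}_j-P\,\delta^i{}_j$ has rank at most one with a timelike kernel complement. In particular it requires
\begin{equation}
 \det\!\big(G^i{}_j-P\,\delta^i{}_j\big)=0,
 \quad\text{i.e.}\quad
 \big(G^t{}_t-P\big)\big(G^r{}_r-P\big)-G^t{}_r\,G^r{}_t=0 .
\end{equation}

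\textbf{Evaluating the condition metric by metric.} I will insert the specific $f$ and $W$ from \eqref{Eq:Warp_Function}--\eqref{Eq:Base_warped} (or the explicit curvatures of App.~\ref{app:curvatures}). At fixed $t$, the determinant condition becomes a function of $r$ that must vanish identically.
\begin{itemize}
 \item For the $k(t)$-radial and $k(t)$-conformal metrics it is a rational function in $r^2$ with denominators that are powers of $1-kr^2$ (resp.\ $1+kr^2$). I expect the coefficient of the highest power of $r$, or the most singular term at the pole $r^2=\pm1/k$ when $k\neq0$, to be a nonzero multiple of $\dot k^2$. That already gives $\dot k=0$.
 \item Where $k(t)=0$ I would instead expand around $r=0$ and read off a low-order coefficient proportional to $\dot k$ (or $\dot k^2$).
 \item For the $k(t)$-warped metric the same is done with the functions $\hat r^2$, $\hat r\cot\hat r$ and $\hat r^2\cot^2\hat r$, as in \eqref{eq:Rkwar_kpositive}. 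Their functional independence isolates a coefficient proportional to $\dot k^2$, as happened with $\alpha_3$.
\end{itemize}
Having $\dot k=0$ on an open set of times, I extend it by continuity and the connectedness of $I$, and then Lem.~\ref{FLRW_kdot} finishes the proof.

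\textbf{Main obstacle.} The hardest step is the explicit computation of the determinant and the extraction of a coefficient that is guaranteed nonzero unless $\dot k=0$. The concern is that cross terms involving $\dot a$ and $\ddot k$ might cancel the $\dot k^2$ term. My first attempt is to use the leading singular behaviour at $r^2\to\pm1/k$: there the shear of $\partial_t$ blows up, and the dominant contribution should come only from $\dot k^2$ terms, as in \eqref{Eq:tidal_conformal} and \eqref{Eq:tidal_radial}. If that is inconclusive, I fall back on the small-$r$ Taylor expansion: for $f$ and $W$ the first $k$-dependent correction is $O(r^2)$, so $G^t{}_r$ is $O(r)$ with coefficient $\propto\dot k$, while $G^t{}_t-P$ and $G^r{}_r-P$ are $O(r^2)$. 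Matching the $r^2$ order in the determinant should then yield a single relation that forces $\dot k=0$.
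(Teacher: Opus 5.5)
Your argument is essentially the paper's proof: the block structure of $G^a{}_b$, the vanishing of $\det(G^i{}_j-G^\theta{}_\theta\,\delta^i{}_j)$, a metric-by-metric extraction of $\dot k=0$, and then Lem.~\ref{FLRW_kdot}. Your dimension count for the determinant condition is, if anything, more careful than the paper's wording. Your primary extraction via the most singular terms (or functional independence) also goes through:
\begin{itemize}
\item the conformal determinant is exactly $-(n-1)^2r^2\dot k^2/\big(a^2(1+kr^2)^2\big)$;
\item in the radial case the coefficient of $(1-kr^2)^{-4}$ at $r^2=1/k$ is $9\dot k^4/(16k^4)$;
\item in the warped case the top $\hat r^4\cot^4\hat r$ term is a nonzero multiple of $\dot k^4$ for $n\ge 3$.
\end{itemize}

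Drop the small-$r$ fallback, however. At the regular centre $G^r{}_r-G^\theta{}_\theta=O(r^2)$, but $G^t{}_t-G^\theta{}_\theta$ does not vanish: it tends to $-(n-1)(k+\dot a^2-a\ddot a)/a^2$, i.e.\ to $-(\rho+p)$ there. So the $r^2$-coefficient of the determinant is a single ODE coupling $a$ and $k$, and it does not force $\dot k=0$. For the radial metric with $a\equiv 1$ it reads $2k\ddot k=(n-1)\dot k^2$, which has non-constant solutions.

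The same defect affects your separate low-order treatment of times with $k(t)=0$. That step is in any case unnecessary. If $\dot k(t_1)\neq 0$, then $k$ is strictly monotone near $t_1$, so there are nearby times with $k\neq 0\neq\dot k$. At those times the singular-term (or functional-independence) argument already gives a contradiction.
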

\begin{proof}
  The ``only if'' part is direct because FLRW is of perfect fluid type.
  For the ``if'' part it suffices to show that imposing the Einstein tensor to be of perfect fluid type forces  $\dot{k}(t)=0$, and apply Lem.~\ref{FLRW_kdot}. In the three cases the 
  Einstein tensor $G^\mu{}_\nu$ of $\teng$ in spherical coordinates $\{t,r, \theta^A\}$ with $A=1,\ldots,\dimF$  decomposes in the block $\{y^i\}:=\{t,r\}$ and the angular block $\{\theta^A\}$. The angular block is already proportional to the identity, that is, denoting $\theta^1=\theta$, we can write $G^A{}_B=G^\theta{}_\theta \delta^A{}_B$. The assumption of a perfect fluid type implies the existence of a timelike eigendirection $v^i$ in the block  $\{y^i\}$ with eigenvalue $G^\theta{}_\theta$,  that is $G^i{}_j v^j=\lambda v^i$ with $\lambda=G^\theta{}_\theta$. It is thus necessary that the characteristic polynomial $\mbox{det}(G^i{}_j - G^\theta{}_\theta \delta^i{}_j)$ vanishes everywhere, explicitly
  \begin{align}\label{eq:char_poly}
    \mathrm{Pol}:= G^t{}_r G^r{}_t -(G^t{}_t-G^\theta{}_\theta)(G^r{}_r-G^\theta{}_\theta)=0.
  \end{align}
  We shall analyze this equation in the three cases and find that it implies $\dot k(t)=0$. 
  
  In the $k(t)$-conformal case we have
  \begin{equation}
    \mathrm{Pol} = -\dimF^2 r^2\dot{k}^2\frac{1}{a^2(1+k r^2)^2},
  \end{equation}
  whose vanishing clearly requires $\dot{k}=0$.

  In the $k(t)$-radial case the explicit form of $\mathrm{Pol}$ is not that simple,
  but it is of the form
  \begin{equation}
    \mathrm{Pol}=\frac{r^2}{a^3 (1-kr^2)^4}\mathrm{p}(r^2),
  \end{equation}
  where $\mathrm{p}(x)$ is a polynomial of degree 3 in $x$
  with coefficients depending on $a(t)$ and $k(t)$ up to their second
  derivatives. Equation \eqref{eq:char_poly} entails the vanishing
  of those four coefficients. It is not difficult to show
  that they vanish only when $\dot{k}=0$.

  In the $k(t)$-warped case (and for $\dimF\geq 2$) the equation  \eqref{eq:char_poly} takes the form 
  \begin{align}
    (- a^3 \dot{k}^4)r^4 &+\mathrm{c}_0 r^3 C_{k(t)}(r)S_{k(t)}(r) +(\mathrm{c}_1+\mathrm{c}_2 r^2)r^2 S_{k(t)}^2(r) \nonumber\\
    &+(\mathrm{c}_3+\mathrm{c}_4 r^2)r C_{k(t)}(r) S_{k(t)}^3(r)+(\mathrm{c}_5+\mathrm{c}_6 r^2)S_{k(t)}^4(r)=0,
  \end{align}
  where $C_{k(t)}$ is defined in \eqref{eq:Ctr}, $\{\mathrm{c}_p\}_{p=0,1\ldots 6}$ are functions of $a(t)$ and $k(t)$ (up to their second derivatives), and all vanish if $\dot k(t)=0$. Finally, we use that the combinations $r^{p_1} S_{k(t)}^{p_2}(r) C_{k(t)}^{p_3}(r)$ are linearly independent for different values of the exponents $p_1$, $p_2$ and $p_3$. Therefore, just by looking at the first term, we obtain $\dot k(t)=0$, as claimed.
\end{proof}

\subsection{Killing vector fields in spherically symmetric spaces (arbitrary dimension)}

We recall the following facts of the standard sphere $(\mathbb{S}^\dimF, \tengS)$, $\dimF \geq 2$. Viewing $\mathbb{S}^{\dimF}$ as the unit sphere in $\mathbb{R}^{\dimF+1}$ we can define $Y^{\iEi}$ ($\iEi =1, \cdots, \dimF+1$) as the pullback of the Cartesian coordinates $x^{\iEi}$ of $\mathbb{R}^{\dimF+1}$. Let us define the vectors fields on $\mathbb{S}^\dimF$
\begin{equation}
  \vecZ^{\iEi} := \gradS Y^{\iEi}, \qquad
  \vecze^{\iEi \iEj} := Y^{\iEi} \gradS Y^{\iEj} - Y^{\iEj} \gradS Y^{\iEi}.
\end{equation}
It is a well-known fact\footnote{
    The result is easy to prove using the fact that $x^{\iEi}$ has vanishing Hessian in $(\mathbb{R}^{\dimF+1}, g_E)$ and that the rotation generators $x^{\iEi}\partial_{\iEj} - x^{\iEj} \partial_{\iEj}$ are both tangent to $\mathbb{S}^\dimF$ and Killing vectors of Euclidean space.
    } 
that the $(\dimF+1)(\dimF+2)/2$  vector fields\footnote{
    We are denoting those $\vecze^{\iEi\iEj}$ with $\iEi<\iEj$ as  $\vecze^{\iEi<\iEj}$.

}
$\{ \vecZ^{\iEi} ,\vecze^{\iEi<\iEj}\}$ span the conformal Killing algebra of $(\mathbb{S}^\dimF, \tengS)$, while the Killing algebra is spanned by $\{ \vecze^{\iEi< \iEj}\}$. The proper conformal Killing vectors satisfy
\begin{equation} \label{ConfS2}
  \pounds_{\vecZ^{\iEi}} \tengS = - 2 Y^{\iEi} \tengS.
\end{equation}
Consider now a warped product  metric
\begin{equation}
    \teng = \tengB + f^2 \tengS \,,
\end{equation}
where now $\tengB$ is a generic 2-dimensional base and $f$ is an arbitrary warping factor. Later we will particularize to \eqref{eq:basespace}-\eqref{Eq:Base_warped}. Assume that this metric is defined in a product space $\M = \N \times \mathbb{S}^\dimF$ where $ \N$ is the base and $\mathbb{S}^\dimF$ is the fiber. We want to find the Killing vectors of $\teng$. At every point we can decompose any vector field $\vecxi$ as the sum of a vector $\vecxiB$ tangent to the base and a vector $\vecxiF$ tangent to the fiber
\begin{equation}
  \vecxi = \vecxiB + \vecxiF.
\end{equation}
The submanifold $\N_q := \N \times \{q\}$, $q \in \mathbb{S}^\dimF$ is totally geodesic. Hence \cite[Lem. 3.5]{Sanchez1999}  $\vecxiB$ restricted to $\N_q$  is a Killing vector of $\tengB$. The Killing equations are then equivalent to (see \cite{Sanchez1999}):
\begin{align}
  \pounds_{\vecxiF} \tengS = - \frac{2 \vecxiB (f)}{f} \tengS, \label{ABcomps}\\
  h_{ij} \partial_A \xiB^j + f^2 \gS_{AB} \partial_i \xiF^B =0, \label{Aacomps}
\end{align}
where in the second we have used coordinates $\{x^i, x^A\}$ adapted to the product structure. The first equation means that $\vecxiF$  restricted to $\mathbb{S}^m_p := \{p\} \times \mathbb{S}^\dimF$, $p \in \N$ is a conformal Killing vector of $\tengS$. Hence it is a linear combination of $\{\vecZ^{\iEi}, \vecze^{\iEi < \iEj}\}$, i.e.
\begin{equation}
  \vecxiF \Big|_{\mathbb{S}^m_p} = \beta_{\iEi} \vecZ^{\iEi} + \sum_{\iEi < \iEj} \mu_{\iEi \iEj} \vecze^{\iEi \iEj}.
\end{equation}
The ``constants'' $\{ \beta_{\iEi}, \mu_{\iEi \iEj}\}$ can depend on $p$, so they define functions $\beta_{\iEi}, \mu_{\iEi \iEj} : \N \to \mathbb{R}$.

Let $\mathfrak{n}$ be the dimension of the Killing algebra of $(\N,\tengB)$. The case $\mathfrak{n}=0$ is trivial because $\vecxiB$ is necessarily zero and then Eqs.~\eqref{ABcomps} and \eqref{Aacomps} simply state that $\vecxi = \vecxiF$ is a Killing vector of $(\mathbb{S}^\dimF, \tengS)$, so the Killing algebra of the spacetime is $(\dimF(\dimF+1)/2)$-dimensional and spanned by $\{ \vecze^{\iEi < \iEj}\}$.

Let us therefore assume in the remainder that $\mathfrak{n} >0$ and let $\{\vecX_{\mathfrak a}\}$, ($\iLa=1, \cdots, \mathfrak{n})$ be a basis of the Killing algebra of $(\N,\tengB)$. Since $\vecxiB$ is a Killing vector of $(\N,\tengB)$, there exist constants such that $\vecxiB = \alpha^{\iLa} \vecX_{\iLa}$. Note that, as before, the ``constants'' $\alpha^{\iLa}$  can change with the leaf, so in fact $\alpha^{\iLa} : \mathbb{S}^\dimF \to \mathbb{R}$. Equation \eqref{Aacomps} takes the form (cf. \cite[Prop. 3.8]{Sanchez1999})
\begin{equation}
  (\dd \alpha^{\iLa})_A h_{ij} X_{\iLa}{}^j+ f^2 \left[ (\partial_i \beta_{\iEi}) (\dd Y^{\iEi})_A  + \sum_{\iEi < \iEj} (\partial_i \mu_{\iEi \iEj}) (Y^{\iEi} \dd Y^{\iEj} )_A  \right]  =0,
  \label{secondformI}
\end{equation}
while, after using \eqref{ConfS2}, \eqref{ABcomps} is equivalent to
\begin{equation}\label{eq:betaiI}
    \alpha^{\iLa} \vecX_{\iLa}(f)  =   f \beta_{\iEi} Y^{\iEi}.
\end{equation}
The first consequence is that, taking the differential $\ds$ on $\mathbb{S}^\dimF$ of \eqref{secondformI} we obtain
  \begin{equation}\label{eq:laplacian}
    0 = f^2 \sum_{\iEi < \iEj} (\partial_A\mu_{\iEi \iEj}) \ds Y^{\iEi} \wedge \ds Y^{\iEj}  =  0.
  \end{equation}
  To prove that $\mu_{\iEi \iEj}$ are constant, we need to check that the two-forms $\{ \ds Y^{\iEi} \wedge \ds Y^{\iEj}, \iEi<\iEj\}$ are linearly independent on the sphere. To show this, let $c_{\iEi \iEj} = - c_{\iEj\iEi}$ be arbitrary constants satisfying $c_{\iEi \iEj} \ds Y^{\iEi} \wedge \ds Y^{\iEj}=0$. This is equivalent to $\ds ( c_{\iEi \iEj} Y^{\iEi} \dd Y^{\iEj})=0$ and, since $\mathbb{S}^\dimF$ ($\dimF \geq 2$) is simply connected, there exists a function $u$ such that   $c_{\iEi \iEj} Y^{\iEi} \ds Y^{\iEj} = \ds u$. Recalling the definition of $\vecze^ {\iEi \iEj}$ we have that $\sum_{\iEi < \iEj}c_{\iEi \iEj} \vecze^{\iEi\iEj} =  \gradS u$.  So, the Killing vector   $\sum_{\iEi < \iEj}c_{\iEi \iEj} \vecze^{\iEi\iEj}$ is a gradient, which readily implies that it must be identically zero because a Killing vector has, in particular, vanishing divergence  and the only harmonic functions on $\mathbb{S}^\dimF$ are the constants.  Therefore \eqref{eq:laplacian}
implies  that $\mu_{\iEi \iEj}$ are constants, so that \eqref{secondformI} reduces to
\begin{equation}
  \partial_A \left(\alpha^{\iLa} \gB_{ij} X_{\iLa}{}^j+ f^2 \partial_i \beta_{\iEi} Y^{\iEi}\right) =0.
\end{equation}
Using $\partial_i\beta_{\iEi} Y^{\iEi}=\partial_i(\beta_{\iEi} Y^{\iEi})$ and introducing \eqref{eq:betaiI}, this expression becomes
\begin{equation}
  \partial_A \Big[\alpha^{\iLa} \gB_{ij} X_{\iLa}{}^j+ f^2\alpha^{\iLa} \partial_i (\vecX_{\iLa}(\ln f ))\Big] =0,
\end{equation}
which can be rewritten as
\begin{align}
  (\partial_A \alpha^{\iLa}) \vecW_{\iLa} =0
  \label{CombWI}
\end{align}
in terms of the following vector fields on $\N$
\begin{equation}\label{def:Wa}
    \vecW_{\iLa}:=\vecX_{\iLa}+ f^2 \gradB (\vecX_{\iLa}(\ln f )).
\end{equation}

To sum up, any Killing vector $\vecxi$ of $(\M,\teng)$ reads
\begin{equation}
    \vecxi=\alpha^{\iLa} \vecX_{\iLa}+\beta_{\iEi} \vecZ^{\iEi}+ \sum_{\iEi < \iEj}\mu_{\iEi \iEj} \vecze^{\iEi \iEj}\,,
\end{equation}
where the functions $\alpha^\iLa(x^A)$ and $\beta_{\iEi}(x^i)$ satisfy Eqs. \eqref{eq:betaiI} and \eqref{CombWI} and  $\mu_{\iEi \iEj}$ are arbitrary constants. Observe that since $\vecze^{\iEi \iEj}$ span the Killing algebra of the generators of the spherical symmetry of $(\M,\teng)$, the role of $\mu_{\iEi \iEj}$ is trivial.

Now, different cases will arise depending on whether $\{\vecW_{\iLa}\}$ are linearly independent or not. We start with the following result.
\begin{lemma}\label{lemma:case_A}
    Consider $0<\mathfrak{m}\leq \mathfrak{n}$ vectors $\{\vecX_{\iLb}\}$ that form a Killing subalgebra of $(\N,\tengB)$ such that the corresponding vectors $\{\vecW_{\iLb}\}$ are linearly independent. Then $\vecxi=\alpha^{\iLb}\vecX_{\iLb}$, with $\alpha^{\iLb}$ a priori functions on the total space   $\M$, is a Killing vector of $(\M,\teng)$  if and only if $\alpha^{\iLb}$ are constants and $\vecxi(f)=0$.
\end{lemma}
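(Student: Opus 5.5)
We use the reduction already carried out before the statement: a vector field $\vecxi=\vecxiB+\vecxiF$ on $\M$ is Killing if and only if $\vecxiB=\alpha^{\iLa}\vecX_{\iLa}$ with $\alpha^{\iLa}$ functions on $\mathbb{S}^\dimF$, $\mu_{\iEi\iEj}$ constant, and the two conditions \eqref{eq:betaiI} and \eqref{CombWI} hold. In the lemma, $\vecxi=\alpha^{\iLb}\vecX_{\iLb}$ has no fiber component. So $\beta_{\iEi}=0$ and $\mu_{\iEi\iEj}=0$, and only the index set $\iLb$ of the subalgebra enters.

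\emph{Necessity.} Suppose $\vecxi$ is Killing. First, $\vecxiB$ restricted to each $\N_q$ is a Killing vector of $\tengB$, so on each leaf it is a constant-coefficient combination of the $\vecX_{\iLb}$. Hence the $\alpha^{\iLb}$ do not depend on the base coordinates, i.e.\ they are functions on $\mathbb{S}^\dimF$. Next, equation \eqref{eq:betaiI} with $\beta_{\iEi}=0$ gives $\alpha^{\iLb}\vecX_{\iLb}(f)=0$, which is $\vecxi(f)=0$. Finally, \eqref{CombWI} reads $(\partial_A\alpha^{\iLb})\vecW_{\iLb}=0$ for every $A$. Here $\partial_A\alpha^{\iLb}$ are constants along each base leaf $\N_q$, and $\vecW_{\iLb}$ are vector fields on $\N$ that are linearly independent by hypothesis. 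Therefore $\partial_A\alpha^{\iLb}=0$ for all $A$ and $\iLb$. Since $\mathbb{S}^\dimF$ is connected, the $\alpha^{\iLb}$ are constants.

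There is one subtle point, which is the main obstacle. It must be made clear that linear independence of $\{\vecW_{\iLb}\}$ means independence over $\mathbb{R}$ as vector fields, not pointwise. That is exactly what is needed, because at fixed $q$ the coefficients $\partial_A\alpha^{\iLb}(q)$ are real numbers, independent of the point of $\N$.

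\emph{Sufficiency.} Let $\alpha^{\iLb}$ be constants with $\vecxi(f)=0$. Then $\vecxi$ is a Killing vector of $\tengB$ lifted to $\M$, and it is tangent to the base. We compute directly
\[
\Lie_{\vecxi}\teng=\Lie_{\vecxi}\tengB+2f\,\vecxi(f)\,\tengS+f^2\Lie_{\vecxi}\tengS .
\]
The first term vanishes because $\vecxi$ is Killing for $\tengB$ and is independent of the fiber coordinates. The second term vanishes by $\vecxi(f)=0$. The third vanishes because $\vecxi$ has no fiber components and $\gS_{AB}$ depends only on $\theta$. Equivalently, \eqref{ABcomps} and \eqref{Aacomps} hold trivially with $\vecxiF=0$ and $\partial_A\xiB^j=0$. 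Hence $\vecxi$ is Killing.
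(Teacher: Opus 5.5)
Your proof is correct and follows the paper's argument. Constancy of the $\alpha^{\iLb}$ comes from \eqref{CombWI} together with the independence of the $\vecW_{\iLb}$ over $\mathbb{R}$, $\vecxi(f)=0$ comes from \eqref{eq:betaiI}, and sufficiency is read off from \eqref{ABcomps}--\eqref{Aacomps}. The only difference is that you set $\beta_{\iEi}=0$ outright because $\vecxi$ has no fiber part, whereas the paper extracts $\beta_{\iEi}=0$ from the linear independence of $\{1,Y^{\iEi}\}$.

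One caveat, which applies equally to the paper's proof. Both of you take for granted that the $\alpha^{\iLb}$ depend only on the sphere, and that is the intended reading. But for a proper subalgebra this does not follow merely from $\vecxiB|_{\N_q}$ being Killing. For example, in $\mathbb{L}^2$ with $f=e^{x^2-t^2}$, the boost $x\partial_t+t\partial_x$ is Killing on $\M$. Yet it has non-constant coefficients in the translation subalgebra $\{\partial_t,\partial_x\}$, whose $\vecW$'s are $(1+2f^2)\partial_t$ and $(1+2f^2)\partial_x$ and hence independent.
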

\begin{proof}
    Sufficiency is obvious from \eqref{ABcomps} and \eqref{Aacomps}. For necessity, we already know that $\alpha^{\iLb}$ can at most depend on the angular variables. Since the vectors $\{\vecW_{\iLb}\}$ are linearly independent, Eq.~\eqref{CombWI} requires that $\alpha^{\iLb}$ are constants. In turn, since $\{1,Y^{\iEi}\}$ are linearly independent functions on $\mathbb{S}^\dimF$, Eq.~\eqref{eq:betaiI} holds if and only if $\beta_{\iEi}=0$ and $\vecxi(f)= \alpha^{\iLb} \vecX_{\iLb}(f)=0$.
\end{proof}
The application of the above for $\mathfrak{m}=1$ leads to the following well-known result.
\begin{corollary}\label{coro:X_killing}
    Let $\vecxi$ be a Killing vector of $(\N,\tengB)$. It is also a Killing vector   of $(\M,\teng)$  if and only if $\vecxi(f)=0$.
\end{corollary}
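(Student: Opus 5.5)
The plan is to deduce the corollary from Lem.~\ref{lemma:case_A} with $\mathfrak{m}=1$, treating the degenerate case $\vecxi=0$ separately and then checking that the single vector $\vecW$ associated with $\vecxi$ is nonzero, so that the lemma's linear independence hypothesis holds.

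If $\vecxi\equiv 0$ there is nothing to prove: it is trivially Killing for $(\M,\teng)$ and $\vecxi(f)=0$. So assume $\vecxi\neq 0$ and set $\vecX_1:=\vecxi$. The span of a single nonzero Killing vector is a one-dimensional Killing subalgebra of $(\N,\tengB)$, since $[\vecX_1,\vecX_1]=0$. The corresponding vector field from \eqref{def:Wa} is
\[
\vecW_1=\vecxi+f^2\,\gradB\big(\vecxi(\ln f)\big),
\]
and a single vector counts as linearly independent exactly when it is not the zero vector field.

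Once that is checked, Lem.~\ref{lemma:case_A} applies with $\alpha^1\equiv 1$, which is constant. It then says that $\vecxi=\alpha^1\vecX_1$ is a Killing vector of $(\M,\teng)$ if and only if $\vecxi(f)=0$, which is the claim. The sufficiency direction does not even need the lemma. If $\vecxi(f)=0$, then \eqref{ABcomps} holds with $\vecxiF=0$, and \eqref{Aacomps} holds because the components of $\vecxi$ do not depend on $\theta^A$. So all the work lies in necessity.

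The main obstacle is showing that $\vecW_1\not\equiv 0$. It could happen that $\vecxi=-f^2\gradB(\vecxi(\ln f))$ everywhere, and in that case Lem.~\ref{lemma:case_A} cannot be quoted. To handle this I would redo the necessity argument directly for $\vecxi$ lifted trivially, i.e.\ with $\alpha$ constant. Equation \eqref{CombWI} is then vacuous, so the only remaining condition is \eqref{eq:betaiI}, namely $\vecxi(f)=f\beta_{\iEi}Y^{\iEi}$. Its left-hand side is independent of the point of the sphere, because $\vecxi$ and $f$ live on $\N$. Since the functions $\{1,Y^{\iEi}\}$ are linearly independent on $\mathbb{S}^\dimF$, this forces $\beta_{\iEi}=0$ and hence $\vecxi(f)=0$.

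Read this way, as a statement about the trivial lift of $\vecxi$ (so $\alpha$ is constant), the corollary does not depend on the independence of $\vecW$ at all. I would present it like that, and remark that the lemma covers the case where $\alpha$ is allowed a priori to vary over the sphere.
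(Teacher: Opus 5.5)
Your proof is correct, and it is somewhat more careful than the paper's. The paper simply states that the corollary is Lem.~\ref{lemma:case_A} with $\mathfrak{m}=1$.

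That one-line derivation tacitly requires $\vecW=\vecxi+f^2\gradB(\vecxi(\ln f))\not\equiv 0$, and you are right that this can fail. It fails precisely where the paper later uses the corollary. In Rem.~\ref{ExistenceX} one has $\vecX=-f^2\gradB(\vecX(\ln f))$, so the $\vecW$ attached to $\vecX$ vanishes identically, and the lemma's hypothesis does not hold there.

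Your key observation closes this gap. The independence of the $\vecW$'s is used in the lemma only to force the coefficients $\alpha$ to be constant. For the trivial lift of a fixed Killing vector of $\tengB$ this is automatic, so \eqref{CombWI} is vacuous and \eqref{eq:betaiI} alone yields $\vecxi(f)=0$.

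You can shorten the argument further. For the trivial lift, $\vecxiF=0$, so $\beta_{\iEi}=0$ from the start and \eqref{ABcomps} reads $\vecxi(f)=0$ directly; there is no need to invoke the linear independence of $\{1,Y^{\iEi}\}$. Equivalently, $\Lie_{\vecxi}\tengS=0$ and $\Lie_{\vecxi}\tengB=0$ for the lift, so $\Lie_{\vecxi}\teng=2f\,\vecxi(f)\,\tengS$. This gives both directions at once, with no appeal to the lemma and no need to treat $\vecxi\equiv 0$ separately.
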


On the other hand, if $\{\vecW_{\iLa}\}$ are linearly dependent we find the following interesting result.
\begin{lemma}\label{lemma:case_B}
  Assume that $\{\vecW_{\iLa}\}$ are linearly dependent, satisfying $\sigma^{\iLa}\vecW_{\iLa}=0$ for $\sigma^{\iLa}\in\mathbb{R}$     not all zero. Then the vectors
    \begin{equation}
      \vecxi^{\iEi} := Y^{\iEi} \vecX + \frac{\vecX(f)}{f} \vecZ^{\iEi},
      \label{xii}
    \end{equation}
    with $\vecX := \sigma^{\iLa} \vecX_{\iLa}$, are Killing vectors of $(\M,\teng)$ and $\{\vecxi^{\iEi},\vecze^{\iEi < \iEj}\}$ are linearly independent. Therefore the Killing algebra of $(\M,\teng)$ is at least $((\dimF+1)(\dimF+2)/2)$-dimensional.\footnotemark
\end{lemma}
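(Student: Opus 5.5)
The plan is to verify directly that each $\vecxi^{\iEi}$ satisfies the two Killing conditions \eqref{ABcomps} and \eqref{Aacomps}. In the notation of the general decomposition, $\vecxi^{\iEi}$ is obtained by taking $\alpha^{\iLa}=\sigma^{\iLa}Y^{\iEi}$, $\beta_{\iEj}=\delta_{\iEi\iEj}\,\vecX(f)/f$ and $\mu_{\iEj\iEk}=0$. Two preliminary facts are used. First, $\vecX=\sigma^{\iLa}\vecX_{\iLa}$ is a Killing vector of $(\N,\tengB)$ by linearity. Second, $\vecxiB=Y^{\iEi}\vecX$ restricted to each leaf $\N_q$ is a constant multiple of $\vecX$, hence Killing for $\tengB$, as required.

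For \eqref{ABcomps} the fiber part is $\vecxiF=(\vecX(f)/f)\vecZ^{\iEi}$. The coefficient depends only on the base point, so \eqref{ConfS2} gives on each $\mathbb{S}^m_p$
\[
\Lie_{\vecxiF}\tengS=-2\frac{\vecX(f)}{f}Y^{\iEi}\tengS .
\]
Since $\vecxiB(f)=Y^{\iEi}\vecX(f)$, this is exactly $-2\vecxiB(f)\tengS/f$, equivalently \eqref{eq:betaiI}.

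For \eqref{Aacomps} I use its reformulation \eqref{secondformI}. With $\mu=0$ and the choices above, the left-hand side becomes
\[
(\dd Y^{\iEi})_A\Big[\gB_{ij}X^j+f^2\partial_i\big(\vecX(\ln f)\big)\Big].
\]
Lowering the index, the bracket is the $\tengB$-dual of $\vecX+f^2\gradB(\vecX(\ln f))=\sigma^{\iLa}\vecW_{\iLa}$, using the definition \eqref{def:Wa} and the fact that $\vecW_{\iLa}$ is linear in $\vecX_{\iLa}$. It therefore vanishes by hypothesis, and this is the step where linear dependence enters. If needed, I would re-derive \eqref{Aacomps} in adapted coordinates to confirm that \eqref{secondformI} is an equivalence and not only a consequence. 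This suffices because \eqref{ABcomps} and \eqref{Aacomps} are equivalent to the full Killing equations.

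For linear independence, suppose
\[
c_{\iEi}\vecxi^{\iEi}+\sum_{\iEj<\iEk}d_{\iEj\iEk}\vecze^{\iEj\iEk}=0 .
\]
Projecting onto the base gives $(c_{\iEi}Y^{\iEi})\vecX=0$. Since the $\sigma^{\iLa}$ are not all zero and the $\vecX_{\iLa}$ form a basis, $\vecX\neq0$ on an open set. Because the functions $Y^{\iEi}$ are linearly independent on the sphere, this forces all $c_{\iEi}=0$. The remaining relation $\sum d_{\iEj\iEk}\vecze^{\iEj\iEk}=0$ forces $d=0$, since the $\vecze^{\iEj<\iEk}$ form a basis of the sphere's Killing algebra. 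Counting gives $(m+1)+m(m+1)/2=(m+1)(m+2)/2$ independent Killing vectors.

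The main obstacle is making sure the $\alpha$-dependence on the angles is handled correctly in \eqref{Aacomps}. The term $\partial_A\alpha^{\iLa}=\sigma^{\iLa}\partial_A Y^{\iEi}$ must cancel against $f^2\partial_i\beta_{\iEi}$ with exactly the right factor. I would check this carefully by writing both terms explicitly with the index placements used in \eqref{secondformI}.
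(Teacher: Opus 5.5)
Your proof is correct and follows essentially the paper's argument. It uses the same choice $\alpha^{\iLa}=\sigma^{\iLa}Y^{\iEi}$, $\beta_{\iEj}\propto\vecX(\ln f)$, $\mu_{\iEj\iEk}=0$; the conformal condition is checked via \eqref{ConfS2}, the mixed condition reduces to the $\tengB$-dual of $\sigma^{\iLa}\vecW_{\iLa}=0$, and linear independence follows by projecting onto the base using $\vecX\neq0$ somewhere. The only difference is cosmetic: the paper checks the already-reduced conditions \eqref{eq:betaiI} and \eqref{CombWI} for a general combination $c_{\iEi}\vecxi^{\iEi}$, while you substitute directly into \eqref{secondformI}, and your explicit computation there already resolves the final obstacle you flag.
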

\footnotetext{
    Note that the maximal dimension of the Killing algebra in dimension $\dimF+2$ is $(\dimF+2)(\dimF+3)/2$ which is always greater than $(\dimF+1)(\dimF+2)/2$.
}
\begin{proof}
    First note that $\vecX$ is a non-trivial Killing vector of $(\N,\tengB)$. Let us prove that $\{\vecxi^{\iEi}, \vecze^{\iEi < \iEj}\}$ are linearly independent. Consider a vanishing linear combination $a_{\iEi} \vecxi^{\iEi} + \sum_{\iEi < \iEj} b_{\iEi \iEj} \vecze^{\iEi \iEj} =0$. The component of this vector along $\N$  is $(a_{\iEi} Y^{\iEi}) \vecX$. Since there is a point $p \in \N$ where $\vecX \neq 0$, it follows that $a_{\iEi} Y^{\iEi} =0$. This can  happen only if $a_{\iEi}=0$, so the vanishing condition reduces to $\sum_{\iEi < \iEj}b_{\iEi \iEj} \vecze^{\iEi \iEj}=0$, which implies $b_{\iEi \iEj}=0$ at once because $\{\vecze^{\iEi < \iEj}\}$ are linearly independent. So, $\{ \vecxi^{\iEi}, \vecze^{\iEi < \iEj}\}$ are linearly independent. 
  
    It remains to show that each $\vecxi^{\iEi}$ is a Killing vector, or equivalently that $\vecxi := c_{\iEi} \vecxi^{\iEi}$, $c_{\iEi} \in \mathbb{R}$ is a Killing vector. We have found above the necessary and sufficient conditions for this to happen, namely Eqs. \eqref{eq:betaiI} and \eqref{CombWI}.   In the present case we have $\alpha^{\iLa} = (c_{\iEi} Y^{\iEi}) \sigma^{\iLa}$,   $\beta_{\iEi} = \vecX(\ln f)  c_{\iEi}$ and $\mu_{\iEi \iEj}=0$. Equation  \eqref{eq:betaiI} is thus satisfied, while \eqref{CombWI} also holds trivially by virtue of $\sigma^{\iLa} \vecW_{\iLa} =0$.
  \end{proof}
  
\begin{remark} \label{ExistenceX}
   The vector $\vecX$ from Lem.~\ref{lemma:case_B} is {\it not} a Killing vector of $(\M,\teng)$ because for that to be true, we need $\vecX(f)=0$ (see Cor.~\ref{coro:X_killing}). Indeed, since $\vecX = \sigma^{\iLa} \vecX_{\iLa}$ and $\sigma^{\iLa} \vecW_{\iLa} = 0$ it follows that
   \begin{equation}\label{res:cond_on_X_case_B}
      \vecX = \sigma^{\iLa} \vecX_{\iLa} = \sigma^{\iLa} \left (\vecW_{\iLa} - f^2 \gradB (\vecX_{\iLa}(\ln f)) \right)  = - f^2 \gradB (\vecX (\ln f )).
   \end{equation}
   The left-hand side is not identically zero, so it cannot be that $\vecX(f)$ vanishes everywhere. Cor.~\ref{coro:X_killing} thus ensures that $\vecX$ is not a Killing vector of $(\M,\teng)$.
\end{remark}

\begin{remark}\label{res:one_case_B}
    Conversely, if there is a (non-trivial) Killing vector $\vecX$ of $(\N,\tengB)$ that satisfies \eqref{res:cond_on_X_case_B}, then $\vecxi^{\iEi}$ given by \eqref{xii} are Killing vectors of $(\M,\teng)$ and the Killing algebra of  $(\M,\teng)$ is at least $((\dimF+1)(\dimF+2)/2)$-dimensional.
 \end{remark}

The above results allow us to define two cases, that we introduce in the following summarizing result.
\begin{proposition}\label{res:general_spherical}
    Let $(\M,\teng)$ be a warped product metric $\M = \N \times_f \mathbb{S}^\dimF$, $\teng = \tengB + f^2 \tengS$, and let us assume its Killing algebra is of dimension at least $\frac{1}{2}m(m+1)+1$. Necessarily  the Lorentzian space $(\N,\tengB)$ admits a  Killing algebra of dimension $\mathfrak{n}\geq 1$. Let $\{\vecX_{\iLa}\}$ be a basis of the latter, and consider the vectors $\{\vecW_{\iLa}\}$ defined in \eqref{def:Wa}.   Two cases arise.
\begin{itemize}
    \item{Case A:} If $\{\vecW_{\iLa}\}$ are linearly independent, there exist constants $\alpha^{\iLa}$ such that $\vecxi=\alpha^{\iLa} \vecX_{\iLa}$ satisfies $\vecxi(f)=0$.
    \item{Case B:} If $\{\vecW_{\iLa}\}$ are linearly dependent, then Lem.~\ref{lemma:case_B} holds and the Killing algebra of $(\M,\teng)$ is at least $((\dimF+1)(\dimF+2)/2)$-dimensional. If $\vecX$ is spacelike in a domain $\U$, then $(\U,\teng)$ is locally isometric to  a FLRW spacetime.
\end{itemize}
\end{proposition}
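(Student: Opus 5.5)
The plan is to use the general form of a Killing vector obtained above,
\[
\vecxi=\alpha^{\iLa}\vecX_{\iLa}+\beta_{\iEi}\vecZ^{\iEi}+\sum_{\iEi<\iEj}\mu_{\iEi\iEj}\vecze^{\iEi\iEj},
\]
where the $\mu_{\iEi\iEj}$ are constants and Eqs.~\eqref{eq:betaiI} and \eqref{CombWI} hold. First I would show that $\mathfrak{n}\geq 1$. If $\mathfrak{n}=0$, the discussion preceding Lem.~\ref{lemma:case_A} shows that the Killing algebra is spanned by $\{\vecze^{\iEi<\iEj}\}$. Its dimension would then be $m(m+1)/2$, which contradicts the hypothesis. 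So $(\N,\tengB)$ has a nontrivial Killing algebra, and we can fix a basis $\{\vecX_{\iLa}\}$ and the corresponding $\{\vecW_{\iLa}\}$.

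Next I would pick a Killing vector $\vecxi$ that does not lie in the span of the $\vecze^{\iEi\iEj}$; the dimension hypothesis guarantees one exists. Subtracting its $\mu$-part, we may assume $\mu_{\iEi\iEj}=0$ and $\vecxi\neq 0$.

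In Case A, Eq.~\eqref{CombWI} together with the linear independence of the $\vecW_{\iLa}$ forces $\alpha^{\iLa}$ to be constant, exactly as in Lem.~\ref{lemma:case_A}. Then in Eq.~\eqref{eq:betaiI} the left-hand side is independent of the point of $\mathbb{S}^\dimF$, while the right-hand side is a linear combination of the $Y^{\iEi}$. Since $\{1,Y^{\iEi}\}$ are linearly independent functions on the sphere, both sides vanish, so $\beta_{\iEi}=0$ and $\alpha^{\iLa}\vecX_{\iLa}(f)=0$. If all $\alpha^{\iLa}$ were zero, then $\vecxi=0$, which is a contradiction. Hence $\vecxi=\alpha^{\iLa}\vecX_{\iLa}$ is the required combination with $\vecxi(f)=0$.

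In Case B, the dimension statement is exactly Lem.~\ref{lemma:case_B}. For the local FLRW claim, on $\U$ where $\vecX$ is spacelike I would consider the algebra generated by $\{\vecxi^{\iEi},\vecze^{\iEi<\iEj}\}$. At each point its span consists of the $\mathbb{S}^\dimF$ directions together with $\vecX$ (note that $\vecxi^{\iEi}$ reduces to $\vecX$ plus a fiber term). So its orbits are the $(\dimF+1)$-dimensional hypersurfaces tangent to $\vecX$ and the fibers, and these are spacelike on $\U$. I would check, by a direct computation of commutators using $\pounds_{\vecZ^{\iEi}}$ and Eq.~\eqref{res:cond_on_X_case_B}, that this span closes into a Lie algebra of dimension $(\dimF+1)(\dimF+2)/2$. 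This is the maximal dimension for an $(\dimF+1)$-dimensional Riemannian manifold, so each orbit is a space of constant curvature.

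Moreover, the unit normal $\vecu$ to the orbits is invariant under the whole algebra, because the isometries preserve the orbits. The isotropy at each point contains the full rotation group of the orbit. Hence $\vecu$ is an isotropic observer field and the spacetime is spatially isotropic about it in the spacetime sense. The rigidity theorem of \cite[Ch.~12, Prop.~6]{ONeill1983} then gives local isometry to FLRW. Alternatively, one can argue directly that isotropy kills the shear, vorticity and acceleration of $\vecu$ and forces the spatial gradient of $\Theta$ to vanish, and then invoke \cite[Thm.~4.2]{MarsVera2024} as in Lem.~\ref{FLRW_kdot}.

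The main obstacle is this last step: verifying that the span really closes with the correct dimension and that the orbits are nondegenerate hypersurfaces. The identity \eqref{res:cond_on_X_case_B}, which says $\vecX$ is $-f^2$ times a gradient of $\vecX(\ln f)$, should be what makes the $\vecxi^{\iEi}$ commutators land in the $\vecze^{\iEi\iEj}$. I would compute $[\vecxi^{\iEi},\vecxi^{\iEj}]$ first to confirm this.
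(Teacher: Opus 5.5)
Your proposal is correct, and up to the last step it follows the paper's argument. The steps that coincide are: $\mathfrak{n}\geq 1$ by the same dimension count; Case~A via Eq.~\eqref{CombWI} and the linear independence of $\{1,Y^{\iEi}\}$; and the Case~B dimension bound from Lem.~\ref{lemma:case_B}. In Case~A you are a bit more explicit than the paper's bare appeal to Lem.~\ref{lemma:case_A}, since you check that the constants $\alpha^{\iLa}$ are not all zero.

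You differ on the FLRW claim. The paper notes that on $\U$ the $(\dimF+1)(\dimF+2)/2$ fields $\{\vecxi^{\iEi},\vecze^{\iEi<\iEj}\}$ act on spacelike hypersurfaces and quotes the classification in \cite{Stephani2003}. You instead rebuild that result: closure of the span, constant-curvature orbits, invariant normal $\vecu$, full $\mathfrak{so}(\dimF+1)$ isotropy, then a rigidity theorem. The citation is shorter; your route is self-contained and makes explicit the closure that the paper uses tacitly.

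The closure you flag as the main obstacle is easy. Set $\phi:=\vecX(\ln f)$ and use $\vecZ^{\iEi}(Y^{\iEj})=\delta^{\iEi\iEj}-Y^{\iEi}Y^{\iEj}$ and $[\vecZ^{\iEi},\vecZ^{\iEj}]=\vecze^{\iEi\iEj}$. These give $[\vecxi^{\iEi},\vecxi^{\iEj}]=\big(\vecX(\phi)+\phi^2\big)\vecze^{\iEi\iEj}$ and $[\vecze^{\iEi\iEj},\vecxi^{\iEk}]=\delta^{\iEj\iEk}\vecxi^{\iEi}-\delta^{\iEi\iEk}\vecxi^{\iEj}$. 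The factor $\vecX(\phi)+\phi^2$ is automatically constant: the bracket is a Killing field, and the coefficients $\mu_{\iEi\iEj}$ of any Killing field are constants. You can also verify this directly from $\vecX=-f^2\gradB\phi$ and the Killing equation.

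One small point concerns the final step. On $\U$ the isotropy comes only from local flows, so if you use \cite[Ch.~12, Prop.~6]{ONeill1983} you need a local version. Your second option avoids this: isotropy kills the acceleration and shear of $\vecu$ and makes $\Theta$ constant on each orbit. Then either \cite[Thm.~4.2]{MarsVera2024} or a direct integration to $-\dd T^2+a(T)^2\tilde{\teng}$, with $\tilde{\teng}$ of constant curvature, gives the local FLRW form.
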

\begin{proof}
    We have already used the fact that the parallel part to $\N$ of any Killing of $(\M,\teng)$ must be  a Killing of $(\N,\tengB)$  \cite[Lem. 3.5]{Sanchez1999}. Since by assumption the Killing algebra of $(\M,\teng)$ is at least $\dimF+2$, not all Killing vectors can be everywhere tangent to the spheres. This proves that $\mathfrak{n} \geq 1$. Now, Case A follows from Lem.~\ref{lemma:case_A} using $\mathfrak{m}=\mathfrak{n}$. Case B corresponds to Lem.~\ref{lemma:case_B}. In Case B, at points where $\vecX$ is spacelike, the Killing fields $\vecxi^{\iEi}$ are spacelike, and therefore  $(\U,\teng)$ admits $(\dimF+1)(\dimF+2)/2$ linearly independent Killing vectors acting on spacelike hypersurfaces, so it is locally FLRW \cite{Stephani2003}.
\end{proof}
The combination of this proposition with Rem.~\ref{ExistenceX} yields the following result. 
\begin{corollary}\label{coro:X two cases}
  In the setup of Prop.~\ref{res:general_spherical}, the Lorentzian space $(\N,\tengB)$ must admit a Killing vector $\vecX$ such that either $\vecX(f)=0$, and therefore $\vecX$ is Killing of $(\M,\teng)$, or $\vecX = - f^2 \gradB (\vecX (\ln f ))$, and therefore  $(\M,\teng)$ admits at least a  $((\dimF+1)(\dimF+2)/2)$-dimensional Killing algebra given by $\{\vecxi^{\iEi}, \vecze^{\iEi < \iEj}\}$.
\end{corollary}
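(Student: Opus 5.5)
The plan is to split according to the dichotomy of Prop.~\ref{res:general_spherical}, applied to the basis $\{\vecX_{\iLa}\}$ of the Killing algebra of $(\N,\tengB)$. The dimension hypothesis already gives $\mathfrak{n}\geq 1$. The task is then to produce a single non-trivial Killing vector $\vecX$ of $(\N,\tengB)$ with one of the two stated properties.

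\emph{Case A}, where the $\{\vecW_{\iLa}\}$ are linearly independent. Here I will not simply quote the statement of Prop.~\ref{res:general_spherical}, but go back to the structure of a general Killing vector. By hypothesis there is a Killing vector $\vecxi$ of $(\M,\teng)$ outside the span of $\{\vecze^{\iEi<\iEj}\}$. We know $\vecxi=\alpha^{\iLa}\vecX_{\iLa}+\beta_{\iEi}\vecZ^{\iEi}+\sum\mu_{\iEi\iEj}\vecze^{\iEi\iEj}$, with the $\mu_{\iEi\iEj}$ constant. Since $\sum\mu_{\iEi\iEj}\vecze^{\iEi\iEj}$ is Killing on its own, subtracting it leaves a Killing vector whose base part $\alpha^{\iLa}\vecX_{\iLa}$ is nonzero. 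Indeed, if $\alpha^{\iLa}=0$, then \eqref{eq:betaiI} gives $\beta_{\iEi}Y^{\iEi}=0$, hence $\beta_{\iEi}=0$, and $\vecxi$ would lie in the span of the rotations. Equation \eqref{CombWI} together with the independence of the $\vecW_{\iLa}$ then forces the $\alpha^{\iLa}$ to be constants. In \eqref{eq:betaiI} the left-hand side is independent of the angles while the right-hand side is a combination of the $Y^{\iEi}$; the linear independence of $\{1,Y^{\iEi}\}$ therefore gives $\beta_{\iEi}=0$ and $\vecX(f)=0$ for $\vecX:=\alpha^{\iLa}\vecX_{\iLa}\neq 0$. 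Cor.~\ref{coro:X_killing} then shows that $\vecX$ is a Killing vector of $(\M,\teng)$, which is the first alternative.

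\emph{Case B}, where the $\{\vecW_{\iLa}\}$ are linearly dependent. Take $\sigma^{\iLa}$, not all zero, with $\sigma^{\iLa}\vecW_{\iLa}=0$, and set $\vecX=\sigma^{\iLa}\vecX_{\iLa}$. This is non-trivial because the $\vecX_{\iLa}$ form a basis. The computation in Rem.~\ref{ExistenceX}, namely substituting the definition \eqref{def:Wa} and using linearity of $\vecX\mapsto \vecX(\ln f)$ with constant coefficients, gives $\vecX=-f^2\gradB(\vecX(\ln f))$. Lem.~\ref{lemma:case_B} then supplies the $((\dimF+1)(\dimF+2)/2)$-dimensional algebra $\{\vecxi^{\iEi},\vecze^{\iEi<\iEj}\}$, which is the second alternative.

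The only delicate point is in Case A: making sure the extra Killing vector really has a nonzero base part with constant coefficients, rather than just invoking the existence claim as stated. The rest is bookkeeping with results already established.
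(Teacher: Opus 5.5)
Your proof is correct and follows the paper's route, which obtains the corollary by combining the Case A/Case B dichotomy of Prop.~\ref{res:general_spherical} with the computation in Rem.~\ref{ExistenceX}. Your re-derivation of Case A from the general form of a Killing vector, via \eqref{CombWI} and \eqref{eq:betaiI}, usefully makes explicit that the resulting $\vecX$ is nonzero, a point the paper's statement of Case A leaves implicit.
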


On the other hand, since $\vecX$ of Case B is a Killing vector of $(\N,\tengB)$, it is necessary orthogonal to $\dd \RB$ (where $\RB$ is the scalar curvature of $\tengB$), so we also have the following corollary.
\begin{corollary}\label{coro:gradient}
    Consider the setup of Prop.~\ref{res:general_spherical}, and assume Case B holds. In domains where  $\dd \RB$ is timelike, the metric is locally FLRW.
\end{corollary}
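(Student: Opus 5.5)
The plan is to reduce the statement to Case B of Prop.~\ref{res:general_spherical} by showing that the Killing vector $\vecX$ of $(\N,\tengB)$ in Lem.~\ref{lemma:case_B} must be spacelike wherever $\dd\RB$ is timelike.

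Recall that any Killing vector of a (pseudo-)Riemannian manifold preserves all curvature invariants. In particular $\vecX(\RB)=0$, that is, $\tengB(\vecX,\gradB \RB)=0$. Since $(\N,\tengB)$ is two-dimensional and Lorentzian, the orthogonal complement of a timelike vector is a spacelike line. So at every point of a domain $\U$ where $\dd\RB$ (equivalently $\gradB\RB$) is timelike, $\vecX$ is either zero or spacelike.

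Next I need to rule out zeros of $\vecX$ on an open set, so that it is spacelike on an open dense subset of $\U$. A nontrivial Killing vector in a connected two-dimensional manifold cannot vanish on an open set, because a Killing field is determined by its value and its first derivative at a point. Hence the zero set of $\vecX$ has empty interior, and $\vecX$ is spacelike on an open dense subset $\U'\subset\U$. On $\U'$ the fields $\vecxi^{\iEi}=Y^{\iEi}\vecX+\vecX(\ln f)\vecZ^{\iEi}$ are spacelike: their base part is a nonzero multiple of a spacelike vector (where $Y^{\iEi}\neq0$) plus a component tangent to the spacelike fiber. Together with the $\vecze^{\iEi<\iEj}$ they give a $(\dimF+1)(\dimF+2)/2$-dimensional Killing algebra acting on spacelike hypersurfaces, which is exactly the hypothesis in the last sentence of Case B of Prop.~\ref{res:general_spherical}. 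That proposition then gives local FLRW on $\U'$.

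The main obstacle is extending this to the isolated zeros of $\vecX$. Being locally FLRW is a closed condition on the curvature: the metric has a $(\dimF+1)(\dimF+2)/2$-dimensional Killing algebra, the Weyl tensor vanishes, and the Einstein tensor is of perfect-fluid form with its fluid velocity orthogonal to the orbits. I would argue that these conditions hold on $\U'$, hence by continuity on all of $\U$. Alternatively, I would note that the Killing algebra is defined on all of $\U$ and its orbits through the zeros of $\vecX$ remain spacelike hypersurfaces, since $\vecZ^{\iEi}$ still spans the fiber directions there and the orbit dimension is lower semicontinuous. If the zeros of $\vecX$ cause trouble, I would simply state the result on the dense open set, which is what the local FLRW claim requires.
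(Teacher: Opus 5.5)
Your core argument is the paper's own. The paper justifies the corollary with a single remark: $\vecX$, being a Killing vector of $(\N,\tengB)$, satisfies $\vecX(\RB)=0$, so it is orthogonal to $\dd\RB$. In two Lorentzian dimensions it is therefore spacelike wherever $\dd\RB$ is timelike, and the last sentence of Case B of Prop.~\ref{res:general_spherical} gives local FLRW. The paper leaves the question of zeros of $\vecX$ implicit.

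Your last paragraph is where you go astray. The ``main obstacle'' does not exist, because $\vecX$ has no zeros where $\dd\RB\neq 0$:
\begin{itemize}
\item Suppose $\vecX(p)=0$. Differentiating $X^a\partial_a\RB=0$ at $p$ gives $(\nabla_b X^a)\,\partial_a\RB=0$ there.
\item Since $\nabla_b X_a$ is antisymmetric, in two dimensions it equals $\lambda\,\eta_{ab}$ with $\eta_{ab}$ the volume form.
\item $\lambda\neq 0$, since otherwise $\vecX\equiv 0$ by the fact you quote that a Killing field is fixed by its value and first derivative at a point.
\item Hence $\nabla\vecX|_p$ is invertible and $\dd\RB|_p=0$.
\end{itemize}
Equivalently, near such a $p$ the field $\vecX$ is, to first order, a boost field. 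It is then timelike at points arbitrarily close to $p$, which contradicts your own conclusion that $\vecX$ is zero or spacelike on $\U$. So $\vecX$ is spacelike on all of $\U$ and Case B applies directly, with no density or continuity step.

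The substitutes you propose would not have worked:
\begin{itemize}
\item At a zero of $\vecX$ one also has $\vecX(\ln f)=0$. Every $\vecxi^{\iEi}$ then vanishes there, so the orbit collapses to the $\dimF$-sphere rather than remaining a spacelike hypersurface.
\item Lower semicontinuity of orbit dimension permits exactly such drops at limit points, so it cannot exclude them.
\item The closure argument is unsubstantiated. Conformal flatness and perfect-fluid form are also satisfied by the non-FLRW Stephani metrics of Sec.~\ref{sec:StephaniLTB}. Your remaining condition refers to orbits that degenerate precisely at those points.
\item Stating the result only on a dense open subset proves less than the corollary claims.
\end{itemize}
Replace that paragraph with the observation above and the proof is complete.
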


\subsection{Killing vector fields of the \texorpdfstring{$k(t)$}{k(t)} geometries}

So far all the considerations in this section have been for general spherically symmetric spacetimes, i.e. for warped products of the form \eqref{eq:warpedprod} with arbitrary $\tengB$ and warping factor $f$, for $\dimF\geq 2$. We now apply these results to analyze the Killing vectors for $\tengB$ given by \eqref{eq:basespace} and, in particular, for the $k(t)$-radial, $k(t)$-warped and $k(t)$-conformal metrics.

We start by obtaining necessary conditions that the three $k(t)$ metrics must satisfy in order to admit more isometries than just spherical symmetry. The case when the $k(t)$ metric is in fact an FLRW is trivial, so throughout this section we assume that the $k(t)$ metrics are not locally isometric to FLRW. By Prop.~\ref{res:general_spherical} we have two possible cases, A or B.

A necessary condition for $\teng$ to belong to Case A is that the two-form
\begin{equation}\label{eq:cond_cas_A}
    \dfom_1 := \dd \RB \wedge \dd f
\end{equation}
vanishes identically. This is because $\dfom_1 (\vecxi,\cdot)=0$ (as $\vecxi(  \RB)=0$ for any Killing vector of $(\N,\tengB)$ and $\vecxi(f)=0$ since we are in Case A) and a two-form in two dimensions is proportional to the volume form $\dfeta_{\tengB}$ of $(\N,\tengB)$, so at any point it is either zero or has trivial kernel. Define the function $F_1$ by means of $\dfom_1= F_1 \dfeta_{\tengB}$.\footnote{
    We use the orientation in $\{t,r\}$ so that     $\dfeta_{\tengB} (\partial_t,\partial_r)$ is positive.
    } 
We thus need to analyze the consequences of imposing $F_1 =0$. For each $k(t)$ metric evaluating $F_1=0$ at $r=0$ yields
\begin{equation}
    \frac{\dd}{\dd t} \left( \frac{\ddot{a}}{a} \right)=0,
\end{equation}
so $\ddot{a} = c a$, with $c \in \mathbb{R}$.

As for Case B, we note that the squared norm of $\dd \RB$ at  $r=0$ in all three $k(t)$ metrics takes the form
\begin{equation}
    \gB^{ij}\partial_i\RB \partial_j\RB \Big|_{r=0} = - 4 \left ( \frac{\dd}{\dd t} \left( \frac{\ddot{a}}{a} \right) \right)^2 \leq 0.
\end{equation}
Now, if $a^{-1} \ddot{a}$ is not a constant function, then Cor.~\ref{coro:gradient} implies that $(\M,\teng)$ is locally isometric to FLRW in some non-empty domain. But then by Lem.~\ref{FLRW_kdot} we have that $\dot{k}=0$, so in fact the metric is locally FLRW everywhere. Therefore,  $\ddot{a} = c a$, with $c \in \mathbb{R}$ also in Case B. Moreover,  the scalar curvature $\R$ of $(\M,\teng)$ is constant on the $\mathbb{S}^\dimF$ fibers, so it descends to a function on $\N$. The two-form on $(\N,\tengB)$
\begin{equation}
    \dfom_2 :=  \dd   \RB \wedge \dd \R
\end{equation}
must vanish identically whenever there exists $\vecX$ in  Case B. The argument is as before because  $\vecX (  \RB) =0$ (since $\vecX$ is a Killing vector of $(\N,\tengB)$) and also $\vecX (\R) =0$, because $\vecZ^{\iEi} (\R)=0$ due to spherical symmetry and therefore
\begin{equation}
    0 =\Lie_{\vecxi^{\iEi}}\R= \left(Y^{\iEi} \vecX + \frac{\vecX(f)}{f} \vecZ^{\iEi}\right) (\R) = Y^{\iEi} \vecX (\R)
\end{equation}
and $Y^{\iEi}$ are obviously not identically zero. For each $k(t)$ metric we thus need to analyze the consequences of imposing $F_2 =0$ defined by $\dfom_2 = F_2 \dfeta_{\tengB}$. Observe that to compute $F_2$ we have to use the following identity
\begin{equation}
    \R= \RB + \dimF(\dimF-1)\frac{1}{f^2}\left(1-h^{ij}\partial_if \partial_j f\right)-2\dimF\frac{1}{f}\Delta_{\tengB} f.
\end{equation}

We have now enough ingredients to analyze the possible Killing algebras that the $k(t)$ metrics, apart from FLRW, admit.
In the $k(t)$-radial and $k(t)$-conformal cases the vanishing of $F_2$ will suffice to show that, leaving aside FLRW  metrics, Case B never happens. However, in the  $k(t)$-warped case the condition  $\ddot{a} = c a$ is equivalent to $\RB = 2c$, and therefore  $F_2 =0$ provides no information. We will then have to use a different approach to deal with Case B in the $k(t)$-warped case.
Observe that, by the considerations above,the existence of more isometries than spherical symmetry implies $\ddot{a} = c a$. Let us state the main result, which we next prove for each $k(t)$ metric in three different subsections.

\begin{thm}
\label{resultKV}
  Consider the $k(t)$-warped, the $k(t)$-radial and the $k(t)$-conformal metrics.  If the metrics are not locally FLRW, then the Killing algebra is three dimensional, generating the spherical symmetry, except in the particular case $a(t)=a_0e^{b t}$, $k(t)=k_0 e^{2bt}$ for constants $a_0$, $k_0$ and $b$, in which the three metrics admit an additional Killing vector
  \begin{equation}\label{eq:kill_final}
    \vecX=\partial_t- b r \partial_r\,.
  \end{equation}
\end{thm}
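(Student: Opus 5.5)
We use the dichotomy of Prop.~\ref{res:general_spherical}. Assume the metric is not locally FLRW, so by Lem.~\ref{FLRW_kdot} $\dot k\not\equiv 0$, and assume the Killing algebra is larger than the rotational one. The discussion preceding the statement then gives $\ddot a=ca$, and we must show that the metric is in Case A with $a=a_0e^{bt}$, $k=k_0e^{2bt}$.

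\textbf{Step 1 (exclude Case B).} For the $k(t)$-radial and $k(t)$-conformal metrics, substitute $\ddot a=ca$ into $\RB=2\ddot W/W$ and into $\R$ computed from the identity relating $\R$, $\RB$ and $f$. Then expand $F_2$ in powers of $r$ near $r=0$, or as a rational function in $r^2$ with coefficients depending on $a,\dot a,k,\dot k,\ddot k$. Requiring all coefficients to vanish should force $\dot k=0$, a contradiction; the lowest orders are likely enough. For the $k(t)$-warped metric $\RB=2c$ is constant, so instead we use the defining equation $\vecX=-f^2\gradB(\vecX(\ln f))$ of Rem.~\ref{res:one_case_B} directly. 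Here $\tengB=-\dd t^2+a^2\dd r^2$ with $\ddot a=ca$ is a constant-curvature Lorentzian surface, so its Killing algebra is explicit and $3$-dimensional: translation $\partial_r$, a time-translation-type field, and a boost-type field, written in the $(t,r)$ chart for $c>0$, $c=0$, $c<0$. Writing a general $\vecX=\sigma^{\iLa}\vecX_{\iLa}$ and inserting $f=aS_{k(t)}(r)$, the linear independence of the functions $r^pS_k^qC_k^s$ (already used in Prop.~\ref{perfect-fluid is RW}) should make the equation force $\sigma=0$ unless $\dot k=0$.

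\textbf{Step 2 (Case A).} We need a nonzero Killing field $\vecxi$ of $\tengB$ with $\vecxi(f)=0$. Necessarily $F_1=0$ identically. Rather than analysing $F_1$ at higher order, we parametrize $\vecxi=\xi^t\partial_t+\xi^r\partial_r$ and solve the base Killing equations for $-\dd t^2+W^2\dd r^2$ together with $\xi^t\partial_tf+\xi^r\partial_rf=0$. At $r=0$ we have $f\approx a r\,\cdot$(regular), and regularity of $\vecxi$ at the axis forces $\xi^r(t,0)=0$. The Killing equation $\partial_r\xi^t=W^2\partial_t\xi^r$ together with the $tt$-equation $\partial_t\xi^t=0$ gives $\xi^t=\xi^t(r)$, and order-by-order expansion in $r$ should give $\xi^t$ constant and $\xi^r=-\beta r+O(r^3)$, say. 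Taking $\xi^t=1$, the leading order of $\vecxi(f)=0$ gives $\dot a/a=\beta$, so $a=a_0e^{\beta t}$. The next order in each metric, e.g.\ the $r^3$ term in $f=ar(1-kr^2/6+\dots)$, $2ar(1-kr^2+\dots)$, or $ar$ combined with $W$ in the $rr$-Killing equation, yields $\dot k=2\beta k$. Set $b=\beta$.

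\textbf{Step 3 (verification).} With $a=a_0e^{bt}$, $k=k_0e^{2bt}$, take $\vecX=\partial_t-br\partial_r$. The combinations $e^{bt}r$ and $k(t)r^2$ are invariant under its flow, so $W^2\dd r^2$, $f$ and $\dd t$ are preserved for all three metrics. Hence $\vecX$ is Killing, and by Lem.~\ref{lemma:case_A} it is, up to the rotations, the only extra one.

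The main obstacle is Step 1 in the warped case. The algebra of $F_2$ is trivial there, so one must handle the explicit Killing fields of a de Sitter, anti-de Sitter or flat $2$-metric in the coordinates $(t,r)$ and check, via the linear-independence argument, that no combination satisfies the Case B equation.
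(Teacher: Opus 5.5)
Your architecture (reduce to $\ddot a=ca$, rule out Case~B, solve Case~A, verify) matches the paper's. However, Step~1 makes a claim that is false for the $k(t)$-radial and $k(t)$-conformal metrics, and Step~3 relies on it.

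\textbf{The gap in Step~1.} The coefficients of $F_2$ cannot force $\dot k=0$. Take the exceptional family $a=a_0e^{bt}$, $k=k_0e^{2bt}$ with $bk_0\neq0$, so that $\dot k\neq0$ and the metric is not locally FLRW. There $\vecX=\partial_t-br\partial_r$ is a Killing field of both $\tengB$ and $\teng$. Hence $\vecX(\RB)=\vecX(\R)=0$, both scalars depend only on $u=e^{bt}r$, and $\dd\RB\wedge\dd\R\equiv0$.

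What the $F_2$ analysis actually yields is: either $k$ is constant, or $\dot a=ba$ and $\dot k=2bk$. This is exactly what the paper obtains, via $\mathcal{Q}$, $Q_0$, $Q_1$ in the radial case and via the resultants in the conformal case. So Case~B has not been excluded on that family. Your appeal to Lem.~\ref{lemma:case_A} in Step~3, which needs the $\vecW_{\iLa}$ to be linearly independent, is therefore unjustified there.

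\textbf{How to close it.} The missing argument is short, and it is the one the paper uses. For these two exceptional metrics $\RB$ is non-constant:
\begin{itemize}
\item radial: $\RB=2b^2(1+2k_0u^2)/(1-k_0u^2)^2$;
\item conformal: $\RB=2b^2\bigl(1-8k_0u^2/(1+k_0u^2)^2\bigr)$.
\end{itemize}
A 2-dimensional base with non-constant curvature has at most a one-dimensional Killing algebra, so it is spanned by $\vecX$. Since $\vecX(f)=0$, you get $\vecW=\vecX\neq0$, and Case~B is impossible. This same one-dimensionality is what gives uniqueness of the extra Killing field for these two metrics. For the warped metric your Step~1 claim is correct: the paper shows $\vecX+f^2\gradB(\vecX(\ln f))=0$ forces $\vecX=0$ whenever $\dot k\neq0$. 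There, however, the base is 2D de~Sitter with a 3-dimensional algebra, so uniqueness must come from the Case~A computation.

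\textbf{Secondary issues in Step~2.} The claim that $\xi^t$ is constant does not follow from the Killing equations. Take the warped base with first integral $\dot a^2=ca^2-\lambda$ (the paper's $a_0$). Even after imposing $\vecxi(f)=0$ on the axis, one still has $\xi^t=F'(r)$ with $F\propto S_\lambda(r)$, which is non-constant when $\lambda\neq0$.

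Moreover, the $O(r)$ term of $\vecxi(f)=0$ is identical to the $rr$ Killing equation at $r=0$, since $f=W(t,0)\,r+O(r^3)$ in all three metrics. So it only gives $\beta(t)=\dot a/a$, without making $\beta$ constant. Constancy of $\xi^t$, and hence of $\beta$ via the $tr$ equation, comes only from higher orders of $\vecxi(f)=0$. In the warped case it is the $r^5$ coefficient, combined with the $r^3$ one, that gives $\lambda\dot k=0$. With the deductions reordered this way, your direct attack on Case~A is workable; it replaces the paper's $F_1=0$ analysis for the radial and conformal metrics.

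Two smaller points:
\begin{itemize}
\item $\xi^r(t,0)=0$ already follows from $\vecxi(f)=0$ at $r=0$, so no axis-regularity assumption is needed.
\item In the warped Case~B, the base Killing fields involve $S_\lambda(r)$ and $C_\lambda(r)$, not $S_{k(t)}$ and $C_{k(t)}$. So independence of the $r^pS_k^qC_k^s$ is not directly the right tool; the paper's Taylor expansion at $r=0$ avoids this issue.
\end{itemize}
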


\begin{remark}
  Observe that the character of the Killing vector \eqref{eq:kill_final} may differ in different regions.
  The norm of $\vecX$ is 
  \begin{align}
    k(t)\mbox{-warped}:&\quad \gB_{ij}X^i X^j=-1+a_0^2b^2r^2e^{2bt},\\
    k(t)\mbox{-conformal}:&\quad \gB_{ij}X^i X^j=\frac{4e^{-2bt}a_0^2 b^2 r^2-\left(e^{-2bt} + k_0 r^2\right)^2}{e^{-2bt} + k_0 r^2},\\
    k(t)\mbox{-radial}:&\quad \gB_{ij}X^i X^j=-\frac{e^{-2bt}-(k_0+a_0^2b^2)r^2}{e^{-2bt}-k_0r^2}.
\end{align}
Note that the denominators do not vanish in the corresponding charts as defined in Subsec.~\ref{ssec:def_kt_metrics}.

\end{remark}

\subsubsection{Proof for the \texorpdfstring{$k(t)$}{k(t)}-warped metric}

We already know that we need $\ddot{a} = c a$, $c \in \mathbb{R}$ for the existence of an additional isometry to the spherical symmetry. In the $k(t)$-warped case this is equivalent to $(\N,\tengB)$ being of constant curvature. Define the constant $a_0$ by the first integral $\dot{a}^2 = c a^2 - a_0$. Decomposing the Killing vector of Cor.~\ref{coro:X two cases} as $\vecX =  X^t \partial_t + X^{r} \partial_{r}$, the Killing equations for $\vecX$ are
\begin{equation}
    \partial_t X^t =0, \qquad \quad
    \partial_{r} X^{t} - a^2 \partial_t X^{r} =0, \qquad \quad
    \dot{a} X^t + a \partial_{r} X^{r} =0.
\end{equation}
The first equation gives $X^t(r)$, so the first and third equations can be integrated in terms of a free function $F(r)$ and a free function $B(t)$ as
\begin{equation}
    X^{r} = - \frac{\dot{a}}{a} F(r) + B(t), \qquad \qquad  X^t = F'\,,
\end{equation}
where prime denotes derivative with respect to $r$. Inserting into the second equation gives
\begin{equation}
    F'' + a_0 F = a^2 \dot{B}.
\end{equation}
By separation of variables, there is a constant $c_0$ such that
\begin{equation}
    \dot{B} = \frac{c_0}{a^2}, \qquad \quad F'' + a_0 F = c_0.
    \label{eqsBF}
\end{equation}
Note that the functions $F(r)$ and $B(t)$ are not univocally defined since the shift $F = \hat{F} + f_0$, $B = \hat{B} + f_0 \frac{\dot{a}}{a}$, with $f_0$ constant, leaves $\vecX$ invariant. Without loss of generality we can exploit this shift to set $F(0)=0$, which we assume from now on.

We now analyze the first possibility arising from Cor.~\ref{coro:X two cases}, $\vecX(f)=0$, which reads explicitly
\begin{equation}\label{eq:Xfzero}
    S_{k(t)} F' \left (\dot{a}  - a \frac{\dot{k}}{2 k} \right ) + a C_{k(t)} \left ( B - \frac{\dot{a}}{a} F + r \frac{\dot{k}}{2k} F' \right )=0.
\end{equation}
Evaluating at $r=0$ and using $S_{k(t)} (0)=0$, $C_{k(t)} (0) =1$ gives $B(t)=0$.
Imposing the first equation in \eqref{eqsBF} gives $c_0=0$,
so we have found that
\begin{align}
    \vecX= F' \partial_t - \frac{\dot{a}}{a} F \partial_{r}, \qquad \quad F'' + a_0 F =0, \quad F(0)=0. \label{ODE}
\end{align}
The general solution for $F$ is $F = f_1 S_{a_0}(r)$ with $f_1 \in  \mathbb{R}$. We assume $f_1 \neq 0$ as otherwise the vector $\vecX$ is identically zero. We still need to impose $\vecX(f)=0$ everywhere. It is simplest to analyze the Taylor series of Eq.~\eqref{eq:Xfzero} at $r=0$.
It suffices to compute the first two non-trivial terms. The result is
\begin{equation}
    0 = \frac{6}{f_1} \vecX(f) = \left ( 2 (k -a_0) \dot{a} - a \dot{k} \right ) r^3  + \frac{1}{10} \left (  2 (a_0^2 -k^2) \dot{a} +5 a_0 a \dot{k} + a k \dot{k} \right )  r^5 + O(r^7).
\end{equation}
The first term gives $\dot{k} = 2 ( k- a_0) \frac{\dot{a}}{a}$, which inserted into the second one gives $a_0 \dot{k} =0$. Using the hypothesis that $\dot{k} \neq 0$ we conclude $a_0=0$. Hence $\dot{a}^2 = c a^2$. This requires that $c \geq 0$ and there exists $b \in \mathbb{R}$ such that $\dot{a} = b a$. Moreover, $\dot{k} = 2 b k$ and $F = f_1 r$. The Killing vector is (dropping the irrelevant multiplicative constant $f_1$)
\begin{equation}\label{eq:kill_warped}
    \vecX= \partial_t - b r \partial_{r}.
\end{equation}
It is immediate to check that $\vecX(f)=0$.

We now consider the second possibility of Cor.~\ref{coro:X two cases}, namely that
\begin{equation}
    V^t\partial_t+V^r\partial_r :=\vecX+ f^2 \gradB (\vecX(\ln f ))=0\,.
\end{equation}
A straightforward calculation shows that $V^t$ evaluated at $r=0$ is $V^t|_{r=0}=F'(0)$. Thus $F'(0)=0$, and therefore the solution of \eqref{eqsBF} is unique, depends only on $c_0$, and we have $F=0$ iff $c_0=0$. The Taylor expansions of $V^t$ and $V^r$ around $r=0$ take the form
\begin{align}
    &V^r =  - c_0 \left ( V_4(t)r^4 + V_6(t)r^6 \right ) + O(r^7),\\
    &V^t=-\frac{1}{3}\left(- B \dot k a^2 +2 c_0(a_0-k)\right)r^3 + V_5(t)r^5 + O(r^6),
\end{align}
where the functions $V_4(t), V_5(t), V_6(t)$ depend on $a(t)$, $B(t)$, $k(t)$ and their derivatives. It turns out that the pair of equations $V_4(t)=V_6(t)=0$ lead to FLRW since
\begin{equation}
    \left(\frac92 a_0 + 12 k \right) V_4 + 45 V_6 = \dot{k}(k-4 a_0)\,.
\end{equation} 
From $V^r=0$ we therefore conclude that there will be no further Killings unless $c_0=0$, or, equivalently, $F=0$.  Moreover, from the first term in $V^t$, since $\dot k$ is not identically zero, we obtain $B(t)=0$ and hence  $\vecX=0$. Summarizing, the only solution of $\vecX+ f^2 \gradB (\vecX(\ln f ))=0$ for $\dot k\neq 0$ is $\vecX=0$, and the result is proved.
  
\subsubsection{Proof for the \texorpdfstring{$k(t)$}{k(t)}-conformal metric}
We start considering Case A of Prop.~\ref{res:general_spherical}. We recall that because of \eqref{eq:cond_cas_A} $F_1$ must vanish. In the $k(t)$-conformal metric the function $F_1$ takes the form
\begin{equation}
    F_1 = W^3 r^2 \sum_{s=0}^2 C_s(t) r^{2s}. 
\end{equation}
The combination $\mathcal{C} := C_2 - k C_1 + k^2 C_0$ is
\begin{equation}
    \mathcal{C}  = \frac{ \dot{k}}{a^4} \left ( -2 a k \ddot{k} + 2 \dot{a} \dot{k} k + a \dot{k}^2 \right ).
\end{equation}
We again leave out the case $\dot{k}=0$, so $\mathcal{C} =0$  provides an expression for $\ddot{k}$ which inserted back into $C_1$ gives
\begin{equation}
    C_1 = \frac{3 \dot{k}}{8 k a^5} \left ( 2 \dot{a} k - a \dot{k} \right )^2,
\end{equation}
so
$k=k_1 a^2$, $k_1 \in \mathbb{R}$. This is compatible with $\mathcal{C} =0$ only if $\dot{a} = b a$, $c = b^2$. When this holds, $F_1$ vanishes identically. Choosing the integration constants so that   $a=a_0e^{bt}$ and $k_1=k_0/a^2_0$, $k_0 \in \mathbb{R}$, the metric in this case is 
\begin{equation}
    \teng = - \dd t^2  + \left ( \frac{2 a_0 e^{-bt}}{e^{-2bt} + k_0 r^2} \right )^2   \left ( \dd r^2 + r^2 \tengS \right ),
    \label{kconfCaseA}
\end{equation}
and one checks easily that  $(\N,\tengB)$ admits (up to a constant factor) a single Killing vector given by
\begin{equation}
    \vecX  = \partial_t - b r \partial_r.
\end{equation}
This vector field is also a Killing vector of $(\M,\teng)$ since $\vecX(f)=0$. We conclude that the only  $k(t)$-conformal and not FLRW metric that fits into Case A is \eqref{kconfCaseA}. 

As for the Case B, in the $k(t)$-conformal case the function $F_2$ takes the form
\begin{equation}
    F_2 = \frac{m(m+1)}{2a^6} W^3 r \sum_{s=0}^2 G_s(t) r^{2s}.
\end{equation}
The functions $G_s$ are polynomials in $\{\dot{k}, \ddot{k},\dddot{k}\}$ with $G_0$ independent of $\dddot{k}$ and $G_1, G_2$ linear in this variable. One can then eliminate $\dddot{k}$ by computing the resultant $\mbox{res}_1$  of  $G_1$ and $G_2$ with respect to $\dddot k$. Then we eliminate $\ddot{k}$ by computing the resultant of $\mbox{res}_1$ and $G_0$ with respect to $\ddot k$.
  
This final resultant reads:
\begin{equation}
    \left(2 \dot{a} k -  a \dot{k}\right)  \left( 2 \dot{a}^3 - 2c a^2 \dot{a} +2 \dot{a} k - a\dot{k} \right) \dot{k}^{4} =0.
\end{equation}
Hence, two cases arise depending in which of the first two factors vanishes.  In either case we can solve for $\dot{k}$. Substituting back into the equations $G_s =0$ it follows easily that the only possibility with non-constant $k$ is, again, $k = k_0 e^{2bt}, a = a_0 e^{bt}$, that belongs to Case A. The result for the $k(t)$-conformal metric thus follows.

\subsubsection{Proof for the \texorpdfstring{$k(t)$}{k(t)}-radial metric}

We now move on to the $k(t)$-radial metric and start with Case A.
The function $F_1$ introduced after \eqref{eq:cond_cas_A} now takes the form
\begin{equation}
    F_1 = W^{5} r^2 \sum_{s=0}^2 D_s(t) r^{2s}. 
\end{equation}
The condition $F_1=0$ then requires $D_0$, $D_1$ and $D_2$ to vanish, and hence the combination 
\begin{equation}
    \mathcal{D} := \sum_{s=0}^2 D_s k^{2-s} = \frac{3}{a^6} \dot{k}^2 \left ( \dot{k} a - 2 k \dot{a} \right ).
\end{equation}
As usual, we leave out the case with $\dot{k} =0$. Thus $\dot{k} = 2 k a^{-1} \dot{a}$. The expression of $D_0$ becomes
\begin{equation}
    D_0 = - \frac{12 k \dot{a}}{a^8} \left (\dot{a}^2 - c a^2  \right )\,,
\end{equation}
so it must be that $\dot{a} = b a$, with $b \in \mathbb{R}$ and $c= b^2 \neq 0$. The function $F_1$ vanishes identically in this case. Inserting $a(t) = a_0 e^{bt}$ and $k(t) = k_0 e^{2bt}$ in the $k(t)$-radial metric leads to:
\begin{align}
    \teng = -\dd t^2 + \frac{a_0^2}{e^{-2bt} - k_0r^2} \dd r^2  + a_0^2  e^{2bt} r^2 \tengS\,.\label{kradialcaseB}
\end{align}
It is easy to find that $(\N,\tengB)$ admits a unique (up to a constant factor) Killing vector given by
\begin{equation}
    \vecX = \partial_t - b r \partial_r\,.
\end{equation}
This vector field is also a Killing vector of $(\M,\teng)$ because $\vecX(f)=0$. So, we conclude that the only situation where a $k(t)$-radial metric is not a FLRW and belongs to Case A is \eqref{kradialcaseB}. 

To deal with Case B we analyze the vanishing of the function $F_2$, introduced also at the beginning of this subsection.
A direct computation gives that $F_2$ takes the form
\begin{equation}
    F_2 =r W^7 \sum_{s=0}^2 Q_s(t) r^{2s}.
\end{equation}
Thus, if $F_2=0$, each  $Q_s$  must vanish identically. The combination ${\mathcal Q} := \sum_{s=0} Q_s k^{2-s}$ just reads
\begin{equation}
    \mathcal{Q} = \frac{6\dimF}{a^{10}} \dot{k}^3 \left ( \dot{a}^2 - c a^2 \right ),
\end{equation}
so $\dot{a} =  b a$, $c = b^2$. Then $Q_0$ becomes
\begin{equation}
    Q_0 = - \frac{2\dimF(\dimF+1)}{a^{10}}   \left ( \ddot{k} + 2 b \dot{k} \right ) \left ( \dot{k} - 2 b k  \right ),
\end{equation}
and therefore we conclude that either $\dot{k} = 2 b k$ or $\dot{k} = -2  b k + b_0$ with $b_0 \in \mathbb{R}$. In the latter case one gets
\begin{equation}
    Q_1 = \frac{6\dimF(\dimF+1)}{a^{10}} (4 b k - b_0) (2b k - b_0)^2 \,,    
\end{equation}
which implies that $k$ is constant, against hypothesis. The case $\dot{k} = 2 b k$ has already been considered when we studied Case A. In that case the metric $\tengB$ admits a single (up to scale) Killing field $\vecX$, but it satisfies $\vecX(f)=0$, so we fall out of Case B, and the result follows.

\section{Comparing with other geometries}
\label{sec:StephaniLTB} 

In this last section, we study whether the $k(t)$ metrics are related with each other and whether they can be classified within some known families of inhomogeneous cosmological spacetimes.

\subsection{Inequivalence among the three \texorpdfstring{$k(t)$}{k(t)} metrics}

First, we show that for generic choices of $k(t)$ the three spacetimes are not equivalent among themselves. We begin proving the inequivalence between the $k(t)$-conformal and the rest. 

\begin{proposition}\label{prop:ineqkt1}
    The $k(t)$-conformal metric is not locally isometric to either the $k(t)$-radial nor the $k(t)$-warped when $\dot{k}(t) \neq 0$. 
\end{proposition}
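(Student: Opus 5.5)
The argument uses a local isometry invariant: the Weyl tensor. By Prop.~\ref{prop:kconf_Weylzero}, the $k(t)$-conformal metric is locally conformally flat, i.e.\ $C_{abcd}\equiv 0$ on its whole domain. Vanishing of the Weyl tensor is preserved by local isometries, even on an arbitrary open subset. It therefore suffices to show that neither the $k(t)$-radial nor the $k(t)$-warped metric is conformally flat on any open set, provided $\dot{k}\neq 0$ there.

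\emph{Radial case.} This is already essentially done. Eq.~\eqref{eq:CC_kradial} gives $C_{abcd}C^{abcd}$, and \eqref{Eq:Weylkradial} gives
$$C_{trtr}\propto \frac{r^2}{(1-kr^2)^3}\left[3ar^2\dot{k}^2+2(1-kr^2)(\dot{a}\dot{k}+a\ddot{k})\right].$$
We view the bracket as a polynomial in $r^2$ at fixed $t$. Its $r^4$ coefficient is proportional to $\dot{k}^2$ up to terms involving $k$; more precisely, the bracket equals $2(\dot a\dot k+a\ddot k)+r^2[3a\dot k^2-2k(\dot a\dot k+a\ddot k)]$. If this vanishes on an open set of $r$, the constant term gives $\dot a\dot k+a\ddot k=0$, and then the $r^2$ coefficient gives $\dot k=0$. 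This is the same argument as in the proof of Lem.~\ref{FLRW_kdot}, and it contradicts the hypothesis. Hence $C\neq0$ on a dense open set, so no open piece of the radial metric is isometric to a piece of the conformal one.

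\emph{Warped case.} Here I will compute one Weyl component, for instance $C_{trtr}$. For a warped product $\tengB+f^2\tengS$ with 2D base, the Weyl tensor is governed by a single scalar, built from $\RB/2$, $\Delta_{\tengB}f/f$ and $(1-|\dd f|^2)/f^2$. This scalar can be obtained directly from the curvature formulas already listed (\eqref{eq:R_WF1}--\eqref{eq:R_WF6}), using $W=a$ and $f=aS_{k(t)}(r)$. I expect to expand it in a Taylor series at $r=0$, in the same way the paper treats $\vecX(f)$ in the warped Killing analysis. The $O(r^0)$ and $O(r^1)$ terms should vanish, and the first nonzero coefficient should be proportional to $\dot k$. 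I expect it to be around order $r^2$, because $\partial_t S_k=-\tfrac{\dot k}{6}r^3+\dots$ perturbs $f$ relative to FLRW at relative order $r^2$. Alternatively, near the antipodal point, Prop.~\ref{prop:kwar_sing} exhibits a $\dot{k}^2\cot^2$ divergence that must also show up in the Weyl scalar. Either route shows that the Weyl tensor of the warped metric is not identically zero on any open set where $\dot{k}\neq0$.

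The main obstacle is the warped computation. The radial case is already written in the paper. For the warped case, the difficulty is isolating a clean nonvanishing coefficient in the Weyl scalar. I will rely on a low-order Taylor expansion at $r=0$, where $S_k$ and $C_k$ are analytic in $k r^2$. This should reduce the task to checking that the leading coefficient is a nonzero multiple of $\dot k$, or of $a\dot k$.
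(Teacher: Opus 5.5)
Your strategy matches the paper's. The Weyl tensor is a local-isometry invariant, the $k(t)$-conformal metric is conformally flat (Prop.~\ref{prop:kconf_Weylzero}), and the other two metrics are not. Your radial argument is exactly what Eq.~\eqref{Eq:Weylkradial} yields. For the warped case the paper simply cites \cite[Thm.~4.1]{Avalos2022}, and computing it yourself is a legitimate substitute. However, the step you announce is false: the leading Taylor coefficient at $r=0$ is \emph{not} a nonzero multiple of $\dot k$. Checking that coefficient alone therefore does not prove the claim.

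With $W=a$ and $f=aS_{k(t)}(r)$, all FLRW-type terms cancel in your scalar $\RB/2+\Delta_{\tengB}f/f+(1-|\dd f|^2)/f^2$. What remains is
\[
\mathcal{W} = -\frac{1}{a}\,\partial_t\!\left(a\,\partial_t \ln S_{k(t)}(r)\right), \qquad \partial_t \ln S_{k(t)}(r) = -\frac{\dot{k}}{6}\,r^2-\frac{k\dot{k}}{90}\,r^4+O(r^6),
\]
and hence
\[
\mathcal{W} = \frac{1}{6a}\frac{\dd}{\dd t}\big(a\dot{k}\big)\, r^2 + \frac{1}{90a}\frac{\dd}{\dd t}\big(ak\dot{k}\big)\, r^4 + O(r^6).
\]
The $r^2$ coefficient is proportional to $\dot a\dot k+a\ddot k$, not to $\dot k$. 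It vanishes identically whenever $a\dot k$ is a nonzero constant, for example $a=1$, $k=t$.

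You must go one order further. If $\frac{\dd}{\dd t}(a\dot k)=0$, then $\frac{\dd}{\dd t}(ak\dot k)=a\dot k^2$, so the $r^4$ coefficient reduces to $\dot k^2/90\neq 0$. This is the same two-step elimination you already did in the radial case; there too the first coefficient is proportional to $\frac{\dd}{\dd t}(a\dot k)$ and the next one forces $\dot k=0$. Real analyticity of $\mathcal{W}(t,\cdot)$ in $r$ then gives nonvanishing on a dense open set.

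Your fallback via the antipodal point does not close the gap either. A divergence of $\R$ says nothing by itself about the Weyl part, so you would still have to compute $\mathcal{W}$ there. That route also only covers times with $k(t)>0$.
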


\begin{proof} 
    The $k(t)$-conformal is locally conformally flat (see Prop.~\ref{prop:kconf_Weylzero}), whereas the $k(t)$-radial (see Eq.~\eqref{Eq:Weylkradial}) and the $k(t)$-warped (see \cite[Thm.~4.1]{Avalos2022}) are not.
\end{proof}

\noindent Now it remains to show that the $k(t)$-warped and the $k(t)$-radial are not isometric. 

\begin{proposition}\label{prop:ineqkt2}
    The $k(t)$-warped spacetime is not locally isometric to the $k(t)$-radial when $\dot{k}(t) \neq 0$. 
\end{proposition}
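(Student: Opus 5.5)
To prove Prop.~\ref{prop:ineqkt2} I would separate the two metrics by local invariants that are forced by any local isometry, since Weyl flatness no longer distinguishes them. Both are spherically symmetric warped products $\tengB + f^2\tengS$ with $\dimF\geq 2$. Suppose $\Phi$ is a local isometry between open sets of the $k(t)$-warped and the $k(t)$-radial spacetimes, and assume neither is locally FLRW, which holds by Lem.~\ref{FLRW_kdot} because $\dot k\neq0$. Then, by Thm.~\ref{resultKV}, the Killing algebra of each metric is the $\mathfrak{so}(\dimF+1)$ generated by spherical symmetry, except in the exponential family $a=a_0e^{bt}$, $k=k_0e^{2bt}$.

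In the generic case the orbits of the full isometry group are exactly the round spheres $\{t,r\}=$const. Hence $\Phi$ maps spheres to spheres and preserves the area radius $f$. It therefore descends to a local isometry $\phi$ of the two-dimensional bases $(\N,\tengB)$ that satisfies $f^{\rm rad}\circ\phi=f^{\rm war}$. It must also preserve every scalar built from $\tengB$ and $f$, for instance $\RB$, $\R$, $|\dd f|^2_{\tengB}=h^{ij}\partial_if\partial_jf$, $\Delta_{\tengB}f$, and the nonvanishing Weyl component, which is the single scalar $\Psi$ of the spherically symmetric Weyl tensor. Any two such scalars must therefore satisfy the same functional relation on both sides. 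My plan is to choose two invariants that are functionally independent on the warped side, for example $f$ and $\RB=2\ddot a/a$ if $\ddot a/a$ is not constant, or else $f$ and $\Psi$, and then test the relations along the worldline $r=0$. At $r=0$, which is mapped to $r=0$ because it is the fixed set of the rotations, both metrics reduce to their FLRW values there. The comparison will then come from Taylor expansions in the invariant radial distance $\ell$, that is $\ell=a\,r$ for the warped metric and $\ell\simeq a r(1+\tfrac{k r^2}{6}+\dots)$ for the radial one.

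The concrete computation is as follows. On each side I express $f$, $|\dd f|^2$ and $\Psi$ as series in $\ell$ with coefficients depending on the proper time $t$ along $r=0$, which $\Phi$ preserves because the central geodesic is mapped to the central geodesic. Matching the $\ell^2$ coefficients of $\Psi$ gives a first condition. For the warped metric this coefficient is proportional to $\dot a\dot k+a\ddot k$-type terms plus $\dot k^2$ terms, and by \eqref{eq:CC_kradial} the radial metric has the analogous combination $\tfrac{\dot a}{a}\dot k+\ddot k$. Matching $|\dd f|^2$ to order $\ell^2$ identifies $k$ together with its first derivative, because $|\dd f|^2=1-k\ell^2+\dots$ in the FLRW limit while the $\dot k$ corrections enter with different numerical coefficients: the warped $f=aS_k(r)$ has no $\dd t\,\dd r$ mixing, while the radial $g_{rr}$ carries $k$. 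I expect the mismatch of these rational coefficients, such as the $\tfrac{\dot k}{6}$ versus $\tfrac{\dot k}{2}$ type terms, to force $\dot k=0$, in the same way as the analyses in Prop.~\ref{perfect-fluid is RW} used the linear independence of $r^{p_1}S^{p_2}C^{p_3}$ on the warped side against the purely rational dependence on $r^2$ and $1-kr^2$ on the radial side. A cleaner alternative that I would try first is to compare the characteristic polynomial $\mathrm{Pol}$ of \eqref{eq:char_poly}, which is an isometry invariant, as a function of the invariant $f$ on a fixed $t$-orbit. On the warped side it involves $\sin,\cos$ of $\sqrt{k}\,r$, whereas on the radial side it is rational in $r^2$, and transcendence should give the contradiction once the common time parametrization is fixed.

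The exceptional family $a=a_0e^{bt}$, $k=k_0e^{2bt}$ needs separate treatment, because there the extra Killing vector \eqref{eq:kill_final} allows orbits to be more than spheres. In this case I would still use the fact that the spheres are the orbits of the derived or compact part of the algebra, so the same reduction goes through, and the invariants are now functions of the single combination $re^{bt}$. I would then directly compare the invariants $\Psi$ and $|\dd f|^2$ as functions of $f$. The main obstacle is exactly this bookkeeping: making sure that the isometry preserves $t$ along the centre, so that the coefficients can be compared as functions of one common variable, and handling the radial-side branch where $1-kr^2$ changes sign with $k$.
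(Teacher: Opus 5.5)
Your reduction is sound and matches the paper's. In the generic case both Killing algebras are the rotation algebra, so a local isometry maps spheres to spheres, preserves $f$, and descends to an isometry of the bases carrying one warping function to the other. For the exceptional family your derived-algebra argument is cleaner than the paper's Ricci-eigenvector argument. There $\RB$ alone finishes: $\RB=2b^2$ is constant for $\gw$, while for $\gr$ it equals $2b^2(2k_0u^2+1)/(k_0u^2-1)^2$ with $u=re^{bt}$.

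The genuine gap is the generic case, which is the whole content of the proposition. You never exhibit an obstruction, and the mechanisms you name do not produce one. The two metrics carry independent arbitrary pairs $(a_1,k_1)$ and $(a_2,k_2)$, so the low orders of matching only yield ODEs relating them.
\begin{itemize}
\item Near the axis, $|\dd f|^2_{\tengB}=1-(k+\dot a^2)r^2+O(r^4)$ for \emph{both} metrics. So no $\dot k$ appears at order $\ell^2$; you only get $(k_1+\dot a_1^2)/a_1^2=(k_2+\dot a_2^2)/a_2^2$, besides $\ddot a_1/a_1=\ddot a_2/a_2$ from $\RB$ on the axis.
\item In a common normalization, the $\ell^2$ coefficients of the Weyl scalar are $(\dot a\dot k+a\ddot k)/(3a^3)$ for $\gw$ and $(\dot a\dot k+a\ddot k)/a^3$ for $\gr$. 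This is one more ODE, satisfied for instance when both metrics share the same $(a,k)$ with $a\dot k$ constant.
\end{itemize}
Any contradiction therefore sits at order $\ell^4$ or beyond. To reach it legitimately you must replace your $\ell$ by the invariant distance along base geodesics orthogonal to the axis, together with the induced $O(\ell^2)$ shift of the time label. Your $\ell$ is length along $t=$const lines, which the isometry need not preserve and which are in general not base geodesics ($\Gamma^t{}_{rr}=W\dot W$).

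The $\mathrm{Pol}$/transcendence idea fails for the same reason. It presupposes that $t$ and $r$ are preserved, whereas the identification is an unknown diffeomorphism of the base and $(a_2,k_2)$ are arbitrary. So ``trigonometric versus rational in $r^2$'' is not an invariant distinction. In Prop.~\ref{perfect-fluid is RW} linear independence is applied to one metric in its own chart, not across an unknown change of variables. Also, an expansion at $r=0$ can at best exclude isometries of neighbourhoods of the centre.

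The missing idea is a structural property that one side has for \emph{all} $(a,k)$. This turns a comparison of two arbitrary pairs of functions into a one-sided computation. The paper proceeds this way:
\begin{itemize}
\item The warped base $-\dd t^2+a^2\dd r^2$ is a two-dimensional Robertson--Walker metric. It always admits the spacelike Killing field $\vecxi=\partial_r$ with $\vecxi(\RB)=0$.
\item A base isometry transports it, with its relations to $f$, to a Killing field $A\partial_t+B\partial_r$ of $-\dd t^2+a^2(1-kr^2)^{-1}\dd r^2$.
\item The Killing equations reduce the transported conditions to polynomial identities in $r$. Their coefficients force $\dot k=2k\dot a/a$ and $\dot a=ba$, i.e.\ the excluded exceptional family.
\end{itemize}
If you adopt this route, transport a relation that actually holds for the warping function $f=aS_{k(t)}(r)$ of $\gw$. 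For that $f$, $\vecxi(f)/\sqrt{\tengB(\vecxi,\vecxi)}=C_{k(t)}(r)$ has $t$-derivative $-\tfrac12 r\dot k\,S_{k(t)}(r)\neq0$. As written, the paper's condition (iii) holds only because $f^{\kwar}$ is set equal to $ar$ there. By contrast, $\big(\tengB(\vecxi,\vecxi)-\vecxi(f)^2\big)/f^2=k(t)$ is genuinely $\vecxi$-invariant.
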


\begin{proof}
    We separate the case when the Killing algebra of the spacetime is just $\mathrm{so}(n)$ from the case when it is $\mathrm{so}(n) \oplus \mathbb{R}$. 
        
    Assume first that both  metrics admit the additional Killing vector $\vecX=\partial_t- b r \partial_r$. We use the expressions for $a(t), k(t)$ given in Result \ref{resultKV} and we assume that $b\neq 0$ (otherwise $\dot{k}=0$ against hypothesis). The orbits of the Killing algebra are the hypersurfaces $\HH_{u_0}$ defined by $r e^{bt} = u_0$, with $u_0$ a positive constant. A first necessary condition for the existence of a local isometry between $\gw$ and $\gr$ is that, to each value of $u_0>0$, there exists a value $v_0>0$ such that $(\HH_{u_0},\qw_{u_0})  \subset (\uwar, \gw)$ is locally isometric to $(\HH_{v_0},\qr_{v_0})  \subset (\urad, \gr)$, where $\qw_{u_0}, \qr_{v_0}$ are the respective first fundamental forms. A simple computation gives
    \begin{align}
          \qw_{u_0}& =( -1 + a_0^2 u_0^2 b^2 ) \dd t^2 + a_0^2 S^2_{k_0} (u_0) \tengS\, ,  \label{Eq:qw} \\
          \qr_{v_0} & = \left ( -1 + \frac{a_0^2 v_0^2 b^2}{1- k_0 v_0^2} \right )  \dd t^2 + a_0^2 v_0^2 \tengS \label{Eq:qr} \,.
    \end{align}   
    For generic values of $u_0, v_0$, these tensors are non-degenerate, and represent metrics of cylinders $\mathbb{R}\times \mathbb{S}^{n-1}$. The Ricci tensor of both  metrics~\eqref{Eq:qw} and ~\eqref{Eq:qr} has a single eigenvector with vanishing eigenvalue, namely the direction spanned by $\partial_t$. The local isometry must respect these eigenspaces. Consequently, it must respect also its orthogonal planes. Thus, the isometry between the cylinders $\qw_{u_0}$ and $\qr_{v_0}$ will necessarily have to also map isometrically $(n-1)$-spheres of constant $t$ to $(n-1)$-spheres, that is, to map the orbits of the Killing algebra $\mathrm{so}(n)$ to each other. As a result, the assumed local isometry between $\gw$ and $\gr$ must also map the 2-surfaces orthogonal to the $(n-1)$-spheres to each other. The scalar curvatures of the metrics of those 2-surfaces, namely the base space metrics $\tengB^{\kwar}$ and $\tengB^{\krad}$, are $\R^{(\tengB^{\kwar})} =2b^2$ and $\R^{(\tengB^{\krad})} = 2b^2(2k_0 r^2e^{2bt}+1)/(k_0 r^2e^{2bt}-1)^2$. The first is constant while the second  is not. Hence the metrics $\gw$ and $\gr$ are not locally isometric in the case when both admit the additional Killing vector $\vecX$.

    When there is no additional Killing vector, the Killing algebra is $\mathrm{so}(n)$ and the orbits are, for both metrics, the spheres of constant $t$ and $r$. A necessary condition for the existence of a local isometry is that the orbits of the Killing algebra are mapped locally isometrically to each other, and that the same happens for their orthogonal $2$-surfaces. In other words, the base spaces  of both spacetimes must be locally isometric and, in addition, this local isometry must map the warping function $f^{\kwar}(t,r) = a(t)r$ of $\gr$ to the warping function $f^{\krad}(t,r) := a(t) S_{k(t)}(r)$ of  $\gw$. The metric of the base space of $\gw$ is $\tengB^{\kwar} = -\dd t^2 + a(r)^2 \dd r^2$. This metric admits a Killing vector $\vecxi= \partial_r$ which satisfies the following three properties ($\vecxi^{\perp}$ is any vector field everywhere perpendicular to $\vecxi$, $\Lie$ denotes Lie derivative and $\R^{(\teng)}$ denotes the scalar curvature of a metric $\teng$):
    \begin{itemize}
      \item[(i)] $\vecxi$ is spacelike everywhere.
      \item[(ii)] $\Lie_{\vecxi} \R^{(\tengB^{\kwar})} =0$.
      \item[(iii)] $\Lie_{\vecxi^{\perp}} \left ( \frac{1}{\sqrt{\tengB^{\kwar}(\vecxi,\vecxi)}} \Lie_{\vecxi} f^{\kwar} \right ) =0$.
    \end{itemize}
    A necessary condition for existence of a local isometry between  $\gw$ and $\gr$ is the existence of a Killing vector $\hat{\vecxi}$ of the base space metric $\tengB^{\krad} = - \dd t^2 + a(t)^2/(1 - k(t) r^2) \dd r^2$ satisfying the same properties (i)-(iii) above with all quantities referred to the $k(t)$-radial spacetime. We decompose the Killing as $\hat{\vecxi} = A(t,r) \partial_t + B(r,t) \partial_r$ and note that $B$ cannot vanish anywhere because of (i). The Killing equations for $\hat{\vecxi}$ are equivalent to
    \begin{align}
        \dot{A} =0, \qquad \dot{B} = \frac{(1 - k r^2)}{a^2} A',
        \qquad B' =
        - A \left ( \frac{\dot{a}}{a} + \frac{\dot{k} r^2}{2 (1-k r^2)} \right )
        - \frac{k r B}{1- k r^2}. \label{killeqs}
    \end{align} 
    Conditions (ii) and (iii) are equations for $A,B$ which, upon using equations \eqref{killeqs} in the second one, become polynomial in $A$ and $B$ (the first one is homogeneous of degree one,  while the second one is homogeneous of degree two). They can be easily combined so as to make $A$ disappear. The result is an equation of the form
    \begin{equation}
        r B^2 \sum_{i=0}^6 {\mathcal P}_i(t) r^{2i} = 0 ,
    \end{equation}
    where ${\mathcal P}_i(t)$ depend only on $a(t),k(t)$ and their derivatives. Since $B$ is nowhere zero, the six equations ${\mathcal P}_i(t)=0$ must hold true.  It is useful to compute the combination
    \begin{equation}
        \sum_{i=0}^6 {\mathcal P}_i(t) k^{6-i} = 9 \dot{k}^6 a^6
        \left ( \dot{k} - 2 k \frac{\dot{a}}{a} \right ) =0 .
    \end{equation}
    Thus, $\dot{k} = 2 k a^{-1} \dot{a}$. Note that $\dot{a}$ cannot be identically zero because we are assuming $\dot{k} \neq 0$. After inserting  this into ${\mathcal P}_i$, it is advantageous to compute the combination
    \begin{equation}
        \sum_{i=1}^6 i  {\mathcal P}_i(t) k^{6-i} = 1728 \, \dot{a}^5
        k^7 \left ( -\ddot{a} + \frac{\dot{a}^2}{a} \right ) = 0.
    \end{equation}
    Hence $\ddot{a} = a^{-1} \dot{a}^2$ which can be integrated once to give $\dot{a} = b a$, with $b$ constant. But then $a(t)= a_0 e^{bt}$ and $k(t) = k_0 e^{2bt}$ and the spacetime $\gr$ admits an additional Killing vector, against hypothesis in the present case.
\end{proof}

\subsection{Comparison of the \texorpdfstring{$k(t)$}{k(t)} metrics with the Stephani universe}

Let us take the extension of the Stephani universe to $n+1$  dimensions\footnote{
    In principle, this form of the metric is slightly more general than the one used in the book~\cite[Sec.~19.7]{Ellis2012}, see~\cite{Clarkson1999}.
}
\begin{align}\label{eq:Stephani}
    \teng_{\text{Steph}} = - \left(\frac{3}{\Theta(t)}\frac{\dot{U}(t,\boldsymbol{x})}{U(t,\boldsymbol{x})}\right)^2 \dd t^2 +  \frac{1}{U(t,\boldsymbol{x})^2} (\dd r^2 + r^2 \tengS) ,
\end{align}
with
\begin{align}
    U(t,\boldsymbol{x}) := a(t) + b(t) r^2 - 2\boldsymbol{c}(t) \cdot \boldsymbol{x}\,,
\end{align}
where, in principle, $\Theta(t),a(t),b(t)$ and $\boldsymbol{c}(t) = \left(c_1(t),c_2(t),...,c_n(t)\right)$ are arbitrary functions of time, $\boldsymbol{x} = (x^1,x^2,...,x^n)$ is the position vector in Euclidean space and $r:=\sqrt{\boldsymbol{x}\cdot\boldsymbol{x}}$. Our expression is based on that presented in \cite{Barnes1998}. The two properties that invariantly characterize the geometry are:
\begin{enumerate}
    \item The metric is locally conformally flat for $n\geq3$, as the Weyl tensor identically vanishes for \eqref{eq:Stephani}.
    
    \item The matter supporting this geometry corresponds to a perfect fluid. This means that the components of the Einstein tensor $G^\mu{}_\nu$ (in the holonomic basis $\{t,r,\theta^A\}$, where it is a diagonal matrix) exhibit the same eigenvalue along the spatial directions (i.e., the fluid is isotropic).
\end{enumerate}

We can use the second property to prove the following result: 

\begin{proposition}\label{prop:ineqStephani}
    None of the $k(t)$  metrics with non-constant $k(t)$ are locally isometric to the Stephani class \eqref{eq:Stephani}.
\end{proposition}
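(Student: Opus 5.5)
The plan is to exploit the second invariant property of the Stephani class: its Einstein tensor has the algebraic structure of a perfect fluid. This property is invariant under local isometries, because the Einstein tensor is built covariantly from the metric and a local isometry carries eigenvalues and eigenspaces of $G^a{}_b$ to eigenvalues and eigenspaces. So, if a $k(t)$ metric with non-constant $k(t)$ were locally isometric to a member of \eqref{eq:Stephani}, then on some open set its Einstein tensor would be of perfect fluid type.

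Next I would invoke Prop.~\ref{perfect-fluid is RW}. It says that each of the three $k(t)$ metrics has perfect-fluid Einstein tensor if and only if it is locally isometric to FLRW. Its proof shows more: the characteristic-polynomial condition \eqref{eq:char_poly} forces $\dot k=0$ wherever it holds. Hence, on the open set where the isometry is defined, we would have $\dot k=0$.

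To conclude, I must make sure a local statement contradicts the hypothesis. If ``non-constant $k(t)$'' is read as $\dot k\neq0$ on some interval, the issue is whether the isometric region can sit over a time interval where $k$ is locally constant. I would handle this by arguing pointwise. The computation in Prop.~\ref{perfect-fluid is RW} is local in $(t,r)$: for instance, in the conformal case $\mathrm{Pol}\propto r^2\dot k^2$. Vanishing of $\mathrm{Pol}$ on an open set of $(t,r)$ therefore gives $\dot k(t)=0$ for every $t$ in the projected time interval. For the radial and warped cases the same holds, since the coefficients of the independent functions of $r$ must vanish at each $t$. So any open region where the isometry holds must project onto a $t$-interval with $\dot k=0$, that is, a region where the metric is locally FLRW. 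Interpreting the statement as ``there is no local isometry around points where $\dot k\neq 0$'', this is the desired contradiction. Via Lem.~\ref{FLRW_kdot}, the metric would otherwise be FLRW there.

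The main obstacle is this localization subtlety, together with making sure the perfect-fluid characterization of Stephani is used in exactly the algebraic form employed in Prop.~\ref{perfect-fluid is RW}. That form is the existence of a timelike eigenvector in the $(t,r)$ block sharing its eigenvalue with the angular block. Because the $k(t)$ metrics are spherically symmetric with $G^A{}_B\propto\delta^A{}_B$, a perfect-fluid Einstein tensor must have the isotropic spatial eigenvalue equal to $G^\theta{}_\theta$, and so \eqref{eq:char_poly} applies. Conformal flatness is not needed; it would only sharpen the argument for the warped and radial cases, which already fail it by Prop.~\ref{prop:ineqkt1}.
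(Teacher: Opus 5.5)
Your proposal is correct and is essentially the paper's argument: the perfect-fluid character of the Stephani Einstein tensor is invariant under local isometries, and Prop.~\ref{perfect-fluid is RW} then forces $\dot k=0$; the paper phrases this last step through Lem.~\ref{FLRW_kdot}. Your pointwise localization remark, that any isometric region must project onto a $t$-interval where $\dot k=0$, is a harmless sharpening that the paper leaves implicit by reading ``locally isometric'' as holding around every point.
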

\begin{proof}
    By Prop.~\ref{perfect-fluid is RW}, if a $k(t)$ metric is of perfect fluid type then it is FLRW, and
    thus, by Lem.~\ref{FLRW_kdot}, $k(t)$ must be constant, against hypothesis.
\end{proof}

As a complementary comment, we notice that not all the Stephani geometries are spherically symmetric, contrary to the $k(t)$ metrics, which always exhibit an $\mathrm{SO}(n)$ symmetry group. In fact, for the specific case $n=3$, the isometries of the Stephani spacetimes where fully characterized in \cite{Barnes1998}. 

In that case, depending on the rank $\mathfrak{r}$ of the vector space spanned by the functions $a(t),b(t), \boldsymbol{c}(t)$, the metric displays a different amount of symmetries. Following~\cite{Barnes1998}, we have that for $\mathfrak{r} \geq 4$, there are no Killing vector fields, for $\mathfrak{r} = 3$, there is a one-dimensional symmetry group, if $\mathfrak{r} = 2$ the spacetime admits a three-dimensional group of isometries with two-dimensional orbits and in the case in which $\mathfrak{r} = 1$, the spacetime is FLRW. Thus, whenever $\mathfrak{r} \neq 2$, the spacetime is automatically not isometric to any of the $k(t)$ spacetimes. Furthermore, the exceptional case for which an additional Killing vector fields exists can never be isometric to the Stephani universes, as they never exhibit a four dimensional group of symmetries. 

Even though we are able to show that our $k(t)$ metrics are never isometric to the Stephani universes for $\dot{k} \neq 0$, a more general classification of the symmetries of the Stephani universes for $n > 3$ is lacking in the literature and it would be very interesting to perform. 

%
\subsection{Comparison of the \texorpdfstring{$k(t)$}{k(t)} metrics with the Lema\^itre-Tolman-Bondi metric}

Another family of inhomogeneous cosmological spacetimes is given by the LTB metric (see \cite[Chap. 15]{Ellis2012}):
\begin{equation}
    \label{eq:LTBmetric}
    \teng_{\text{LTB}} = - \dd t^2 + \frac{R'(t,r)^2}{1 + 2 E(r) } \dd r^2 + R(t,r)^2 \tengS\,,
\end{equation}
where we have generalized the geometry to arbitrary dimensions by extending the 2-dimensional spherical fibers to $(n-1)$-spheres. We notice that the spacetime is spherically symmetric, i.e., it exhibits a
$\mathrm{SO}(n)$ group of symmetry as all the $k(t)$ metrics. Here, $E(r)$ is an arbitrary function and $R(t,r)$ is constrained to satisfy
\begin{equation}
    \label{eq:LTBcondition}
    \dot{R}^2 = 2 E + \frac{2M}{R^{n-2}} + \frac{2\Lambda}{n(n-1)} R^2 ,
\end{equation}
with $\Lambda$ the cosmological constant and $M(r)$ another arbitrary function. The condition \eqref{eq:LTBcondition} ensures that the only non-vanishing component of the Einstein tensor is $G^t{}_t \propto M'(r)$, i.e., the matter supporting the geometry is pure dust (pressureless matter), and thus perfect fluid in particular. As above, direct application of Prop.~\ref{perfect-fluid is RW} leads to the following result:

\begin{proposition}\label{prop:ineqLTB}
    None of the $k(t)$ metrics with non-constant $k(t)$ are locally isometric to the LTB geometries~\eqref{eq:LTBmetric}.
\end{proposition}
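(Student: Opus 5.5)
The plan mirrors the Stephani comparison (Prop.~\ref{prop:ineqStephani}) and rests on Prop.~\ref{perfect-fluid is RW} together with Lem.~\ref{FLRW_kdot}. Suppose, for contradiction, that a $k(t)$ metric $\teng$ with non-constant $k(t)$ is locally isometric to some LTB metric \eqref{eq:LTBmetric}, say on an open set $\U$. Since \eqref{eq:LTBcondition} holds, the Einstein tensor of $\teng_{\text{LTB}}$ has the dust form $G^\mu{}_\nu = \rho\, u^\mu u_\nu$ with $u=\partial_t$. This is a perfect fluid with vanishing pressure.

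Isometries preserve the Einstein tensor and its algebraic type. Hence on $\U$ the Einstein tensor of $\teng$ has perfect fluid structure: it has a timelike eigenvector, and the orthogonal spacelike directions are degenerate eigendirections. A subtle point arises where $M'=0$, because there LTB is vacuum (with $\Lambda$), and $G$ is pure trace and therefore still trivially of perfect fluid type. Either way, the characteristic-polynomial condition \eqref{eq:char_poly} holds on $\U$.

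The proof of Prop.~\ref{perfect-fluid is RW} is local. Vanishing of $\mathrm{Pol}$ on an open set forces $\dot k=0$ on the corresponding open $t$-interval, because the coefficients of $r$-functionally independent terms must vanish. So $\teng$ is locally FLRW there. If the isometry is assumed on a neighbourhood of every point, connectedness of $I$ then gives $k$ constant, contradicting the hypothesis. If only a single open set is given, the conclusion is already that $\dot k=0$ on a subinterval, which is incompatible with genericity. I would phrase the statement, as the paper does for Stephani, as global local-isometry and invoke Lem.~\ref{FLRW_kdot} directly.

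The only real obstacle is being careful that the dust/vacuum Einstein tensor indeed meets the hypotheses used in Prop.~\ref{perfect-fluid is RW}, namely that an eigenvalue $\lambda=G^\theta{}_\theta$ is shared with a timelike eigenvector in the $(t,r)$ block. This holds because the spherical orbits of both metrics coincide under the isometry when the symmetry algebra is $\mathrm{so}(n)$. In the exceptional case with the extra Killing vector $\vecX$, one can instead apply the eigenvalue condition pointwise and algebraically, independently of orbit matching. Beyond that, the argument is a direct corollary.
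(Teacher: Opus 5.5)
Your argument is correct and is essentially the paper's: the LTB Einstein tensor (with or without $\Lambda$, including the vacuum regions where $M'=0$) is of perfect-fluid type, so Prop.~\ref{perfect-fluid is RW} makes the $k(t)$ metric locally FLRW, and Lem.~\ref{FLRW_kdot} then forces $\dot k=0$. Your concern about matching spherical orbits, including the separate treatment of the case with the extra Killing vector, is unnecessary: perfect-fluid type is an isometry-invariant algebraic property of $G^a{}_b$, and the condition \eqref{eq:char_poly} is derived pointwise entirely on the $k(t)$ side.
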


\section{Summary of results}
\label{Sec:results} 

In the following, we summarize the relevant results of this article.

\paragraph{Curvature and singularities.}

The $k(t)$-conformal metric for any $k(t)$ is locally conformally flat (see Prop.~\ref{prop:kconf_Weylzero}). The other $k(t)$ models are not (for non constant $k(t)$).

For non-constant $k(t)$ there is a curvature singularity at the upper limit of $r$ for $k(t)>0$ in the $k(t)$-warped and the $k(t)$-radial cases, and for $k(t)<0$ in the $k(t)$-conformal case (see, respectively, Prop.~\ref{prop:kwar_sing}, \ref{prop:krad_sing} and \ref{prop:kconf_sing}).

\paragraph{Extendibility.}

The domains of the three $k(t)$ metrics initially defined admit smooth extensions to $r=0$ independently of the sign of $k(t)$. For $k(t)\leq 0$, the three metrics are inextendible at the upper limit of $r$. However, in the regions where $k(t)>0$, i.e., when $t\in I_+$, we have that:
\begin{itemize}
    \item For $\gw$ the points $r = \pi/ \sqrt{k(t)}$ can be included in the manifold, so that each $t$-slice is a topological sphere and the region is diffeomorphic to $I_+\times \mathbb{S}^{n}$. The metric is singular therein, but we will not be concerned with this, taking into account the possibility of smoothening on a region $\mathcal{U}_\pi$  (Prop.~\ref{p_3metrics1}).
    
    \item For $\gc$, the points $r=\infty$ are  also  included so that the region is diffeomorphic to $I_+\times \mathbb{S}^{n}$ and the metric becomes smooth therein.

    \item For  $\gr$, the points $r=1/\sqrt{k(t)}$ are not included in the manifold so that each $t$-slice is a topological disk and  the region is diffeomorphic to $I_+\times \mathbb{R}^{n}$ (as so will be the whole spacetime). However, the value $r=1/\sqrt{k(t)}$ yields a boundary which is spacelike in the sense of Lem.~\ref{l_3metrics3}.
\end{itemize}

\paragraph{Global hyperbolicity.}

We have fully characterized the necessary and sufficient conditions for global hyperbolicity for the $k(t)$-warped and $k(t)$-conformal cases:
\begin{itemize}
    \item The $k(t)$-warped spacetime is globally hyperbolic if and only if it belongs to one of the cases in Thm.~\ref{thm:kwar_globhyp}.        
    \item The $k(t)$-conformal spacetime is globally hyperbolic if and only if it belongs to one of the cases in Thm.~\ref{thm:kcon_globhyp}.
\end{itemize}
However, for the $k(t)$-radial case, we identified some sufficient conditions that were provided in Thm.~\ref{thm:krad_globhyp}. The existence of more possibilities is explained in Rem.~\ref{rem:krad_otherGHcases}, but an exhaustive mathematical enumeration does not seem simple.

\paragraph{Isometries.}

\begin{itemize}

    \item The three $k(t)$ metrics are isometric to FLRW only when $k(t)$ is constant (Lem.~\ref{FLRW_kdot}). Moreover, this occurs if and only if the Einstein tensor of the $k(t)$ metric is that of a perfect fluid (Prop.~\ref{perfect-fluid is RW}).
    
    \item For generic functions $k(t)$ and $a(t)$, the Killing algebra of the geometries is $\mathrm{so}(n)$. Only when $a(t)=a_0 e^{bt}$ and $k(t)=k_0 e^{2bt}$, with $a_0$, $k_0$, and $b$ constant, do they admit a single additional Killing vector (see Thm.~\ref{resultKV}). 
    
    \item As an intermediate step, we exhaustively characterized the Killing vectors of a general spherically symmetric warped product in Prop.~\ref{res:general_spherical}. 

\end{itemize}

\paragraph{Comparison with other geometries.}

The different $k(t)$ metrics are not locally isometric to each other (see Props.~\ref{prop:ineqkt1} and \ref{prop:ineqkt2}), nor to any element lying in the Stephani class \eqref{eq:Stephani} (see Prop.~\ref{prop:ineqStephani}), nor to any element in the LTB class \eqref{eq:LTBmetric} (see Prop.~\ref{prop:ineqLTB}).

\section{Discussion and conclusions}
\label{Sec:conclusions} 

In this article, we have presented three geometries that might serve as cosmological models, each foliated by spatial hypersurfaces of constant but time-dependent curvature $k(t)$, where the sign of $k(t)$ may vary with time. We have analyzed and characterized these geometries both locally and globally.

From the local point of view, we have computed  their curvature properties and analyzed under which conditions curvature singularities appear. Furthermore, we have examined their  Killing vectors and shown that, aside from an exceptional case, all the geometries are only invariant under rotations. We have also shown under which conditions the geometries can be extended beyond the patch in which they are originally introduced. Furthermore, we have studied their global properties, in particular their global hyperbolicity.  

We have given necessary and sufficient conditions for the spacetime to be globally hyperbolic in the $k(t)$-warped and conformal  metrics, and give some sufficient conditions for the $k(t)$-radial, as a full classification of all the possible cases seems too complicated.  We also characterize for the $k(t)$-warped and conformal metrics all the cases when the topological transition occurs maintaining global hyperbolicity (Thm.~\ref{thm:kwar_globhyp}, Thm.~\ref{thm:kcon_globhyp}).  Then, for simple choices of $k(t)$, the following interpretation appears: 
 
\begin{quote} 
    The topological change of the $t$-slices  models an expansion starting at a finite boundaryless region with finite matter and  energy. After the topological change, the space would seem infinite by looking at  comoving observers. 
 
    However, predictability from the finite initial region (or any region, necessarily spatially finite, evolved from it) would be always preserved. 
\end{quote}
 
One can think of a huge expansion which might happen at very early $t$-times even though, for any Cauchy temporal function $\tau$, this ``early $t$-time'' along the expanding region lasts for arbitrarily big values of $\tau$ (thus, the expansion has not finished yet!). So, in the $k(t)$- warped and conformal cases:  
\begin{quote}
    One might arrive  to a present-day ``Euclidean'' $t$-space coming from a rather classic Big Bang.  Indeed, this would happen when $k(t) \simeq 0$ in the region $t\geq t_0$ for some $t_0$, meaning that the density of matter there (necessarily far from the asymptotic direction of expansion) is negligible. 
\end{quote} 

For the $k(t)$-radial metric one has  transitions similar to the other metrics when $k(t)\leq 0$ (Prop.~\ref{t_radial0}). However, when $k(t)>0$ somewhere, the situation is subtler, because the aforementioned possibility of extending the metric from the  half sphere  to a complete sphere depends crucially on the vanishing of $\dot{k}(t)$, yielding a dramatic change in the spacetime (Rem.~\ref{r_radial}).
Assuming additionally that $\dot{k}(t)$ does not vanish when $k(t)>0$, one has a {\em globally hyperbolic} spacetime foliated by half spheres (of  radius  increasing or decreasing with $t$, respectively)
which are not Cauchy hypersurfaces (Prop.~\ref{p_radial}, Thm.~\ref{thm:krad_globhyp}).

To finish, we mention some lines of work that might be interesting to pursue in the future. First of all, we stress that the phenomenon of sign changing spatial curvature can be obtained by using different parameterizations of constant curvature spaces and, thus other possibilities (see for example~\cite{Hestenes}) might be also worth of exploring. We also note that there is a specific choice of functions $a(t)$ and $k(t)$, identified in Sec.~\ref{Sec:isometries}, in which the metric exhibits an additional Killing field that would be interesting to further study. Finally, we note that it would be interesting to characterize the energy-momentum tensor associated with these geometries and we leave it for future work. 

\acknowledgments{
The authors would like to thank Jose Beltrán Jiménez, José M. M. Senovilla, Carlos Barceló, Luis J. Garay, Andrzej Krasinski, Robin Croft and Thomas Van Riet for useful discussions and feedback.
G. Garc\'ia-Moreno is supported by the MUR FIS2 Advanced Grant ET-NOW (CUP:~B53C25001080001), by the INFN TEONGRAV initiative and by the project Grant No. PID2023-149018NB-C43 funded by MCIN/AEI/10.13039/501100011033.
The work of B. Janssen and A. Jim\'enez Cano has been supported by the grant PID2022-140831NB-I00 funded by MCIN/AEI/10.13039/501100011033.
M. Mars acknowledges financial support under projects PID2024-158938NB-I00 (Spanish Ministerio de Ciencia e Innovación and FEDER “A way of making Europe”), SA097P24 (JCyL) and RED2022-134301-T funded by MCIN/AEI/10.13039/501100011033. 
R. Vera was supported by grant IT1628-22 from the Basque Government, and PID2021-123226NB-I00 funded by ``ERDF A way of making Europe'' and MCIN/AEI/10.13039/501100011033.
M. Sánchez was supported by PID2024-156031NB-I00 and the framework IMAG-María de Maeztu grant CEX2020-001105-M, both funded by MCIN/AEI/10.13039/50110001103.

The computations have been checked with xAct \cite{xAct}, a Mathematica package for tensorial symbolic calculus; the corresponding notebook is available upon request. 
Some algebraic computations have been performed with the help of the free PSL version of REDUCE.
}

\appendix 
\section{Properties of the function \texorpdfstring{$S_{k(t)}(r)$}{S}}
\label{app:appSk}

In this appendix we collect some expressions that are useful for calculations involving the curvature of the $k(t)$-warped case. They depend on the function $S_{k(t)}(r)$ introduced in Eq.~\eqref{eq:defSk} and its derivatives. First, we introduce the convenient notation:
\begin{equation}\label{eq:Str}
    S(t,r) := S_{k(t)}(r)\,.
\end{equation}
Let us now define the function:
\begin{equation}\label{eq:Ctr}
    C(t,r):=C_{{k(t)}}(r)=
        \begin{cases}
         \cos(\sqrt{k(t)}\,r) & \text{if}\ k(t)>0\\
          1 & \text{if}\ k(t)=0\\
          \cosh(\sqrt{-k(t)}\,r) & \text{if}\ k(t)<0   
        \end{cases} \quad .
\end{equation}
Then one can prove (for $k(t)\neq 0$):\footnote{\label{foot:k0case}
    The case $k(t)=0$ can also be recovered from these expressions by simply dropping all the terms with $\dot{k}$ and $\ddot{k}$.
    }
\begingroup
\begin{align}
    1 &= kS^2+C^2\,, \\   
    S' &= C\,, \\    
    S'' &= C' = -kS\,, \\  
    \dot{S} &= \frac{1}{2}\frac{\dot{k}}{k}(rC- S)\,, \label{eq:dotS} \\ 
    \dot{S}' &= \dot{C}= -\frac{1}{2}r\dot{k} S \,, \\ 
    \ddot{S} &= \frac{1}{2}S\left[\left(\frac{\ddot{k}}{k}-\frac{3\dot{k}^2}{2k^2}\right)\frac{rC-S}{S}-\frac{r^2 \dot{k}^2}{2k} \right]\,.
\end{align}

In App.~\ref{app:curvatures_kwar} and App.~\ref{app:congruences_kwar}, the function $C(t,r)$ always appears through the following combination:
\begin{equation}\label{eq:Ntr}
   N(t,r):=\frac{rC(t,r)}{S(t,r)}-1 =
        \begin{cases}
         \sqrt{k(t)}\,r\cot(\sqrt{k(t)}\,r)-1 & \text{if}\ k(t)>0\\
          0 & \text{if}\ k(t)=0\\
          \sqrt{-k(t)}\,r\coth(\sqrt{-k(t)}\,r)-1 & \text{if}\ k(t)<0   
        \end{cases} \quad .
\end{equation}
Observe that, by virtue of \eqref{eq:dotS}, this function fulfills
\begin{equation}
    \frac{\partial \ln S}{\partial t} = N \frac{\dot{k}}{2k}\,.
\end{equation}

\section{Curvature tensors for the \texorpdfstring{$k(t)$}{k(t)} metrics}
\label{app:curvatures}

The expressions \eqref{eq:R_WF1}-\eqref{eq:R_WF6} can be used to compute the explicit components of the three $k(t)$ metrics in the given chart $\{t,r,\theta^A\}$. We omit the components $\R_{tA}$ and $\R_{rA}$ since they are identically vanishing.

\subsection{Curvature tensors for the \texorpdfstring{$k(t)$}{k(t)}-warped metric}
\label{app:curvatures_kwar}

For the $k(t)$-warped metric $\gw$ we use the notation \eqref{eq:Str} and find for $k(t)\neq 0$:\footnote{
    The case $k(t)=0$ can be obtained, as mentioned in footnote~\ref{foot:k0case}, by simply dropping all the terms with $\dot{k}$ and $\ddot{k}$; in particular $\Xi_1 \equiv 0$ and $\Xi_2\equiv 0$.
    }
\begingroup
\allowdisplaybreaks
\begin{align}
    \R_{tt} &= -\frac{n \ddot{a}}{a} + \frac{n-1}{4}(\Xi_1(0,3,1)-\Xi_2(2,1)) \,, \\
    \R_{tr} &= \frac{n-1}{2} r \dot{k} \,,\\
    \R_{rr} &= (n-1)( k + \dot{a}^2 ) + a \ddot{a} + \frac{n-1}{4}a^2 \Xi_2(1,0) \,, \\
    \R_{AB} &=  \Bigg[ (n-1)(k + \dot{a}^2) + a \ddot{a}  - \frac{a^2}{4} \Big(\Xi_1(n-2,2n-1,n-1)-\Xi_2(2n-1,1)\Big)\Bigg] S^2 \gS_{AB}\,, \\
    \R &= n(n-1) \left(\frac{k}{a^2}+\frac{\dot{a}^2}{a^2}\right) +2n\frac{\ddot{a}}{a} - \frac{n-1}{4} \Big( \Xi_1(n-2,2(n+1),n)-2\Xi_2(n+1,1)\Big) \,, \label{eq:RicciScalarkwar}\\
    G_{tt} &= \frac{n-1}{2}\left[n\left(\frac{k}{a^2}+\frac{\dot{a}^2}{a^2}\right) - \frac{1}{4} \Big((n-2)\Xi_1(1,2,1)-2\Xi_2(n-1,0)\Big)\right]\,,\\
    G_{tr} &= \R_{tr}\,,\\
    G_{rr} &= -\frac{n-1}{2}\left[ (n-2)(k +  \dot{a}^2) + 2 a \ddot{a} - \frac{a^2}{4}\Big(\Xi_1(n-2,2(n+1),n)-2\Xi_2(n,1)\Big)
    \right]\,,\\
    G_{AB}&= \Bigg[- \frac{(n-1)(n-2)}{2}(k + \dot{a}^2)- (n-1) a \ddot{a} \nonumber\\
    &\qquad\qquad + \frac{a^2(n-2)}{8}\Big(\Xi_1(n-3, 2n,n-1) -2\Xi_2(n,1)\Big)\Bigg] S^2 \gS_{AB} \,,
\end{align}
\endgroup
where we are using the abbreviations:
\begin{align}
    \Xi_1(p_1,p_2,p_3)&:= \frac{\dot{k}^2}{k^2}\left[p_1 \left(1 - \frac{r^2}{S^2}\right)+p_2 N + p_3 k r^2\right] \,,\\
    \Xi_2(p_1,p_2) &:= 2 N  \left(p_1\frac{\dot{a}\dot{k}}{ak}+p_2 \frac{\ddot{k}}{k}\right)\,,
\end{align}
and $N$ is defined in \eqref{eq:Ntr}.

\subsection{Curvature tensors for the \texorpdfstring{$k(t)$}{k(t)}-conformal metric}

For the $k(t)$-conformal metric $\gc$ we find: 
\begingroup
\allowdisplaybreaks
\begin{align}
    \R_{tt} &= -n \left(\frac{\ddot{a}}{a} - \frac{r^2 (2 \dot{a} \dot{k} + a \ddot{k})}{a (1 + k r^2)} + \frac{2 r^4 \dot{k}^2}{(1 + k r^2)^2} \right) \,, \\
    \R_{tr} &=  \frac{2(n-1) r \dot{k}}{(1 + k r^2)^2} \,,\\
    \R_{rr} &=   \frac{4\big( (n-1)( k + \dot{a}^2) + a \ddot{a}\big)}{(1 + k r^2)^2} - \frac{4r^2 a (2n \dot{a} \dot{k} + a \ddot{k})}{(1 + k r^2)^3} + \frac{4(n+1) r^4 a^2 \dot{k}^2}{(1 + k r^2)^4}  \,, \\
    \R_{AB} &=   \R_{rr}\, r^2\gS_{AB} \,, \\
    \R &= n(n-1) \left(\frac{k}{a^2}+\frac{\dot{a}^2}{a^2}\right)+2n\frac{\ddot{a}}{a} -  \frac{2n r^2 \big((n+1) \dot{a} \dot{k} + a \ddot{k}\big)}{a (1 + k r^2)} +\frac{n(n+3) r^4 \dot{k}^2}{(1 + k r^2)^2} \,, \label{Eq:Ricci_kconformal}\\
    G_{tt} &= \frac{n-1}{2} \left( \frac{k}{a^2}+\frac{\dot{a}^2}{a^2} -  \frac{2 r^2 \dot{a} \dot{k}}{a(1 + k r^2)} + \frac{ r^4 \dot{k}^2}{(1 + k r^2)^2}\right)
    \,,\label{eq:Gtt_kconformal}\\
    G_{tr} &= \R_{tr}\,,\\
    G_{rr} &= 2(n-1)\left[
    -  \frac{(n-2) (k + \dot{a}^2) + 2 a \ddot{a}}{(1 + k r^2)^2} + \frac{2 a r^2 (n \dot{a} \dot{k} + a \ddot{k})}{(1 + k r^2)^3} - \frac{(n+2) a^2 r^4 \dot{k}^2}{(1 + k r^2)^4}
    \right] \,,\\
    G_{AB}&= G_{rr}\, r^2\gS_{AB} \,. 
\end{align}
\endgroup

\subsection{Curvature tensors for the \texorpdfstring{$k(t)$}{k(t)}-radial metric}

For the $k(t)$-radial metric $\gr$ we find:
\begingroup
\allowdisplaybreaks
\begin{align}
    \R_{tt} &= -\frac{n \ddot{a}}{a} - \frac{r^2 (2 \dot{a} \dot{k}+a \ddot{k})}{2 a (1 -  k r^2)}  - \frac{3 r^4 \dot{k}^2}{4 (1 - k r^2)^2} \,, \\
    \R_{tr} &= \frac{n-1}{2}\frac{r \dot{k}}{1 - k r^2}\,,\\
    \R_{rr} &=  \frac{(n-1)(k + \dot{a}^2) + a \ddot{a}}{1 - k r^2} +  \frac{ r^2 a \big((n+1) \dot{a} \dot{k} + a \ddot{k}\big)}{2 (1 - k r^2)^2}+\frac{3 r^4 a^2  \dot{k}^2}{4 (1 - k r^2)^3} \,, \\
    \R_{AB} &= \left[(n-1) (k + \dot{a}^2) + a \ddot{a} + \frac{ r^2 a \dot{a} \dot{k}}{2(1-  k r^2)} \right]r^2\gS_{AB} \,, \\
    \R &= n(n-1) \left(\frac{k}{a^2}+\frac{\dot{a}^2}{a^2}\right) + 2n \frac{\ddot{a}}{a} + \frac{r^2 \big((n+1) \dot{a} \dot{k} + a \ddot{k}\big)}{a (1 -  k r^2)} +\frac{3 r^4 \dot{k}^2}{2 (1 - k r^2)^2}  \,, \label{Eq:RicciScal_krad}\\
    G_{tt} &= \frac{n-1}{2} \left[n\left(\frac{k}{a^2}+\frac{\dot{a}^2}{a^2}\right) + \frac{r^2 \dot{a} \dot{k}}{a(1 -  k r^2)}\right]\,,\\
    G_{tr} &= \R_{tr}\,,\\
    G_{rr} &= -\frac{n-1}{2}\, \frac{(n-2)(k + \dot{a}^2) + 2 a \ddot{a}}{1 - k r^2}\,,\\
    G_{AB}&= - \left[\frac{n-1}{2}\big((n-2)(k + \dot{a}^2) + 2 a \ddot{a}\big) + \frac{ r^2 a(n \dot{a} \dot{k} + a \ddot{k})}{2(1- k r^2)}+ \frac{3 r^4 a^2 \dot{k}^2}{4 (1 - k r^2)^2}\right] r^2 \gS_{AB} \,.
\end{align}
\endgroup

\section{Expansion, shear and vorticity}
\label{app:congruence}

Here we report the values of the expansion and the shear tensor (we recall that the vorticity is zero) for the three different metrics. It can be easily checked that for $\dot{k} = 0$, we recover the FLRW result ($\Theta = n \dot{a}/a$, $\sigma_{\mu\nu} = 0$) in the three cases, as it should. 
\subsection{\texorpdfstring{$k(t)$}{k(t)}-warped metric}
\label{app:congruences_kwar}
For the $k(t)$-warped metric $\gw$ we find for $k\neq 0$:
\begin{align}
    \Theta  &=  n \frac{\dot{a}}{a}+ \frac{n-1}{2} N \frac{\dot{k}}{k} \, , \\
    \sigma_{rr}&= - \frac{(n-1)a^2}{2n}  N \frac{\dot{k}}{k}, \qquad 
    \sigma_{AB} = \frac{a^2S^2N}{2n} \frac{\dot{k}}{k}\gS_{AB} , \qquad 
    \sigma^2   = \frac{n-1}{4 n} N^2 \frac{\dot{k}^2}{k^2} \,,
\end{align}
where $S=S(t,r)$ and $N=N(t,r)$ are defined in \eqref{eq:Str} and \eqref{eq:Ntr}, respectively.

\subsection{\texorpdfstring{$k(t)$}{k(t)}-conformal metric}
\label{app:congruence_kconf}
For the $k(t)$-conformal metric $\gc$ we find:
\begin{align}
  \Theta & = n \frac{\dot{a}}{a} - \frac{n r^2 \dot{k}}{1+ kr^2}, \label{Eq:kconf_expansion} \\
  \sigma_{rr} & = 0 , \qquad     
  \sigma_{AB}  = 0, \qquad   
  \sigma^2 =  0 \,.
\end{align}
Notice that although the congruence is shear-free it does not violate Ellis theorem~\cite{Ellis1971} since the congruence is not that of the proper time of the perfect fluid supporting the configuration. 
\subsection{\texorpdfstring{$k(t)$}{k(t)}-radial metric}
For the $k(t)$-radial metric $\gr$ we find:
\begin{align}
    \Theta & = n \frac{\dot{a}}{a} + \frac{ r^2 \dot{k}}{2(1 -  k r^2)}, \label{Eq:Expansion_krad} \\
    \sigma_{rr} &  = \frac{(n-1) r^2 a^2 \dot{k}}{2n (1- k r^2 )^2}, \qquad 
    \sigma_{AB}  = - \frac{r^4 a^2 \dot{k}}{2n (1-kr^2)} \gS_{AB}, \qquad
    \sigma^2 = \frac{(n-1) \dot{k}^2 r^4}{4 n (1 - k r^2)^2} . \label{Eq:Shear}  
\end{align}
%

\bibliographystyle{JHEP}
\bibliography{kt_biblio}

\providecommand{\href}[2]{#2}\begingroup\raggedright\begin{thebibliography}{10}

\bibitem{ONeill1983}
B.~O'Neill, \emph{Semi-Riemannian Geometry With Applications to Relativity,
  103, Volume 103 (Pure and Applied Mathematics)}.
\newblock Academic Press, 1983.

\bibitem{Sanchez2023}
M.~S\'anchez, \emph{{A class of cosmological models with spatially constant
  sign-changing curvature}},
  \href{http://dx.doi.org/10.4171/pm/2099}{\emph{Portug. Math.} {\bfseries 80}
  (2023) 291--313}, [\href{https://arxiv.org/abs/2209.11184}{{\ttfamily
  2209.11184}}].

\bibitem{Avalos2022}
R.~Avalos, \emph{{On the rigidity of cosmological space-times}},
  \href{http://dx.doi.org/10.1007/s11005-023-01720-9}{\emph{Lett. Math. Phys.}
  {\bfseries 113} (2023) 98},
  [\href{https://arxiv.org/abs/2211.07013}{{\ttfamily 2211.07013}}].

\bibitem{Bergmann1981}
O.~Bergmann, \emph{A cosmological solution of the einstein equations with heat
  flow},
  \href{http://dx.doi.org/https://doi.org/10.1016/0375-9601(81)90782-9}{\emph{Physics
  Letters A} {\bfseries 82} (1981) 383}.

\bibitem{Mersini2007}
L.~Mersini-Houghton, Y.~Wang, P.~Mukherjee and E.~Kafexhiu, \emph{{Nontrivial
  Geometries: Bounds on the Curvature of the Universe}},
  \href{http://dx.doi.org/10.1016/j.astropartphys.2007.12.006}{\emph{Astropart.
  Phys.} {\bfseries 29} (2008) 167--173},
  [\href{https://arxiv.org/abs/0705.0332}{{\ttfamily 0705.0332}}].

\bibitem{Stichel2018}
P.~C. Stichel, \emph{{Analytical solutions for two inhomogeneous cosmological
  models with energy flow and dynamical curvature}},
  \href{http://dx.doi.org/10.1103/PhysRevD.98.104022}{\emph{Phys. Rev. D}
  {\bfseries 98} (2018) 104022},
  [\href{https://arxiv.org/abs/1805.08459}{{\ttfamily 1805.08459}}].

\bibitem{Wang2025}
D.~Wang, O.~Mena, S.~Capozziello and D.~Mota, \emph{{Do low-redshift
  observations open the doors to an open universe?}},
  \href{https://arxiv.org/abs/2512.19565}{{\ttfamily 2512.19565}}.

\bibitem{Larena2008}
J.~Larena, J.-M. Alimi, T.~Buchert, M.~Kunz and P.-S. Corasaniti,
  \emph{{Testing backreaction effects with observations}},
  \href{http://dx.doi.org/10.1103/PhysRevD.79.083011}{\emph{Phys. Rev. D}
  {\bfseries 79} (2009) 083011},
  [\href{https://arxiv.org/abs/0808.1161}{{\ttfamily 0808.1161}}].

\bibitem{Clifton2024}
T.~Clifton and N.~Hyatt, \emph{{A radical solution to the Hubble tension
  problem}}, \href{http://dx.doi.org/10.1088/1475-7516/2024/08/052}{\emph{JCAP}
  {\bfseries 08} (2024) 052},
  [\href{https://arxiv.org/abs/2404.08586}{{\ttfamily 2404.08586}}].

\bibitem{Raffai2025}
P.~Raffai, D.~E.~R. Kis, D.~A. K{\"o}dm{\"o}n, A.~Pataki, R.~L. B{\"o}ttger and
  G.~D{\'a}lya, \emph{{A Case for an Inhomogeneous Einstein-de Sitter
  Universe}},  \href{https://arxiv.org/abs/2511.03288}{{\ttfamily 2511.03288}}.

\bibitem{Wald1984}
R.~M. Wald, \emph{{General Relativity}}.
\newblock Chicago Univ. Pr., Chicago, USA, 1984,
  \href{http://dx.doi.org/10.7208/chicago/9780226870373.001.0001}{10.7208/chicago/9780226870373.001.0001}.

\bibitem{Ellis2012}
G.~F.~R. Ellis, R.~Maartens and M.~A.~H. MacCallum, \emph{Relativistic
  Cosmology}.
\newblock Cambridge University Press, 2012.

\bibitem{Choquet-Bruhat2009}
Y.~Choquet-Bruhat, \emph{{General Relativity and the Einstein Equations}}.
\newblock Oxford Mathematical Monographs. Oxford University Press, United
  Kingdom, 2009.

\bibitem{HawkingEllis}
S.~W. Hawking and G.~F.~R. Ellis, \emph{{The Large Scale Structure of
  Space-Time}}.
\newblock Cambridge Monographs on Mathematical Physics. Cambridge University
  Press, 2, 2023,
  \href{http://dx.doi.org/10.1017/9781009253161}{10.1017/9781009253161}.

\bibitem{Sgeroch}
M.~S\'anchez, \emph{Recent progress on the notion of global hyperbolicity},  in
  \emph{Advances in Lorentzian Geometry} (M.~Plaue, A.~D. Rendall and
  M.~Scherfner, eds.), vol.~49 of \emph{AMS/IP Studies in Advanced
  Mathematics}, p.~105.
\newblock American Mathematical Society, 2011.

\bibitem{Spenrose}
M.~S{\'a}nchez, \emph{Globally hyperbolic spacetimes: slicings, boundaries and
  counterexamples}, {\emph{General Relativity and Gravitation} {\bfseries 54}
  (2022) 124}.

\bibitem{AFS}
L.~Ak{\'e}, J.~L. Flores and M.~S{\'a}nchez, \emph{Structure of globally
  hyperbolic spacetimes with timelike boundary}, {\emph{Rev. Mat. Iberoam}
  {\bfseries 37} (2021) 45--94}.

\bibitem{Solis}
D.~A. Solis, \emph{Global properties of asymptotically de Sitter and Anti de
  Sitter spacetimes}.
\newblock University of Miami, 2006.

\bibitem{Ge}
R.~Geroch, \emph{Domain of dependence}, {\emph{Journal of Mathematical Physics}
  {\bfseries 11} (1970) 437--449}.

\bibitem{BS03}
A.~N. Bernal and M.~S{\'a}nchez, \emph{On smooth cauchy hypersurfaces and
  geroch's splitting theorem}, {\emph{Comm. Math. Phys.} {\bfseries 343} (2003)
  461--470}.

\bibitem{BS05}
A.~N. Bernal and M.~S{\'a}nchez, \emph{Smoothness of time functions and the
  metric splitting of globally hyperbolic spacetimes}, {\emph{Communications in
  mathematical physics} {\bfseries 257} (2005) 43--50}.

\bibitem{BS06}
A.~N. Bernal and M.~S{\'a}nchez, \emph{Further results on the smoothability of
  cauchy hypersurfaces and cauchy time functions}, {\emph{Letters in
  Mathematical Physics} {\bfseries 77} (2006) 183--197}.

\bibitem{Aetal}
L.~Andersson, G.~J. Galloway and R.~Howard, \emph{The cosmological time
  function}, {\emph{Classical and quantum gravity} {\bfseries 15} (1998)
  309--322}.

\bibitem{RZ}
S.~E. Rugh and H.~Zinkernagel, \emph{On the physical basis of cosmic time},
  {\emph{Studies in History and Philosophy of Science Part B: Studies in
  History and Philosophy of Modern Physics} {\bfseries 40} (2009) 1--19}.

\bibitem{Krasinski1981}
A.~Krasinski, \emph{{Space-times with spherically symmetric hypersurfaces}},
  \href{http://dx.doi.org/10.1007/BF00756363}{\emph{Gen. Rel. Grav.} {\bfseries
  13} (1981) 1021--1035}.

\bibitem{Krasinski1983}
A.~Krasinski, \emph{{On the Global Geometry of the Stephani Universe}},
  \href{http://dx.doi.org/10.1007/BF00759044}{\emph{Gen. Rel. Grav.} {\bfseries
  15} (1983) 673--689}.

\bibitem{Krasinski1997}
A.~Krasinski, \emph{{Inhomogeneous cosmological models}}.
\newblock Cambridge Univ. Press, Cambridge, UK, 3, 2011.

\bibitem{Dabrowski1991}
M.~P. {Dabrowski}, \emph{{Isometric embedding of the spherically symmetric
  Stephani universe: Some explicit examples.}},
  \href{http://dx.doi.org/10.1063/1.530166}{\emph{Journal of Mathematical
  Physics} {\bfseries 34} (Apr., 1993) 1447--1479}.

\bibitem{Cook1975}
M.~W. {Cook}, \emph{{On a class of exact spherically symmetric solution to the
  Einstein gravitational field equations}},
  \href{http://dx.doi.org/10.1071/PH750413}{\emph{Australian Journal of
  Physics} {\bfseries 28} (Aug., 1975) 413--422}.

\bibitem{Sussman1988}
R.~A. Sussman, \emph{On spherically symmetric shear‐free perfect fluid
  configurations (neutral and charged). iii. global view},
  \href{http://dx.doi.org/10.1063/1.527962}{\emph{Journal of Mathematical
  Physics} {\bfseries 29} (05, 1988) 1177--1211},
  [\href{https://arxiv.org/abs/https://pubs.aip.org/aip/jmp/article-pdf/29/5/1177/19096939/1177\_1\_online.pdf}{{\ttfamily
  https://pubs.aip.org/aip/jmp/article-pdf/29/5/1177/19096939/1177\_1\_online.pdf}}].

\bibitem{Lachieze-Rey1995}
M.~Lachieze-Rey and J.-P. Luminet, \emph{{Cosmic topology}},
  \href{http://dx.doi.org/10.1016/0370-1573(94)00085-H}{\emph{Phys. Rept.}
  {\bfseries 254} (1995) 135--214},
  [\href{https://arxiv.org/abs/gr-qc/9605010}{{\ttfamily gr-qc/9605010}}].

\bibitem{MarsVera2024}
M.~Mars and R.~Vera, \emph{New characterization of {R}obertson--{W}alker
  geometries involving a single timelike curve}, {\emph{Journal of Physics A:
  Mathematical and Theoretical} {\bfseries 57} (2024) 355402}.

\bibitem{MinSan}
E.~Minguzzi, M.~S{\'a}nchez et~al., \emph{The causal hierarchy of spacetimes},
  {\emph{Recent developments in pseudo-Riemannian geometry} {\bfseries 4}
  (2008) 299--358}.

\bibitem{Beem}
J.~K. Beem, \emph{Global Lorentzian geometry}.
\newblock Routledge, 2017.

\bibitem{FHS_ATMP}
J.~L. Flores, J.~Herrera and M.~S{\'a}nchez, \emph{On the final definition of
  the causal boundary and its relation with the conformal boundary},
  {\emph{Adv. Theor. Math. Phys.} {\bfseries 15} (2011) 991--1058}.

\bibitem{Sanchez1999}
M.~Sánchez, \emph{{On the geometry of generalized Robertson-Walker spacetimes:
  curvature and Killing fields}},
  \href{http://dx.doi.org/https://doi.org/10.1016/S0393-0440(98)00061-8}{\emph{Journal
  of Geometry and Physics} {\bfseries 31} (1999) 1--15}.

\bibitem{Stephani2003}
H.~Stephani, D.~Kramer, M.~A.~H. MacCallum, C.~Hoenselaers and E.~Herlt,
  \emph{{Exact solutions of Einstein's field equations}}.
\newblock Cambridge Monographs on Mathematical Physics. Cambridge Univ. Press,
  Cambridge, 2003,
  \href{http://dx.doi.org/10.1017/CBO9780511535185}{10.1017/CBO9780511535185}.

\bibitem{Clarkson1999}
C.~A. Clarkson and R.~Barrett, \emph{{Does the isotropy of the CMB imply a
  homogeneous universe? Some generalized EGS theorems}},
  \href{http://dx.doi.org/10.1088/0264-9381/16/12/302}{\emph{Class. Quant.
  Grav.} {\bfseries 16} (1999) 3781--3794},
  [\href{https://arxiv.org/abs/gr-qc/9906097}{{\ttfamily gr-qc/9906097}}].

\bibitem{Barnes1998}
A.~Barnes, \emph{Symmetries of the {S}tephani universes},
  \href{http://dx.doi.org/10.1088/0264-9381/15/10/012}{\emph{Classical and
  Quantum Gravity} {\bfseries 15} (oct, 1998) 3061}.

\bibitem{Hestenes}
H.~Li, D.~Hestenes and A.~Rockwood, \emph{A universal model for conformal
  geometries of euclidean, spherical and double-hyperbolic spaces},  in
  \emph{Geometric Computing with Clifford Algebras: Theoretical Foundations and
  Applications in Computer Vision and Robotics}, pp.~77--104.
\newblock Springer, 2001.

\bibitem{xAct}
J.~M. Martin-Garcia, A.~Garc{\'\i}a-Parrado, A.~Stecchina, B.~Wardell,
  C.~Pitrou, D.~Brizuela et~al., \emph{{xAct: Efficient tensor computer algebra
  for the Wolfram Language}}, {\emph{{\tt \url{http://www.xact.es}}} (latest
  version Oct. 2021) }.

\bibitem{Ellis1971}
G.~F.~R. Ellis, \emph{{Relativistic cosmology}},
  \href{http://dx.doi.org/10.1007/s10714-009-0760-7}{\emph{Proc. Int. Sch.
  Phys. Fermi} {\bfseries 47} (1971) 104--182}.

\end{thebibliography}\endgroup

\end{document}